\setlist[itemize]{noitemsep} \setlist[enumerate]{noitemsep}
\newtheorem{theorem}{Theorem}
\declaretheorem{example}
\renewcommand\thmcontinues[1]{continued}
\newtheorem{fact}{Fact}
\newtheorem{proposition}{Proposition}
\newtheorem{lemma}{Lemma}
\newtheorem{corollary}{Corollary}
\newtheorem{definition}{Definition}
\newtheorem{remark}{Remark}
\newcolumntype{C}[1]{>{\centering\let\newline\\\arraybackslash\hspace{0pt}}m{#1}}
\newcommand\pcase[1]{}
\newcommand{\Gexone}[1]{\tilde{G}_{\ref{ex:bringyourownfood}#1}}
\newcommand{\Gextwo}{\tilde{G}_{\ref{ex:telecommunication}}}
\newcommand{\Gexcont}{\tilde{G}_c}
\newcommand{\Gexrcgbarbyof}{\tilde{G}_{\ref{ex:rcgbar-bringyourownfood}}}
\newcommand{\Gexrcgbargf}{\tilde{G}_{\ref{ex:grandfather}}}
\newcommand{\var}[1]{\mathsf{#1}}
\renewcommand{\phi}{\varphi}
\newcommand{\Card}[1]{{\sf Card}(#1)}
\newcommand{\VAR}{{\sf var}}
\newcommand{\pref}{{\sf pref}}
\newcommand{\pct}{{\sf ct}}
\newcommand{\pca}{{\sf ca}}
\newcommand{\ppca}{{\sf pca}}
\newcommand{\Res}{{\ensuremath{Res}\xspace}}
\def\addto#1#2{\expandafter\def\expandafter#1\expandafter{#1#2}}
\def\replacestrings#1#2{%
   \def\tmp##1#1##2\end{\ifx\end##2\end\addto\tmpb{##1}\else\addto\tmpb{##1#2}\tmp##2\end\fi}%
   \expandafter\def\expandafter\tmpb\expandafter{\expandafter}\expandafter\tmp\tmpb#1\end 
}
\newcommand*\bigcdot{\mathpalette\bigcdot@{.5}}
\newcommand*\bigcdot@[2]{\mathbin{\vcenter{\hbox{\scalebox{#2}{$\m@th#1\bullet$}}}}}
\newcommand{\comb}{\bigcdot}
\newcommand{\emptybundle}{\lambda}
\newcommand{\transform}{\rightsquigarrow}
\def\bundle#1{\def\tmpb{#1}\replacestrings{,}{\comb}\tmpb}
\newcommand{\bag}[1]{{\{#1\}\xspace}}
\newcommand{\ch}{\mathsf{ch}}
\newcommand{\out}{\mathsf{out}}
\newcommand{\SAT}{{\sf 3\text{-}SAT}\xspace}
\newcommand{\NE}{{\sf NE}\xspace}
\newcommand{\ENE}{{\ensuremath{\sf \exists NE}}\xspace}
\newcommand{\EPD}{{\ensuremath{\sf \exists PD}}\xspace}
\newcommand{\ISAT}{{\ensuremath{\mathop{\mbox{\sf psat-prof}}}}\xspace}
\newcommand{\SATP}{{\ensuremath{\mathop{\mbox{\sf psat-play}}}}\xspace}
\newcommand{\GOOD}{{\ensuremath{\mathop{\mbox{\sf good-prof}}}}\xspace}
\newcommand{\CONT}{{\ensuremath{\mathop{\mbox{\sf cont-prof}}}}\xspace}
\newcommand\contenders{\mathsf{Contenders}}
\newcommand\miss{\mathsf{miss}}
\newcommand\deficit{\mathsf{deficit}}
\newcommand\contGoals{\mathsf{ContRes}}
\newcommand\Deficit{\mathsf{Deficit}}
\newcommand{\co}{{\ensuremath{\textsc{co}}}\xspace}
\newcommand{\PTIME}{{\ensuremath{\textsc{PTime}}}\xspace}
\newcommand{\NP}{{\ensuremath{\textsc{NP}}}\xspace}
\newcommand{\coNP}{{\ensuremath{\textsc{coNP}}}\xspace}
\newcommand\takeaway[1]{}
\title{Existence and Verification of Nash Equilibria in\\
  Non-Cooperative Contribution Games\\
  with Resource Contention}
\author{Nicolas Troquard} 
\affil{
Gran Sasso Science Institute\\
Viale F. Crispi, 7\\
67100 L'Aquila, Italy\\
  
\texttt{nicolas.troquard@gssi.it}\\
}
\date{}
\begin{document}

\maketitle

\begin{abstract}
  %
  In resource contribution games, a class of non-cooperative games, the players want to obtain a bundle of resources and are endowed with bags of bundles of resources that they can make available into a common for all to enjoy. Available resources can then be used towards their private goals.

  A player is potentially satisfied with a profile of contributed resources when his bundle could be extracted from the contributed resources.
Resource contention occurs when the players who are potentially satisfied, cannot actually all obtain their bundle.
  The player's preferences are always single-minded (they consider a profile good or they do not) and parsimonious (between two profiles that are equally good, they prefer the profile where they contribute less). 
  What makes a profile of contributed resources good for a player depends on their attitude towards resource contention.

We study the problem of deciding whether an outcome is a pure Nash equilibrium for three kinds of players' attitudes towards resource contention: public contention-aversity, private contention-aversity, and contention-tolerance.

  In particular, we demonstrate that in the general case when the players are contention-averse, then the problem is harder than when they are contention-tolerant. We then identify a natural class of games where, in presence of contention-averse preferences, it becomes tractable, and where there is always a Nash equilibrium.
\end{abstract}

\begin{keywords}
  non-cooperative games, resources, Nash equilibria, existence, complexity
\end{keywords}

\section{Introduction}

Most real-world agents in social and socio-technical environments consume resources: money,\linebreak amount of matter, etc. Resources can be transformed, split and recombined, and they constitute a driving force towards the end goal of obtaining resource goods.

\emph{Generalized exchanges}~\cite{ekeh74}
are social settings where the participants are providers and receivers of resources. They involve indirect reciprocity between participants.
%
When there exists no fixed structure and the participants essentially pool the resources to be used by everyone, the exchanges are called \emph{group-generalized}~\cite{yamagishiCook93}.
Social interactions of this sort are common-place: 
mutual help in a neighborhood, the redistribution of goods within a close-knit group, etc.

In~\cite{Tr16ijcai}, a model of interaction between self-interested agents is presented in which the agents are producers and consumers of resources.
When the amount of some wanted resources surpasses the amount of these resources that is produced, we are in presence of resource contention.
We thus take up where~\cite{Tr16ijcai} left off. We introduce
novel kinds of preferences that reflect the players' aversion towards resource contention
in a class of non-cooperative resource games.

The models and the algorithms presented here can be used as analytical tools at the disposition of actors and policy makers.
They can serve at gauging the possible strategic behaviours of the actors and of their competitors, and at identifying possible issues of resource scarcity in a common.
This paper does not however intend to propose possible mechanisms to achieve `good' behaviours in non-cooperative resource games.

\paragraph{Voluntary provisions to a common.}
Specifically, the models can be conceived as a common~\cite{Ostrom90} whose resources are voluntarily contributed by the players.
The players are endowed with a multiset of resource bundles, that cannot be used directly, but can be added into a common pool of resources. A possible action of a player consists in providing a submultiset of his endowment to the common pool.
Each player then has a resource bundle objective which he 
can try to attain by using the resources contributed by all the players.
Once in the common pool, the resources are thus \emph{common pool resources}: they are \emph{non-excludable} (every player can consume them) and \emph{rivalrous} (one player's consumption of a resource limits the access of it from other players).
We call these games, \emph{resource contribution games} (RCGs).

\medskip

The resources contributed to the common pool should be reminiscent of \emph{voluntary private provisions}~\cite{sugden84,MARKISAAC198551,BERGSTROM198625,Bagnoli91} of \emph{public goods} (e.g., public street lighting, public radio). 
However, public goods are \emph{non-rivalrous}, while in RCGs, the
private provisions are literally consumed towards an agent's private goal, leaving less resources for the other agents to consume.
The configuration of RCGs may resonate more closely with the setting adopted in the \emph{public good game}~\cite{MARWELL1981295}, a standard game of experimental economics. In the public good game, the participants are given a number of private tokens, and an action is to put some number (possibly zero) of these tokens in a public pot. The participants get to keep the tokens that they did not contribute. Then the tokens in the pot are multiplied by a factor and distributed evenly among the participants.
However, the goals of the participants of the public good game are only monetary and the redistribution of common pool resources is a simple division of goods by a supervising authority.

In RCGs, the goals of the players are arbitrary bundles of resources that the players can only satisfy by using the common pool resources without the supervised resource allocation of an authority. 

\begin{example}[label=ex:bringyourownfood]
  You participate in a bring-your-own-food cooking party. You bring a bottle of white wine. You cannot take it out and open it without sharing it, and if other guests also want a taste of it, or to cook with it, you might not have enough to your satisfaction. Once at the party you can strategically decide whether to contribute the bottle to the party or keep it away in your bag.
  Your bottle of wine contains four glasses, and your objective is to enjoy two glasses.
  Bruno and Carmen both want a portion of risotto, each portion needing a portion of rice, one onion, and a glass of white wine. Bruno brings three portions of rice, and Carmen brings two onions.

  What are the possible equilibria?
  \end{example}

Unsupervised resource-sensitive social interactions like the one in Example~\ref{ex:bringyourownfood} have attracted less attention in the literature than public good games.
Resource contribution games are not either elaborated models of interaction with particularly pleasing properties, especially in their most general form.
The situations they capture are not any less ubiquitous, at least in our mundane everyday life, and are well worth studying for what they are.
They are arguably rarer in the economic sector, but occur for instance in sharing economies~\cite{doi:10.1177/000276427802100411} or interconnected economies~\cite{brock1995interconnection,armstrong98interconnection}. 
The following example, taken from~\cite{10.1093/logcom/exaa031}, is a simple abstraction of the forces in presence in an interconnected market in the telecom industry.
\begin{example}[label=ex:telecommunication]
  In a local telecom industry, anti-trust laws forbid a priori cooperation, and regulations oblige the companies to accept traffic from each other.
  Consider two competing telecommunication companies. 
  Company~$A$ manages a 3G network of comprised capacity~$4$ (bundled as capacities~$1$, $1$ and $2$). Company~$B$ manages a 3G network of capacity~$1$, and a 4G network of comprised capacity~$3$ (bundled as capacities~$1$ and $2$).
  Company~$A$ needs to offer its customers 3G at capacity~$2$ and 4G at capacity~$1$. Company~$B$ needs to offer its customers 3G at capacity $2$ and 4G at capacity~$2$.

  Companies have preferences along these lines: (1)~Activating a network at some capacity has a cost; Contributing less is better.
  (2)~Serving under-capacitated network to customers may yield to various technical and economic failures; having access to less than what they need is not acceptable.
  (3)~Con\-tention-averse companies will not accept using oversubscribed networks.
  %

  What are the possible equilibria?
\end{example}

We will formalize Example~\ref{ex:bringyourownfood} and Example~\ref{ex:telecommunication} later and use them to illustrate the notions introduced in this paper.

\paragraph{The need for contention-averse preferences.}
In~\cite{Tr16ijcai}, a model of non-cooperative games is proposed for representing situations such as the one in Example~\ref{ex:bringyourownfood}. 
There are two Nash equilibria. In both, you share the bottle of wine. In the first one, Bruno and Carmen do not share anything: only you are satisfied, but Bruno (resp.\ Carmen) cannot cook the risotto without Carmen's onions (resp.\ Bruno's rice), and thus has no individual incentive to contribute.

In the second one, Bruno contributes one portion of rice, and Carmen contributes one onion. They can now cook one portion of risotto. They are both `satisfied', and 
they have no incentive to contribute more, that would be necessary to cook a second portion of risotto. They are both `satisfied' by the fact that a portion of risotto can be cooked, and they are not concerned with an inevitable contention about who is actually going to eat the sole portion of risotto.

To avoid this second often unsatisfactory solution, we add contention-aversity to the parsimonious preferences. In this example, it will yield two Nash equilibria. The first one is as before. In the second one, you still contribute the wine, Bruno contributes two portions of rice and Carmen contributes two onions; enough to cook two portions of risotto and for you to enjoy enough of your bottle of wine.

\begin{example}[continues=ex:bringyourownfood]
Suppose Edward also comes to the party without bringing anything, and wants to enjoy three glasses of white wine.
\end{example}
With the original so-called \emph{contention tolerant} preferences, the profile where you contribute the bottle of wine, Bruno contributes one portion of rice, Carmen contributes one onion, and Edward contributes nothing is a Nash equilibrium, a very contentious one.
With the new so-called \emph{contention-averse} preferences introduced in this paper, the only Nash equilibrium is when no-one contributes anything.

\paragraph{Attitudes towards resources and preferences.}

When a player can obtain his objective from the \linebreak pool of contributed resources, we will say that he is \emph{potentially satisfied}.
An action profile providing enough resources to satisfy potentially a set of players but not enough to satisfy them all at once is a \emph{contentious profile}. \emph{Contention-tolerant} players do not worry about contentious profiles, and find them \emph{good}, as long as they can be potentially satisfied. \emph{Contention-averse} players will not consider a contentious profile as good, even if they can be potentially satisfied. We will call these players \emph{public contention-averse}.
A contention-averse attitude is specifically pertinent in close-knit groups where members, although non-cooperative, are community-conscious and suffer from any conflict over resources. A bring-your-own food party is a real-world example where contention about limited resources affects all the participants, and must be avoided at all costs. (Like, e.g., in computing, resource-contention leads to thrashing that affects all processes. In geopolitics, three neighboring countries with interconnected economies would prefer obtaining the goods they need and avoid resource contention, which could possibly lead to conflict.)
In less interdependent societies, a resource contention in a profile may not concern a player directly. It happens when the contention is about a resource that is independent of the player's objective. \emph{Private contention-averse} players that are potentially satisfied in such a profile will consider it good. Of course, they will not consider good a profile in which the contention is about a resource that plays a part in their objective.

The players are \emph{single-minded} (as, e.g., in~\cite{aziz19,DBLP:conf/icalp/KeijzerKV20}): either they find a profile good, or they do not. A player will prefer a good profile from a `not good' profile.

In addition to this, players are \emph{parsimonious}. They prefer a profile in which they contribute strictly less (for multiset inclusion) from another profile, if they find the profile otherwise equally good. Analogous preferences have sometimes been called \emph{pseudo-dichotomous}. It has been used before in the literature, e.g., \cite{WOOLDRIDGE2013418,DBLP:journals/jair/GrandiGT19}.
Pseudo-dichotomous preferences are a simple tool to add nuance to how the players regard the outcomes. In the context of resource contributions, they are a way to disincentivize waste.
The best outcome for a player is to be in a profile that he considers good and in which he does not contribute anything. A worst outcome is to be in a profile that he does not consider good and in which he contributes all of his endowment.

To summarize, the players are single-mindedly pursuing profiles that are good for them and they have pseudo-dichotomous preferences. To be good for a player, a profile must potentially satisfy the player, but it also depends on the player's exact attitude towards contention.

\paragraph{Outline.}
Our main conceptual contribution is the definition of the notion of resource contention in RCGs, and the definition of contention-averse preferences. Our main technical contributions will be on the computational aspects of Nash equilibria in this setting.

The class of resource contribution games (RCGs) is introduced in Section~\ref{sec:irg-dp}. We define resource-contention and three kinds of preferences, namely, public and private contention-averse preferences and contention-tolerant preferences. We also define the notion of pure Nash equilibrium.
We focus on the problem of deciding whether a profile is a (pure) Nash equilibrium (\NE), and also address interesting cases of the problem of deciding whether a game admits a (pure) Nash equilibrium (\ENE).
In Section~\ref{sec:case-contention-aversity}, we show that the problem \NE is $\coNP$-complete in presence of public and private contention-averse preferences.
In Section~\ref{sec:case-contention-tolerance}, we show that the problem is in $\PTIME$ in presence of contention-tolerant preferences. We also show that the problem of deciding whether a Nash equilibrium exists in a game with contention-tolerant preferences is $\NP$-complete.
In Section~\ref{sec:irgbar}, we focus on a class of games where the endowments are bags of atomic resources (RCGBARs). We show that the problem of deciding whether a profile is a Nash equilibrium is in \PTIME. We also show that in this class of games, if in presence of public contention-averse preferences, a Nash equilibrium always exists, and finding one can be done in polynomial time. This sets public contention-averse preferences apart, as a Nash equilibrium needs not exist in RCGBARs in presence of either contention-tolerant preferences or private contention-averse preferences.
In Section~\ref{sec:case-other-variants} we investigate some more tractable variants of the classes of RCGs.
%
Concluding remarks are offered in Section~\ref{sec:conclusions}, where we discuss related work, open problems, and future work.

We illustrate the use of an implementation of our models and algorithms in the Appendix. 
We provide detailed software executions of Example~\ref{ex:bringyourownfood}, and Example~\ref{ex:telecommunication}, and repeat a few proofs of the paper.

\section{Preliminaries}
\label{sec:irg-dp}

We present the main definitions and some basic results.

\subsection{Multiset notations}

We use $\uplus$ and $\cup$ for multiset disjoint union (sum) and set union respectively. We use $\cap$ for set or multiset intersection. We use $\subseteq$ for set or multiset containment. We use $\setminus$ for set or multiset difference. We use $\subseteq$ for set or multiset inclusion. We use $\in$ for set or multiset membership.

Let $U$ be a set of multiset elements,
and let $X$ and $Y$ be two multisets over a universe $U$.
Let $m(X, x)$ be the \emph{multiplicity} of $x$ in $X$, that is, the number of times $x$ occurs in $X$.
$X$ is the empty multiset when $m(X, x) = 0$ for every $x \in U$.
$X \uplus Y$ is the multiset such that $m(X\uplus Y, x) = m(X, x) + m(Y, x)$, for every $x \in U$.
$X \cap Y$ is the multiset such that $m(X\cap Y, x) = \min(m(X, x), m(Y, x))$, for every $x \in U$.
$X \setminus Y$ is the multiset such that $m(X\setminus Y, x) = \max(0, m(X, x) - m(Y, x))$, for every $x \in U$.
We have $X \subseteq Y$ iff $m(X,x) \geq m(Y,x)$.
We have $x \in X$ iff $m(X, x) > 0$.

\subsection{Bundles, bags, resource transformations, and notation}

We assume a resource domain, which is a set $\Res$ of atomic (indivisible) resource types.
Bundling is a common feature in economic transactions and exchanges. They can be homogeneous (e.g., a box of 12 cookies), or heterogeneous (a 3-DVD box set of a trilogy). Players may be inclined to provide bundles instead of individual items for various reasons (e.g., wholesale vs.~retail sales pricing, cargo shipping).
\medskip

In this paper, resource bundles are multisets over the universe $U = \Res$, and resource bags are multisets over the universe $U$ being the set of resource bundles.

A \emph{resource bundle} is a 
multiset of atomic resources of any type in $\Res$.
For example, let $\var{H}$ and $\var{O}$ be two atomic resource types in $\Res$. The object $\bundle{\var{H},\var{H},\var{O}}$ is the resource bundle containing two instances of the atomic resource $\var{H}$ and one instance of the atomic resource $\var{O}$.
The empty resource bundle containing no instance of any resource type is denoted by $\emptybundle$.
The singleton resource bundle containing only one instance of one resource type $A \in \Res$ is simply denoted by $\bundle{A}$.
Given two resource bundles $Bundle_1 = \bundle{A_1, \ldots, A_k}$ and $Bundle_2 = \bundle{A_{k+1}, \ldots, A_l}$, we denote by $Bundle_1 \comb Bundle_2$ the resource bundle $\bundle{A_1, \ldots, A_l}$.
Let $Bundle = \bundle{A_1, \ldots, A_k}$ be a resource bundle. We define the multiset of atomic resource instances in $Bundle$ as $\flat(Bundle) = \{A_1, \ldots, A_k\}$.

\medskip
A \emph{resource bag} is a multiset of resource bundles.
For example, $\bag{ \bundle{\var{O},\var{O}}, \bundle{\var{H},\var{H},\var{O}}, \bundle{\var{H},\var{H},\var{O}}, \var{C}}$ is the resource bag containing one resource bundle $\bundle{\var{O},\var{O}}$, two resource bundles $\bundle{\var{H},\var{H},\var{O}}$, and one singleton resource bundle $\bundle{\var{C}}$.
Let $Bag = \{B_1, \ldots, B_k\}$ be a resource bag. We define the multiset of atomic resource instances in $Bag$ as $\flat^\bullet(Bag) = \biguplus_{1 \leq i \leq k} \flat(B_i)$.

\medskip
We say that the resource bag $Bag$ \emph{can be transformed into} the resource bundle $Bundle$, noted \[Bag \transform Bundle \enspace ,\] when $\flat(Bundle) \subseteq \flat^\bullet(Bag)$. We simply write $Bag \not\transform Bundle$ when $Bag$ cannot be transformed into $Bundle$.

By definition, resources can be disposed freely during transformation.
%
For example, $\{\var{H} \comb \var{H}, \var{O} \comb \var{O}\} \transform \var{H} \comb \var{H} \comb \var{O}$.

\medskip

We use ${Bundle_1}_{\mid Bundle_2}$ to represent the `sub-bundle' of $Bundle_1$ restricted to resources present in the bundle $Bundle_2$. E.g., ${(A \comb A \comb B)}_{\mid A \comb C} = A \comb A$; or $(A \comb B)_{\mid C} = \emptybundle$ (that is, the empty resource bundle). Formally, ${Bundle_1}_{\mid Bundle_2}$ is the resource bundle
such that
\[
m(\flat({Bundle_1}_{\mid Bundle_2}), x) =
\begin{cases}
m(\flat({Bundle_1}), x) & \text{when } x \in \flat({Bundle_2})\\
0 & \text{otherwise} \enspace.
\end{cases}
\]

\subsection{Resource contribution games}

We formally define our models of resource contribution games.
\begin{definition}
A \emph{resource contribution game} (RCG) is a tuple $G = (N,\gamma_1, \ldots, \gamma_n, \epsilon_1, \ldots, \epsilon_n)$ where:
\begin{itemize}[leftmargin=*]
\item $N = \{1, \ldots, n\}$ is a finite set of players;
\item $\gamma_i$ is a resource bundle ($i$'s goal, or objective);
\item $\epsilon_i$ is a finite resource bag ($i$'s endowment).
\end{itemize}
\end{definition}
We define the set of possible actions, or choices, of $i$ as the set of multisets (resource bags) $\ch_i(G) = \{C \mid C \subseteq \epsilon_i \}$, and the set of \emph{profiles} in $G$ as
$\ch(G) = \prod_{i \in N} \ch_i(G)$.
\begin{example}
  Say $A$ and $B$ are atomic resources, we can represent the complex resource composed of one $A$ and one $B$ as the resource bundle $A \comb B$.
  We can use $\gamma_i = A \comb B$ to express that player~$i$ wants $A$ and $B$ simultaneously.
  The endowment $\epsilon_i = \{B, B, A \comb B\}$ is different from the endowment $\{A, B, B, B\}$. With the former, player~$i$ cannot contribute an $A$ without also contributing a $B$.
\end{example}
When $P= (C_1, \ldots , C_n) \in \ch(G)$ and $1 \leq i \leq k$, then $P_i = C_i$, and $P_{-i} = (C_1, \ldots , C_{i-1}, C_{i+1}, \ldots, C_n)$. That is, $P_{-i}$ denotes $P$ without player $i$'s contribution $P_i$. When $P'_i \in \ch_i(G)$ and $P \in \ch(G)$, we note $(P_{-i}, P'_i)$ the profile obtained by substituting the choice of player~$i$ in the profile $P$ with his choice $P'_i$. The \emph{outcome} of a profile $P = (C_1, \ldots , C_n)$ is given by the resource bag \[\out(P) = \biguplus_{1\leq i \leq n} C_i \enspace .\]

We show how our examples in the introduction can be modelled as RCGs, and we illustrate the notions introduced so far.
\begin{example}[continues=ex:bringyourownfood]
A glass of wine is represented by $\var{w}$, one portion of rice is represented by $\var{r}$, and an onion is represented by $\var{o}$.
The bottle of wine is represented as 
the bundle $\bundle{\var{w}, \var{w}, \var{w}, \var{w}}$ of four glasses of wine. A portion of risotto is represented as the bundle $\bundle{\var{w}, \var{r}, \var{o}}$ of a glass of wine, one portion of rice, and one onion.

You are represented by player~$y$, Bruno is represented by player~$b$, and Carmen is represented by player~$c$.

As regards the first part of the example, with you, Bruno, and Carmen, the game is formally defined as:
$\Gexone{.1} = (\{y, b, c\},\gamma_y = \bundle{\var{w}, \var{w}}, \gamma_b = \bundle{\var{w}, \var{r}, \var{o}}, \gamma_c = \bundle{\var{w}, \var{r}, \var{o}}, \epsilon_y = \{\bundle{\var{w}, \var{w}, \var{w}, \var{w}} \}, \epsilon_b = \{\var{r}, \var{r}, \var{r}\}, \epsilon_c = \{\var{o}, \var{o}\})$.

If Edward joins the party, as in the second part of the example, then
his objective is $\gamma_e = \bundle{\var{w}, \var{w}, \var{w}}$
and his endowment is $\epsilon_e = \emptyset$.
We obtain the RCG $\Gexone{.2}$.
\end{example}

\begin{example}[continues=ex:telecommunication]
The symbols $\var{3G}$ and $\var{4G}$ are atomic resource types,
and $a$ and $b$ are the two telecom companies.

The scenario of Example~\ref{ex:telecommunication} can be formalized
as the RCG $\Gextwo = (\{a,b\},\gamma_a, \gamma_b, \epsilon_a, \epsilon_b)$, where $\gamma_a = \bundle{\var{3G}, \var{3G}, \var{4G}}$ and $\gamma_b = \bundle{\var{3G}, \var{3G}, \var{4G}, \var{4G}}$ are their respective goals, and $\epsilon_a = \{\var{3G},\var{3G},\var{3G}\comb \var{3G}\}$ and $\epsilon_b = \{\var{3G}, \var{4G}, \var{4G}\comb \var{4G}\}$ are their respective endowments.

The choices of player~$a$ are $\emptyset$, $\{\var{3G} \}$, $\{\var{3G}\comb \var{3G} \}$, $\{\var{3G}, \var{3G} \}$, $\{\var{3G}, \var{3G}\comb \var{3G} \}$, and $\{\var{3G}, \var{3G}, \var{3G}\comb \var{3G} \}$.

The multiset $\{ \var{3G}, \var{4G} \}$ is one of the choices of player~$b$. The tuple $P = (\{\var{3G}\comb \var{3G} \}, \{ \var{3G}, \var{4G} \})$ is a profile of $G$. The outcome of $P$ is the bag $\out(P) = \{ \var{3G}\comb\var{3G}, \var{3G}, \var{4G}\}$. Its decomposition into a bag of atomic resources is $\flat^\bullet(\out(P)) = \{ \var{3G}, \var{3G}, \var{3G}, \var{4G}\}$.
\end{example}

\begin{remark}
In RCGs, players are forced to contribute resources before using them. In the terminology of public good games~\cite{MARWELL1981295}, we can say that there is no \emph{private exchange}, and only a \emph{group exchange}.
This is only a limitation of RCGs as games for experimental economics. Just like in public good games the participants have no reason to invest on the group exchange at all, in RCGs with a private exchange, the players would have no reason to contribute resources useful to them on the group exchange.
If the modeller prefers to have reserved resources that a player can use privately towards his goal, it suffices to pre-process the game as follows. If a player wants $A \comb R_1 \comb \ldots \comb R_k$ and is endowed with $\bag{A, R'_1 \ldots, R'_m}$, remove an occurrence of $A$ from his goal and an occurrence from his endowment.
\end{remark}

\subsection{Player satisfaction and resource contention}
Let $G$ be the RCG $(N,\gamma_1, \ldots, \gamma_n, \epsilon_1, \ldots, \epsilon_n)$, and let $P$ be a profile in $\ch(G)$. We say player~$i$ is \emph{potentially satisfied} by the profile $P$ when the resources available in $P$ can be transformed into the goal $\gamma_i$, possibly with some left-over resources, that is $\out(P) \transform \gamma_i$.
For every RCG $G$ and player~$i$, we define $\ISAT(G,i)$ to be the set of profiles in $G$ by which player~$i$ is potentially satisfied. Formally:
\[
\ISAT(G,i) = \{ P \in \ch(G) \mid \out(P) \transform \gamma_i \} \enspace .
\]
For convenience, we also write $\SATP(G,P)$ to denote the set of players potentially satisfied in the profile $P \in \ch(G)$ of the game $G$, that is:
\[
\SATP(G,P) = \{ i \in N \mid \out(P) \transform \gamma_i \} \enspace .
\]

Player~$i$ is potentially satisfied in a profile $P$ when the resources in $\out(P)$ can be transformed into the resource bundle $\gamma_i$. But it might be that some resources are oversubscribed.
This yields resource contention.
The profile $P$ is contentious when the resource bag $\out(P)$ cannot be transformed into the resource bundle made up of all the objectives of the otherwise potentially satisfied players.
We write $\CONT(G)$ to denote the set of contentious profiles in $G$. Formally:
\[
\CONT(G) = \left\{ P \in \ch(G) \;\middle|\; \out(P) \not\transform \gamma_{s_1} \comb \ldots \comb \gamma_{s_k} \text{ where } \{s_1, \ldots, s_k\} = \SATP(G,P) \right\} \enspace .
\]

\begin{example}[continues=ex:telecommunication]
In the RCG $\Gextwo$ of Example~\ref{ex:telecommunication},
the profile 
$(\{\var{3G}, \var{3G}\}, \{\var{4G} \comb \var{4G}\})$ is in $\ISAT(\Gextwo,a)$ and $\ISAT(\Gextwo,b)$, and it is also in $\CONT(\Gextwo)$. Indeed, both players are potentially satisfied in $(\{\var{3G}, \var{3G}\}, \{\var{4G} \comb \var{4G}\})$, but the resources in the profile are then oversubscribed.
\end{example}

A distinction can now be made between \emph{public} and \emph{private} resource contention.
That is, the players might be averse to resource contentions that affect them, but might be indifferent to resource contentions about resources they personally have no claim about.
\begin{example}[label=ex:contention]
Consider the RCG $\Gexcont$ with $N = \{1,2,3\}$, $\epsilon_1 = \{A\comb B\}$, $\gamma_1 = A$, $\epsilon_2 = \epsilon_3 = \emptyset$, and $\gamma_2 = \gamma_3 = B$.
Although the profile $(\{A \comb B\}, \emptyset, \emptyset)$ is contentious, player~$1$ should not have to worry about it, because the contention is about the resource $B$ being disputed between player~$2$ and player~$3$. Player~$1$ should find a profile containing $A$ perfectly good.
\end{example}
We thus define $\CONT(G,i)$, the set of profiles that are contentious for player~$i$.
\begin{multline*}
\CONT(G,i) = \left\{ P \in \ch(G) \;\middle|\; P \in \ISAT(G,i) \text{ and } \out(P) \not\transform
{\gamma_{s_1}}_{\mid \gamma_i} \comb \ldots \comb {\gamma_{s_k}}_{\mid \gamma_i}\right.
\\\left. \text{ where } \{s_1, \ldots, s_k\} = \SATP(G,P) \right\} \enspace .
\end{multline*}
\begin{example}[continues=ex:contention]
The profile $(\{A \comb B\}, \emptyset, \emptyset)$ is in $\ISAT(\Gexcont,1)$, $\ISAT(\Gexcont, 2)$, and\linebreak $\ISAT(\Gexcont, 3)$. It is contentious: in $\CONT(\Gexcont)$, because $\out(P)\not\transform A \comb B \comb B$.
But it is not contentious for player~$1$: not in $\CONT(\Gexcont, 1)$, because
${\gamma_1}_{\mid \gamma_1} \comb {\gamma_2}_{\mid \gamma_1} \comb {\gamma_3}_{\mid \gamma_1} = A_{\mid A} \comb B_{\mid A} \comb B_{\mid A} = A \comb \emptybundle \comb \emptybundle$, and $\out(P)\transform A$.
\end{example}

To help with intuition, we already provide a few basic properties about resource contention.
\begin{fact}\label{fact:basic-properties}
  The following statements hold:
  \begin{enumerate}[leftmargin=*]
  \item \label{fact:trivial-uncontentious} The profile $(\emptyset, \ldots, \emptyset)$ is never contentious.
  \item \label{fact:cont-ind-cont-union} $\CONT(G) = 
    \bigcup_{i \in N} \CONT(G,i)$.
  \item \label{fact:private-cont-at-least-two} If $P \in \CONT(G,i)$ then $\exists j \not = i$, $P \in \CONT(G,j)$.
  \end{enumerate}
\end{fact}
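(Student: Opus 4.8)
The plan is to reduce all three items to elementary multiset arithmetic, unwinding the definition of $\transform$ — so that $Bag \transform Bundle$ fails exactly when some atomic $x$ has $m(\flat(Bundle),x) > m(\flat^\bullet(Bag),x)$ — together with the defining case split for the restriction ${\gamma}_{\mid\delta}$. Two of the three inclusions require exhibiting a player, and for those I would use a single device: fix a \emph{witness resource}, i.e.\ an atomic $x$ that is oversubscribed in the pertinent combination of goals, and trace which players' goals $x$ can occur in.

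For item~\ref{fact:trivial-uncontentious}, take $P=(\emptyset,\ldots,\emptyset)$, so $\out(P)=\emptyset$ and $\flat^\bullet(\out(P))$ is the empty multiset. Any $i \in \SATP(G,P)$ satisfies $\emptyset \transform \gamma_i$, which forces $\gamma_i = \emptybundle$; hence the combined goal $\gamma_{s_1}\comb\ldots\comb\gamma_{s_k}$ of the potentially satisfied players is $\emptybundle$ (this also covers the degenerate case $\SATP(G,P)=\emptyset$), and $\emptyset \transform \emptybundle$ holds trivially, so $P\notin\CONT(G)$.

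For item~\ref{fact:cont-ind-cont-union}, the inclusion $\bigcup_{i\in N}\CONT(G,i)\subseteq\CONT(G)$ I would derive from monotonicity of the restriction, namely $\flat({\gamma}_{\mid\gamma_i})\subseteq\flat(\gamma)$ for every $\gamma$ (immediate from the case split), whence $\flat\bigl({\gamma_{s_1}}_{\mid\gamma_i}\comb\ldots\comb{\gamma_{s_k}}_{\mid\gamma_i}\bigr)\subseteq\flat(\gamma_{s_1}\comb\ldots\comb\gamma_{s_k})$: if $\out(P)$ could be transformed into the larger bundle it could be transformed into the smaller (transitivity of $\subseteq$), contradicting $P\in\CONT(G,i)$. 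For the converse, given $P\in\CONT(G)$ with $\SATP(G,P)=\{s_1,\ldots,s_k\}$, pick a witness $x$ with $\sum_l m(\flat(\gamma_{s_l}),x) = m(\flat(\gamma_{s_1}\comb\ldots\comb\gamma_{s_k}),x) > m(\flat^\bullet(\out(P)),x)$; this sum is positive, so $x\in\flat(\gamma_{s_l})$ for some $l$, and setting $i:=s_l$ we get $m(\flat({\gamma_{s_t}}_{\mid\gamma_i}),x)=m(\flat(\gamma_{s_t}),x)$ for every $t$, so the same inequality witnesses $\out(P)\not\transform{\gamma_{s_1}}_{\mid\gamma_i}\comb\ldots\comb{\gamma_{s_k}}_{\mid\gamma_i}$; with $i\in\SATP(G,P)$ this yields $P\in\CONT(G,i)$.

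For item~\ref{fact:private-cont-at-least-two}, let $P\in\CONT(G,i)$ and pick a witness $x$ with $m(\flat({\gamma_{s_1}}_{\mid\gamma_i}\comb\ldots\comb{\gamma_{s_k}}_{\mid\gamma_i}),x) > m(\flat^\bullet(\out(P)),x)$; since an empty combination would give $0$, we must have $x\in\flat(\gamma_i)$, and the left-hand side then equals $\sum_l m(\flat(\gamma_{s_l}),x)$. The one genuinely non-routine step — the part I expect to be the main obstacle — is to show that some $s_l\neq i$ also has $x\in\flat(\gamma_{s_l})$: otherwise that sum collapses to $m(\flat(\gamma_i),x)$, forcing $m(\flat(\gamma_i),x) > m(\flat^\bullet(\out(P)),x)$, which directly contradicts $\out(P)\transform\gamma_i$ (true because $i\in\SATP(G,P)$). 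Fixing such a $j:=s_l\neq i$, the reasoning of item~\ref{fact:cont-ind-cont-union} applies verbatim: $m(\flat({\gamma_{s_t}}_{\mid\gamma_j}),x)=m(\flat(\gamma_{s_t}),x)$ for all $t$, so $\out(P)\not\transform{\gamma_{s_1}}_{\mid\gamma_j}\comb\ldots\comb{\gamma_{s_k}}_{\mid\gamma_j}$, and since $j\in\SATP(G,P)$ we conclude $P\in\CONT(G,j)$ with $j\neq i$, as required.
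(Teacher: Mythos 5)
Your proof is correct. The paper states this as a \emph{Fact} and offers only an informal gloss rather than a proof, so there is no official argument to compare against; your unwinding of $\transform$ into multiplicity inequalities, and in particular the witness-resource device for items~2 and~3, is exactly the reasoning the definitions invite. The one step you flag as non-routine — in item~3, ruling out that only $\gamma_i$ contains the oversubscribed resource $x$ by invoking $\out(P)\transform\gamma_i$ (which holds because $P\in\CONT(G,i)$ requires $P\in\ISAT(G,i)$) — is handled correctly, and the degenerate cases (empty $\SATP(G,P)$, empty restricted combination) are covered. No gaps.
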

The profile whose outcome is the empty set of resources is not contentious. 
The set of contentious profiles is the union of the profiles that are contentious for a player. If a profile is contentious for a player, then there is another distinct player for whom the profile is also contentious.

Players are surplus-tolerant. If a player is potentially satisfied in a profile, it will also be potentially satisfied in a profile resulting from any player adding more resources.
\begin{fact}\label{fact:surplus-tolerant-monotony}
If $(P_{-j}, P_j) \in \ISAT(G,i)$ and $P_j \subseteq Q_j$ then $(P_{-j}, Q_j) \in \ISAT(G,i)$.
\end{fact}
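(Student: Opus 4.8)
The plan is to unfold the definitions and chain together three elementary monotonicity observations: disjoint union $\uplus$ is monotone in each argument for multiset containment, the flattening map $\flat^\bullet$ is monotone for bag containment, and whether $Bag \transform \gamma_i$ holds depends only on the atomic content $\flat^\bullet(Bag)$ of the left-hand bag. Throughout I read $P_j \subseteq Q_j$ in the usual way, namely $m(P_j, B) \le m(Q_j, B)$ for every bundle $B$ (so $Q_j$ is the one carrying the surplus), and I take for granted the implicit hypothesis $Q_j \in \ch_j(G)$, which is what makes $(P_{-j}, Q_j)$ a profile so that the conclusion $(P_{-j}, Q_j) \in \ISAT(G,i)$ is even well-typed.

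First I would check that $\out(P_{-j}, P_j) \subseteq \out(P_{-j}, Q_j)$ as multisets of bundles. Writing $S = \biguplus_{k \neq j} P_k$, we have $\out(P_{-j}, P_j) = S \uplus P_j$ and $\out(P_{-j}, Q_j) = S \uplus Q_j$; and for every bundle $B$, $m(S \uplus P_j, B) = m(S, B) + m(P_j, B) \le m(S, B) + m(Q_j, B) = m(S \uplus Q_j, B)$, using $P_j \subseteq Q_j$. Next I would observe that $\flat^\bullet$ is monotone for bag containment: if $Bag_1 \subseteq Bag_2$ then $\flat^\bullet(Bag_1) \subseteq \flat^\bullet(Bag_2)$ as multisets of atomic resources. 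This is immediate from $\flat^\bullet(Bag) = \biguplus_{B \in Bag} \flat(B)$ — a larger bag contributes all the summands the smaller one does, with at least the same multiplicities, and $\uplus$ is monotone again. Applying this to the containment just obtained gives $\flat^\bullet(\out(P_{-j}, P_j)) \subseteq \flat^\bullet(\out(P_{-j}, Q_j))$.

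Finally, $(P_{-j}, P_j) \in \ISAT(G,i)$ unfolds to $\out(P_{-j}, P_j) \transform \gamma_i$, i.e. $\flat(\gamma_i) \subseteq \flat^\bullet(\out(P_{-j}, P_j))$; combining this with the previous inclusion and transitivity of multiset containment yields $\flat(\gamma_i) \subseteq \flat^\bullet(\out(P_{-j}, Q_j))$, that is $\out(P_{-j}, Q_j) \transform \gamma_i$, which is exactly $(P_{-j}, Q_j) \in \ISAT(G,i)$.

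There is no genuine obstacle here — the statement is a pure monotonicity fact — so the real work is purely bookkeeping: keeping the two levels of the multiset hierarchy apart (bags of bundles versus multisets of atoms) and invoking monotonicity of $\flat^\bullet$ exactly once to pass between them, and being careful that $\subseteq$ is used with its intended "sub-multiset" reading (the one consistent with $\ch_i(G) = \{C \mid C \subseteq \epsilon_i\}$) so that $Q_j$ is the resource-richer choice.
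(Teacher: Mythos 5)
Your proof is correct, and it supplies exactly the justification the paper leaves implicit: Fact~\ref{fact:surplus-tolerant-monotony} is stated without proof, with only the informal gloss that players are ``surplus-tolerant.'' Your chain --- monotonicity of $\uplus$, monotonicity of $\flat^\bullet$, and the definition of $\transform$ as $\flat(\gamma_i) \subseteq \flat^\bullet(\cdot)$ --- is the evident intended argument, and you were right to read $P_j \subseteq Q_j$ in the standard sub-multiset sense (the paper's displayed clause ``$X \subseteq Y$ iff $m(X,x) \geq m(Y,x)$'' has the inequality reversed relative to every use of $\subseteq$ elsewhere in the text, as your appeal to $\ch_i(G) = \{C \mid C \subseteq \epsilon_i\}$ correctly diagnoses).
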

In contrast, resource contention is a non-monotonic property of a profile. From a contentious profile one could add resources and obtain a non-contentious profile. Also, from a contentious profile, one could remove resources and obtain a non-contentious profile. The following example illustrates that.
\begin{example}
Let $G = (N, \gamma_1, \gamma_2, \epsilon_1, \epsilon_2)$  be the RCG with $N = \{1,2\}$, $\gamma_1 = \gamma_2 = A$, $\epsilon_1 = \{A,A\}$, and $\epsilon_2 = \emptyset$. It is the case that $(\{A\},\emptyset)$ is a contentious profile. However, $(\emptyset,\emptyset)$, and $(\{A,A\},\emptyset)$ are not. We have $\CONT(G) = \{(\{A\},\emptyset)\}$.
\end{example}

\subsection{Preferences}

We will consider three \emph{kinds of preference}: contention-tolerant preferences ($\pct$), (public) contention-averse preferences ($\pca$), private contention-averse preferences ($\ppca$).
To each kind of preferences corresponds a notion of what constitutes a `good' profile for a player.
%
\begin{definition}\label{def:good}
  A profile $P$ is \emph{good} for player~$i$ according to $\pref$ if:
  \begin{description}
  \item[($\pref = \pct$)] $P \in \ISAT(G,i)$.
  \item[($\pref = \pca$)] $P \in \ISAT(G,i)$ and $P \not \in \CONT(G)$.
  \item[($\pref = \ppca$)] $P \in \ISAT(G,i)$ and $P \not \in \CONT(G,i)$.
  \end{description}
\end{definition}
That is, with contention-tolerant preferences, a profile is good for a player whenever the profile potentially satisfies him. With (public) contention-averse preferences, a profile is good for a player when the profile potentially satisfies him, and the profile is not contentious. With private contention-averse preferences, a profile is good for a player when the profile potentially satisfies him, and the profile is not contentious for him.
We note $\GOOD^{\pref}(G,i)$ the set of profiles in $G$ that are good for player~$i$ according to $\pref \in \{\pct, \pca, \ppca\}$.
The following fact is a straightforward consequence of the definitions.
\begin{fact}\label{fact:pcasuppcabsubpct}
  Given a game $G$ and a player~$i$, we have $\GOOD^\pca(G,i) \subseteq \GOOD^\ppca(G,i)\linebreak \subseteq \GOOD^\pct(G,i)$.
\end{fact}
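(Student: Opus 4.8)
The plan is to unfold the three definitions of ``good'' from Definition~\ref{def:good} and reduce the chain of inclusions to the two set-theoretic facts about contention already recorded in Fact~\ref{fact:basic-properties}, namely item~\ref{fact:cont-ind-cont-union} ($\CONT(G) = \bigcup_{j \in N} \CONT(G,j)$) and the structure of $\CONT(G,i)$.

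First I would prove $\GOOD^\pca(G,i) \subseteq \GOOD^\ppca(G,i)$. Take $P \in \GOOD^\pca(G,i)$, so $P \in \ISAT(G,i)$ and $P \notin \CONT(G)$. Since $\CONT(G,i) \subseteq \bigcup_{j \in N} \CONT(G,j) = \CONT(G)$ by Fact~\ref{fact:basic-properties}.\ref{fact:cont-ind-cont-union}, from $P \notin \CONT(G)$ we get $P \notin \CONT(G,i)$; together with $P \in \ISAT(G,i)$ this gives $P \in \GOOD^\ppca(G,i)$.

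Next I would prove $\GOOD^\ppca(G,i) \subseteq \GOOD^\pct(G,i)$, which is immediate: if $P \in \GOOD^\ppca(G,i)$ then in particular $P \in \ISAT(G,i)$, which is exactly the condition defining $\GOOD^\pct(G,i)$. Chaining the two inclusions yields the claim.

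There is essentially no obstacle here; the only thing that needs a word of care is the first inclusion, where one must invoke that $\CONT(G,i)$ is one of the sets in the union defining $\CONT(G)$ --- equivalently, that any profile contentious for player~$i$ is contentious. This follows directly from the definition of $\CONT(G,i)$, whose defining condition $\out(P) \not\transform {\gamma_{s_1}}_{\mid \gamma_i} \comb \ldots \comb {\gamma_{s_k}}_{\mid \gamma_i}$ entails $\out(P) \not\transform \gamma_{s_1} \comb \ldots \comb \gamma_{s_k}$, since a transformation that disposes of more resources is always available and ${\gamma_{s_j}}_{\mid \gamma_i} \subseteq \gamma_{s_j}$ for each $j$. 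Everything else is a routine application of Definition~\ref{def:good}.
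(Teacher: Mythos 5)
Your proof is correct and matches the paper's treatment: the paper states this fact without proof as ``a straightforward consequence of the definitions,'' and your argument simply fills in that routine unfolding, using $\CONT(G,i)\subseteq\bigcup_{j\in N}\CONT(G,j)=\CONT(G)$ (Fact~\ref{fact:basic-properties}.\ref{fact:cont-ind-cont-union}) for the first inclusion and dropping the non-contention conjunct for the second. Your closing remark correctly observes that the needed inclusion also follows directly from the definitions, since $\flat({\gamma_{s_j}}_{\mid\gamma_i})$ is contained in $\flat(\gamma_{s_j})$ and $\transform$ is monotone in the target bundle.
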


We can now define the three kinds of single-minded parsimonious preferences over profiles used in this paper.
\begin{definition}\label{def:preferences}
In an RCG $G = (N,\gamma_1, \ldots, \gamma_n,
\epsilon_1, \ldots, \epsilon_n)$, we say that player $i \in N$
\emph{strictly prefers} $P \in \ch(G)$ over $Q \in \ch(G)$ according to $\pref$ (noted $Q \prec^\pref_i P$) iff one of the following conditions is
satisfied:
\begin{enumerate}[leftmargin=*]
\item $P \not \in \GOOD^\pref(G,i)$, $Q \not \in \GOOD^\pref(G,i)$ and $P_i \subset Q_i$;
\item $P \in \GOOD^\pref(G,i)$, and $Q \not \in \GOOD^\pref(G,i)$;
\item $P \in \GOOD^\pref(G,i)$, $Q \in \GOOD^\pref(G,i)$ and $P_i \subset Q_i$.
\end{enumerate}
\end{definition}

\section{Nash equilibria and complexity}
\label{sec:decision-prob-complexity}

In this section, we define pure Nash equilibria in RCGs.
We show that Nash equilibria may not exist, and that 
the sets of Nash equilibria may be distinct
for every kind of attitude towards contention.
We show that deciding whether a profile is a Nash equilibrium is \coNP-complete in presence of public and private contention-averse preferences and in \PTIME in presence of contention-tolerant preferences.

\medskip

It is customary to define Nash equilibria in terms of profitable deviations, which we introduce now.
\begin{definition}
Let $G = (N,\gamma_1, \ldots, \gamma_n, \epsilon_1, \ldots, \epsilon_n)$ be an RCG. We say that $P'_i \in \ch_i(G)$ is a \emph{profitable deviation}, according to $\pref$, from the profile $P \in \ch(G)$ for player~$i$ when player~$i$ strictly prefers $(P_{-i}, P'_i)$ over $P$, according to $\pref$.
\end{definition}

A Nash equilibrium is a profile in which no player has a profitable deviation.
\begin{definition}Let $G = (N,\gamma_1, \ldots, \gamma_n, \epsilon_1, \ldots, \epsilon_n)$ be an RCG. A profile $P \in \ch(G)$ is a (pure) \emph{Nash equilibrium} according to $\pref$ iff there is no profitable deviation according to $\pref$ from $P$ for any player in $N$.
\end{definition}
Let us note $\NE^\pref(G)$ the set of profiles in $\ch(G)$ which are Nash equilibria of $G$ according to $\pref$.
We can finally illustrate this last definition with our running examples.
\begin{example}[continues=ex:bringyourownfood]
  If Edward does not join the party, recall that the game is formalized as $\Gexone{.1} = (\{y, b, c\},\gamma_y = \bundle{\var{w}, \var{w}}, \gamma_b = \bundle{\var{w}, \var{r}, \var{o}}, \gamma_c = \bundle{\var{w}, \var{r}, \var{o}}, \epsilon_y = \{\bundle{\var{w}, \var{w}, \var{w}, \var{w}} \}, \epsilon_b = \{\var{r}, \var{r}, \var{r}\}, \epsilon_c = \{\var{o}, \var{o}\})$.
  We have: 
in the case of contention-tolerance
$\NE^\pct(\Gexone{.1}) = \{
(\bundle{\var{w}, \var{w}, \var{w}, \var{w}}, \emptyset, \emptyset),
(\bundle{\var{w}, \var{w}, \var{w}, \var{w}}, \{\var{r}\}, \{\var{o}\})
\}$;
in both cases of contention-aversity $\NE^\pca(\Gexone{.1}) = \NE^\ppca(\Gexone{.1}) = \{
(\bundle{\var{w}, \var{w}, \var{w}, \var{w}}, \emptyset, \emptyset),
(\bundle{\var{w}, \var{w}, \var{w}, \var{w}}, \{\var{r}, \var{r}\}, \{\var{o}, \var{o}\})
\}$.

\medskip

Let us see why $(\bundle{\var{w}, \var{w}, \var{w}, \var{w}}, \{\var{r}\}, \{\var{o}\})$ is not a Nash equilibrium in $\Gexone{.1}$ for contention-averse preferences.

In which profiles are the players potentially satisfied?
\begin{itemize}
\item $\ISAT(\Gexone{.1}, y)$ contains all the profiles where the choice of $y$ is $\{\bundle{\var{w}, \var{w}, \var{w}, \var{w}}\}$.
\item $\ISAT(\Gexone{.1}, b)$ contains all the profiles where the choice of $y$ is $\{\bundle{\var{w}, \var{w}, \var{w}, \var{w}}\}$, the choice of $b$ is at least $\{r\}$, and the choice of $c$ is at least $\{o\}$.
\item $\ISAT(\Gexone{.1}, c) = \ISAT(\Gexone{.2}, b)$. 
\end{itemize}
So all the players are potentially satisfied in $(\bundle{\var{w}, \var{w}, \var{w}, \var{w}}, \{\var{r}\}, \{\var{o}\})$.

As an aside, let us first observe that the profile not being a Nash equilibrium is not because of you.
Indeed, the profile is not contentious for you.
\begin{itemize}
  \item ${\gamma_y}_{\mid \gamma_y} = (\var{w} \comb \var{w})_{\mid \var{w} \comb \var{w}} = \var{w} \comb \var{w}$,
  \item ${\gamma_b}_{\mid \gamma_y} = (\var{w} \comb \var{r} \comb \var{o})_{\mid \var{w} \comb \var{w}} = \var{w} $,
  \item ${\gamma_c}_{\mid \gamma_y} = (\var{w} \comb \var{r} \comb \var{o})_{\mid \var{w} \comb \var{w}} = \var{w} $.
\end{itemize}
Clearly, one can transform the outcome of $(\bundle{\var{w}, \var{w}, \var{w}, \var{w}}, \{\var{r}\}, \{\var{o}\})$ into a bundle with four glasses of wine $\var{w}$. So $\CONT(\Gexone{.1}, y)$ does not contain the profile (which is then contained in $\GOOD^\ppca(\Gexone{.1}, y)$). We can also see that if you were not contributing the bottle of wine, you would not be potentially satisfied anymore. In other words, you have no profitable deviation from the profile $(\bundle{\var{w}, \var{w}, \var{w}, \var{w}}, \{\var{r}\}, \{\var{o}\})$.

But the profile is contentious for Bob and Carmen:
\begin{itemize}
  \item ${\gamma_y}_{\mid \gamma_b} = (\var{w} \comb \var{w})_{\mid \var{w} \comb \var{r} \comb \var{o}} = \var{w} \comb \var{w}$,
  \item ${\gamma_b}_{\mid \gamma_b} = (\var{w} \comb \var{r} \comb \var{o})_{\mid \var{w} \comb \var{r} \comb \var{o}} = \var{w} \comb \var{r} \comb \var{o}$,
  \item ${\gamma_c}_{\mid \gamma_b} = (\var{w} \comb \var{r} \comb \var{o})_{\mid \var{w} \comb \var{r} \comb \var{o}} = \var{w} \comb \var{r} \comb \var{o}$.
\end{itemize}
Clearly, it is not possible to transform the outcome of the profile into a bundle with two portions of rice (and two onions). So, $(\bundle{\var{w}, \var{w}, \var{w}, \var{w}}, \{\var{r}\}, \{\var{o}\})$ is in $\CONT(\Gexone{.1}, b)$, and so also in $\CONT(\Gexone{.1})$. It means that it is in neither $\GOOD^\ppca(\Gexone{.1}, b)$ or $\GOOD^\pca(\Gexone{.1}, b)$. The exact same is true for agent~$c$.

Finally, $(\bundle{\var{w}, \var{w}, \var{w}, \var{w}}, \{\var{r}\}, \{\var{o}\})$ is not a Nash equilibrium when the preferences are contention-averse, because it is not good for either Bob or Carmen, and both can profitably deviate to $\emptyset$. Indeed, it would result in another profile that is not good for them either, but their contribution would be strictly included in their choice in $(\bundle{\var{w}, \var{w}, \var{w}, \var{w}}, \{\var{r}\}, \{\var{o}\})$. That is, it is the case that:
\begin{itemize}
\item $(\bundle{\var{w}, \var{w}, \var{w}, \var{w}}, \{\var{r}\}, \{\var{o}\}) \prec^\pref_b (\bundle{\var{w}, \var{w}, \var{w}, \var{w}}, \emptyset, \{\var{o}\})$,
  \item $(\bundle{\var{w}, \var{w}, \var{w}, \var{w}}, \{\var{r}\}, \{\var{o}\}) \prec^\pref_c (\bundle{\var{w}, \var{w}, \var{w}, \var{w}}, \{\var{r}\}, \emptyset)$.
\end{itemize}

\medskip

If Edward joins the party, recall that the game is formalized as
 $\Gexone{.2} = ( \{y, b, c, e\},\gamma_y = \bundle{\var{w}, \var{w}}, \gamma_b = \bundle{\var{w}, \var{r}, \var{o}}, \gamma_c = \bundle{\var{w}, \var{r}, \var{o}}, \gamma_e = \bundle{\var{w}, \var{w}, \var{w}}, \epsilon_y = \{\bundle{\var{w}, \var{w}, \var{w}, \var{w}} \}, \epsilon_b = \{\var{r}, \var{r}, \var{r}\}, \epsilon_c = \{\var{o}, \var{o}\}, \epsilon_e = \emptyset )$.
In the case of contention-tolerance
$\NE^\pct(\Gexone{.2}) = \{
(\bundle{\var{w}, \var{w}, \var{w}, \var{w}}, \emptyset, \emptyset, \emptyset),
(\bundle{\var{w}, \var{w}, \var{w}, \var{w}}, \{\var{r}\}, \{\var{o}\}, \emptyset)
\}$;
in both cases of contention-aversity $\NE^\pca(\Gexone{.2}) = \NE^\ppca(\Gexone{.2}) =
\{
(\emptyset, \emptyset, \emptyset, \emptyset)
\}$.
\end{example}

\begin{example}[continues=ex:telecommunication]
Consider again $\Gextwo = (\{a,b\},\gamma_a = \bundle{\var{3G}, \var{3G}, \var{4G}}, \gamma_b = \bundle{\var{3G}, \var{3G}, \var{4G}, \var{4G}}, \epsilon_a = \{\var{3G},\var{3G},\linebreak \var{3G}\comb \var{3G}\}, \epsilon_b = \{\var{3G}, \var{4G}, \var{4G}\comb \var{4G}\})$.
We have:
in the case of contention-tolerance
$\NE^\pct(\Gextwo) = \{(\emptyset, \emptyset), (\{\var{3G}, \var{3G}\},\linebreak \{\var{4G}\comb \var{4G}\}), (\{\var{3G}\}, \{\var{3G}, \var{4G}\comb \var{4G}\}), (\{\var{3G}, \var{3G}\}, \{\var{4G}\comb \var{4G}\})\}$;
in both cases of contention-aversity $\NE^\pca(\Gextwo) = \linebreak \NE^\ppca(\Gextwo) = \{(\emptyset, \emptyset), (\{\var{3G}, \var{3G} \comb \var{3G}\}, \{\var{3G}, \var{4G}, \var{4G}\comb \var{4G}\}), (\{\var{3G}, \var{3G}, \var{3G} \comb \var{3G}\}, \{\var{4G}, \var{4G}\comb \var{4G}\})\}$.\footnote{From Fact~\ref{fact:basic-properties}.\ref{fact:private-cont-at-least-two}, when there are only two players, contention and private contention coincide, and so do the sets of Nash equilibria.}
\end{example}

For every kind of preferences, there are games that do not admit any Nash equilibrium. In fact we prove something stronger: there is a game that does not admit a Nash equilibrium for any kind of preferences.
\begin{proposition}\label{prop:no-nash}
  There is a game $G$ such that $\NE^\pref(G) = \emptyset$
  for every kind of preferences\linebreak $\pref \in \{\pct, \pca, \ppca\}$.
\end{proposition}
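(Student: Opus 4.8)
The plan is to exhibit a small game, likely with one or two players and just a couple of resource types, in which every profile admits a profitable deviation, and to argue this uniformly for all three preference kinds. The cleanest route is to find a game where the set of \emph{good} profiles coincides for $\pct$, $\pca$, and $\ppca$ — for instance, a game in which no potentially-satisfying profile is ever contentious (so contention plays no role and all three notions of ``good'' agree by Fact~\ref{fact:pcasuppcabsubpct}) — and then force a deviation cycle purely through the interplay of single-mindedness and parsimony. A natural candidate is a one-player game where the player needs a resource he can only obtain by contributing a strictly larger bundle, combined with a twist that makes contributing that bundle itself undesirable; but one player is too weak, since the unique minimal good contribution would be an equilibrium. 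So I would use (at least) two players whose endowments interlock: each player's being satisfied depends on the other's contribution, set up so that from any profile at least one player can either drop down to $\emptyset$ (when the profile is not good for him) or is not good and cannot become good, yet the other player's best response destroys the first player's situation.

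The key steps, in order: (1) Fix $\Res$ and define $G$ with $N=\{1,2\}$ (or $\{1,2,3\}$ if two does not suffice), specifying each $\gamma_i$ and $\epsilon_i$ explicitly. (2) Compute $\ISAT(G,i)$ for each player — i.e., describe exactly which profiles potentially satisfy player~$i$. (3) Verify that $\CONT(G)$ is empty, or at least disjoint from the potentially-satisfying profiles, so that $\GOOD^\pct(G,i) = \GOOD^\ppca(G,i) = \GOOD^\pca(G,i)$ for both players; this is the step that makes the argument uniform across all $\pref$. (4) Enumerate the (finitely many) profiles and, for each one, name a player and a profitable deviation, invoking the appropriate clause of Definition~\ref{def:preferences}. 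Since the good sets are independent of $\pref$, step~(4) only has to be done once. (5) Conclude $\NE^\pref(G) = \emptyset$ for all $\pref \in \{\pct,\pca,\ppca\}$.

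The main obstacle is the design in step~(1): I need the deviation graph to have no sink while keeping the example small and keeping $\CONT(G)$ trivial so the argument transfers to all three preference kinds. The tension is that parsimony pushes every player toward $\emptyset$, so $(\emptyset,\ldots,\emptyset)$ is always a fixed point unless \emph{some} player is good at it while another is not — but by Fact~\ref{fact:basic-properties}.\ref{fact:trivial-uncontentious} that profile is never contentious, so goodness there reduces to potential satisfaction with an empty pool, i.e.\ $\gamma_i = \emptybundle$ for the satisfied player. The trick will be to give one player the empty goal (so he is always satisfied, hence the all-empty profile is good for him but he will want to free-ride and contribute nothing — fine so far) and then arrange the \emph{other} player(s) so that they oscillate: satisfying one requires a contribution that makes another drop out, and vice versa, with no middle ground where all the non-trivial players are simultaneously content. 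Concretely I expect something like a ``rock-paper-scissors'' of resource dependencies among two or three players with tiny endowments; verifying that none of the (few) resulting profiles is an equilibrium, and that $\CONT(G)=\emptyset$, is then a short finite check rather than a real difficulty.
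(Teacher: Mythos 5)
Your overall strategy is exactly the paper's: exhibit a small game in which no profile is contentious (so the three notions of ``good'' collapse and the argument is uniform in $\pref$), then check that every profile admits a profitable deviation. But for an existence statement the construction \emph{is} the proof, and you never produce it --- and the design reasoning you offer in its place contains an error that would steer you away from a working example. You claim that $(\emptyset,\ldots,\emptyset)$ ``is always a fixed point unless some player is good at it while another is not,'' and conclude that you need a player with $\gamma_i=\emptybundle$. That is the wrong condition: by clause~2 of Definition~\ref{def:preferences}, a player not good at the all-empty profile has a profitable deviation as soon as he can \emph{unilaterally move to} a good profile; whether anyone is good \emph{at} $(\emptyset,\ldots,\emptyset)$ is irrelevant. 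Worse, a player with the empty goal is good in every non-contentious profile, always best-responds with $\emptyset$, and therefore can never destabilize the all-empty profile --- that design move is a dead end.

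The paper's example needs no such player. Take $N=\{1,2\}$, $\epsilon_1=\epsilon_2=\{A\comb B\}$, $\gamma_1=A$, $\gamma_2=B\comb B$. One checks $\CONT(G)=\emptyset$, so $\GOOD^\pref(G,i)=\ISAT(G,i)$ for all three $\pref$. The all-empty profile is escaped because player~$1$ can self-satisfy only by contributing the whole bundle $A\comb B$ (clause~2), which releases a $B$; player~$2$ needs two $B$'s, so he then contributes his own bundle; once both have contributed, player~$1$ is still satisfied by player~$2$'s contribution alone and parsimoniously withdraws (clause~3); player~$2$ is then unsatisfied and withdraws too (clause~1), closing a four-profile cycle with no sink. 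The essential trick --- a player whose endowment forces him to over-contribute to satisfy himself, interlocked with a player who needs that surplus --- is what your ``rock--paper--scissors'' intuition is groping for, but without it the proposal does not yet establish the proposition.
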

\begin{proof}
  Consider the game $G$, with two players~$1$ and $2$, where $\epsilon_1 = \epsilon_2 = \{A \comb B\}$, $\gamma_1 = A$, and $\gamma_2 = B \comb B$. An illustration of the four profiles and their outcomes follows.

  \begin{center}
    \begin{tabular}{lcc}
      \toprule
      ~           & $\emptyset$ & $\{A \comb B\}$\\
      \midrule
      $\emptyset$ & $\emptyset$ &$\{A \comb B\}$ \\
      $\{A \comb B\}$ & $\{A \comb B\}$  & $\{A \comb B, A\comb B\}$ \\
      \bottomrule
    \end{tabular}
  \end{center}

  We have:
  \begin{itemize}
  \item $\ISAT(G, 1) = \{ (\emptyset, \{A \comb B\}), (\{A \comb B\}, \emptyset), (\{A \comb B\}, \{A \comb B\})\}$,
  \item $\ISAT(G, 2) = \{(\{A \comb B\}, \{A \comb B\})\}$.
  \end{itemize}
  By definition we have:
  \begin{itemize}
  \item $\GOOD^\pct(G,1) = \ISAT(G,1)$ and $\GOOD^\pct(G,2) = \ISAT(G,2)$.
  \end{itemize}
  But it is also the case that:
  \begin{itemize}
  \item $\CONT(G,1) = \emptyset$,
  \item $\CONT(G,2) = \emptyset$.
  \end{itemize}
  Which means that $\CONT(G) = \emptyset$.
  Hence, we also have:
  \begin{itemize}
  \item $\GOOD^\pca(G,1) = \ISAT(G,1)$, $\GOOD^\pca(G,2) = \ISAT(G,2)$,
  \item $\GOOD^\ppca(G,1) = \ISAT(G,1)$ and $\GOOD^\ppca(G,2) = \ISAT(G,2)$.
  \end{itemize}
  It means that in $G$, the preferences $\pct$, $\pca$, and $\ppca$ coincide. We argue that there is a profitable deviation in every profile of $G$. 
  \begin{itemize}
  \item From profile $(\emptyset, \emptyset)$, player~$1$ prefers $(\{A\comb B\}, \emptyset)$, because $(\emptyset, \emptyset)\not\in \GOOD^\pref(G,1)$ and $(\{A\comb B\}, \emptyset) \in \GOOD^\pref(G,1)$.
  \item From profile $(\emptyset, \{A\comb B\})$, player~$2$ prefers $(\emptyset, \emptyset)$ because $(\emptyset, \{A\comb B\}) \not \in \GOOD^\pref(G,2)$, $(\emptyset, \emptyset)  \not \in \GOOD^\pref(G,2)$, and $\emptyset \subset \{A\comb B\}$.
  \item From profile $(\{A\comb B\}, \emptyset)$, player~$2$ prefers $(\{A\comb B\}, \{A\comb B\})$, because $(\{A\comb B\}, \emptyset) \not \in \GOOD^\pref(G,2)$ and $(\{A\comb B\}, \{A\comb B\}) \in  \GOOD^\pref(G,2)$.
  \item From profile $(\{A\comb B\}, \{A\comb B\})$, player~$1$ prefers $(\emptyset, \{A\comb B\})$, because $(\{A\comb B\}, \{A\comb B\}) \in\linebreak \GOOD^\pref(G, 1)$,  $(\emptyset, \{A\comb B\}) \in \GOOD^\pref(G, 1)$, and $\emptyset \subset \{A\comb B\}$.
  \end{itemize}

\end{proof}

The sets of Nash equilibria wrt.\ two different kinds of preference can be pairwise distinct and even non-nested.
\begin{proposition}
\label{prop:all-NE-distinct}
  For every pair of distinct kinds of preference $\pref, \pref' \in \{\pct, \pca, \ppca\}$, there is a game $G$ such that $\NE^{\pref}(G) \not \subseteq \NE^{\pref'}(G)$.
  \end{proposition}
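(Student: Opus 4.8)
The statement asks, for each of the six ordered pairs $(\pref,\pref')$ of distinct kinds among $\{\pct,\pca,\ppca\}$, for a game $G$ witnessing $\NE^{\pref}(G)\not\subseteq\NE^{\pref'}(G)$. By Fact~\ref{fact:pcasuppcabsubpct} we know $\GOOD^\pca\subseteq\GOOD^\ppca\subseteq\GOOD^\pct$, so the ``expected'' containments would be $\NE^\pca\supseteq$? — but note that more good profiles does \emph{not} translate monotonically into more Nash equilibria, since a profile can fail to be an equilibrium precisely because a deviation becomes \emph{good} under a more permissive preference. This non-monotonicity is exactly what makes all six non-inclusions plausible, and the job is to build small explicit games. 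I would aim to find one or two compact games with three players (three is forced for the $\pca$ vs.\ $\ppca$ distinction, by Fact~\ref{fact:basic-properties}.\ref{fact:private-cont-at-least-two}) and simply compute $\NE^\pct$, $\NE^\pca$, $\NE^\ppca$ for each, then read off the six witnesses.

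**Key steps, in order.**
First I would handle the pairs involving $\pct$ versus a contention-averse preference. Take a game where some profile $P$ is contentious (so $P\notin\GOOD^\pca(G,i)$ for the affected players) yet $P\in\ISAT(G,i)$ for all, and is stable under $\pct$ because no player can both stay satisfied and contribute less — this gives a $\pct$-equilibrium that is not a $\pca$- or $\ppca$-equilibrium, witnessing $\NE^\pct\not\subseteq\NE^\pca$ and $\NE^\pct\not\subseteq\NE^\ppca$. The bring-your-own-food game $\Gexone{.2}$ already does this: $(\langle\var{w},\var{w},\var{w},\var{w}\rangle,\{\var{r}\},\{\var{o}\},\emptyset)\in\NE^\pct$ but the analysis in the excerpt shows it is contentious for $b$ and $c$, so it is not in $\NE^\pca=\NE^\ppca$. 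Conversely, for $\NE^\pca\not\subseteq\NE^\pct$ and $\NE^\ppca\not\subseteq\NE^\pct$, I would design a game with a profile $P$ that is a $\pca$-equilibrium because the only deviation target is a contentious (hence not-good) profile, but under $\pct$ that target becomes good and yields a profitable deviation by parsimony — e.g.\ a player endowed with a redundant bundle whose removal triggers contention among others but leaves him satisfied. The last two pairs, $\pca$ vs.\ $\ppca$, require a three-player game with a profile that is contentious ``publicly'' but not for the distinguished player (as in Example~\ref{ex:contention}'s $\Gexcont$): tuning endowments so that this profile is a $\ppca$-equilibrium but not $\pca$ (some player can deviate to exploit it now being good), and symmetrically another game for the reverse direction where making a deviation target good under $\ppca$ but not $\pca$ destroys stability.

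**Main obstacle.**
The conceptual content is light — everything reduces to Definition~\ref{def:preferences} and the $\CONT$/$\CONT(G,i)$ definitions — but the bookkeeping is the real hazard: for each candidate game one must verify, for \emph{every} profile, that no player has a profitable deviation (an equilibrium claim) \emph{and} exhibit a concrete profitable deviation (a non-equilibrium claim), and these computations must be redone for up to three preference kinds and presented cleanly. I would therefore try hard to economize: reuse $\Gexone{.2}$ and a single new three-player game or two, each with at most two bundles per endowment so the profile space stays tiny, and present the six witnesses as a short case list referencing precomputed $\ISAT$, $\CONT(G,\cdot)$, and $\GOOD^\pref(G,\cdot)$ sets. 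The subtle point to get right — and the step most likely to contain an error — is a $\pca$- or $\ppca$-equilibrium whose stability relies on a deviation target being \emph{non-good because contentious}: one must double-check that the target really is contentious for the relevant player and that no \emph{other} deviation (to a larger or incomparable contribution) is profitable either.
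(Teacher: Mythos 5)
Your approach is essentially the paper's: it uses the bring-your-own-food game for the pairs involving $\pct$ (the contentious profile $(\bundle{\var{w},\var{w},\var{w},\var{w}},\{\var{r}\},\{\var{o}\})$ is a $\pct$-equilibrium but not a contention-averse one, and the ``doubled'' profile $(\bundle{\var{w},\var{w},\var{w},\var{w}},\{\var{r},\var{r}\},\{\var{o},\var{o}\})$ gives the converse, exactly via the ``drop a redundant contribution and trigger contention'' mechanism you describe), and for the $\pca$/$\ppca$ pair it builds precisely the three-player $\Gexcont$-style game you sketch, namely $\gamma_1=A$, $\gamma_2=\gamma_3=B$, $\epsilon_1=\{A\comb B, B\}$, $\epsilon_2=\epsilon_3=\emptyset$, whose $\pca$- and $\ppca$-equilibrium sets are disjoint singletons, so a single game covers both directions rather than the two you anticipate. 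Your plan is sound and correctly identifies the non-monotonicity and the need for three players in the $\pca$/$\ppca$ case; the only thing separating it from a complete proof is actually exhibiting and verifying the concrete games for the four non-$\pct$-witnessed directions, which the paper does by the computations above.
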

  \begin{proof}
    For the cases where $\pref$ or $\pref'$ is $\pct$, c.f.~Example~\ref{ex:bringyourownfood}.
    The profile $(\bundle{\var{w}, \var{w}, \var{w}, \var{w}}, \{\var{r}\}, \{\var{o}\})$ is in $\NE^\pct(\Gexone{.1})$ but in neither $\NE^\pca(\Gexone{.1})$ or $\NE^\ppca(\Gexone{.1})$.
The profile $(\bundle{\var{w}, \var{w}, \var{w}, \var{w}}, \{\var{r}, \var{r}\}, \{\var{o}, \var{o}\})$ is in both $\NE^\pca(\Gexone{.1})$ and $\NE^\ppca(\Gexone{.1})$, but it is not in $\NE^\pct(\Gexone{.1})$.

For the two cases where $\pref = \pca$ and $\pref' = \ppca$, and where $\pref = \ppca$ and $\pref' = \pca$, consider $G$, where $\gamma_1 = A$, $\gamma_2 = \gamma_3 = B$, $\epsilon_1 = \{A\comb B, B\}$, $\epsilon_2 = \epsilon_3 = \emptyset$.
    We have $\NE^\ppca(G) = \{ (\{A\comb B\}, \emptyset, \emptyset) \}$, and
    $\NE^\pca(G) = \{ (\{A\comb B, B\}, \emptyset, \emptyset) \}$. We provide the details.

    Player~$1$ is potentially satisfied when he contributes at least $A \comb B$, while player~$2$ and player~$3$ are potentially satisfied when player~$1$ contributes more than nothing.
  \begin{itemize}
  \item $\ISAT(G, 1) = \{ (\{A \comb B\}, \emptyset, \emptyset), (\{A \comb B, B\}, \emptyset, \emptyset) \}$.
  \item $\ISAT(G, 2) = \ISAT(G, 3) = \{(\{B\}, \emptyset, \emptyset), (\{A \comb B\}, \emptyset, \emptyset), (\{A \comb B, B\}, \emptyset, \emptyset)\}$.
  \end{itemize}
  We have ${\gamma_1}_{\mid \gamma_2} = {\gamma_1}_{\mid \gamma_3} = \emptybundle$, and ${\gamma_2}_{\mid \gamma_1} = {\gamma_3}_{\mid \gamma_1} = \emptybundle$, and 
  ${\gamma_2}_{\mid \gamma_3} = {\gamma_3}_{\mid \gamma_2} = B$.
  Hence:
  \begin{itemize}
  \item $\CONT(G, 1) = \emptyset$.
  \item $\CONT(G, 2) = \CONT(G, 3) = \{(\{A \comb B\}, \emptyset, \emptyset)\}$. (Because all players are potentially satisfied in $(\{A \comb B\}, \emptyset, \emptyset)$, but $\out((\{A \comb B\}, \emptyset, \emptyset)) \not\transform {\gamma_1}_{\mid \gamma_2} \comb {\gamma_2}_{\mid \gamma_2}  \comb {\gamma_3}_{\mid \gamma_2}$ and $\out((\{A \comb B\}, \emptyset, \emptyset)) \not\transform {\gamma_1}_{\mid \gamma_3} \comb {\gamma_2}_{\mid \gamma_3}  \comb {\gamma_3}_{\mid \gamma_3}$.)
  \end{itemize}
  Which also means that $\CONT(G) = \{(\{A \comb B\}, \emptyset, \emptyset)\}$.
  We thus have:
  \begin{itemize}
    \item $\GOOD^\pca(G, 1) = \{ (\{A \comb B, B\}, \emptyset, \emptyset) \}$,
    \item $\GOOD^\ppca(G, 1) = \{ (\{A \comb B\}, \emptyset, \emptyset), (\{A \comb B, B\}, \emptyset, \emptyset) \}$.
  \end{itemize}

  Since player~$2$ and player~$3$ can only choose $\emptyset$, they cannot have profitable deviations.
  So $\NE^\pca(G)$ only contains $(\{A \comb B, B\}, \emptyset, \emptyset)$, as it is the only good profile for player~$1$. On the other hand, with private contention-averse preferences,
$(\{A \comb B, B\}, \emptyset, \emptyset) \prec^\ppca_1 (\{A \comb B\}, \emptyset, \emptyset)$ (because both profiles are good for player~$1$, but he contributes strictly less in $(\{A \comb B\}, \emptyset, \emptyset)$), so $\NE^\pca(G)$ only contains $(\{A \comb B\}, \emptyset, \emptyset)$.
  \end{proof}

We will investigate the decision problem related to the membership in the set $\NE^\pref(G)$, for $\pref \in \{\pct, \pca, \ppca\}$.

\noindent\begin{minipage}{\linewidth}
\medskip
\hrule height 1pt
\smallskip
\textsf{NASH EQUILIBRIUM ($\NE^\pref$)}
\smallskip
\hrule
\begin{description}
\item[{\bf (in)}] A resource contribution game $G$ and a profile $P \in \ch(G)$.
\item[{\bf (out)}] $P \in \NE^\pref(G)$?
\end{description}
\hrule
\medskip
\end{minipage}

To start, we state the rudimentary results that are instrumental to study the complexity of our decision problems.
\begin{proposition}
\label{prop:ptime-satp-good}\label{prop:ptime-prefs}
For every kind of preferences $\pref \in \{\pct, \pca, \ppca\}$,
the problems of deciding whether $P \in \ISAT(G,i)$, $i \in \SATP(G,P)$, $P \in \CONT(G)$, $P \in \CONT(G,i)$, $P \in \GOOD^\pref(G,i)$, and $Q \prec^\pref_i P$, are all in \PTIME.
\end{proposition}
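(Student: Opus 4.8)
The plan is to show that each of the listed predicates is decided by a fixed, constant-depth combination of two elementary polynomial-time operations on multisets of atomic resources: forming the disjoint union (sum) of several such multisets, and comparing two such multisets componentwise by multiplicity. Since, by Definitions~\ref{def:good} and~\ref{def:preferences}, the whole hierarchy of notions is built from the single primitive $Bag \transform Bundle$ (together with multiset inclusion) following the dependency chain $\ISAT \to \SATP \to \CONT,\CONT(\cdot,i) \to \GOOD^\pref \to {\prec^\pref_i}$, it suffices to observe that $\transform$ is polynomial-time decidable on the objects that actually arise, and that all intermediate objects stay polynomially large.

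Concretely, I would first note that for any profile $P \in \ch(G)$ the multiset $\flat^\bullet(\out(P)) = \biguplus_{i \in N}\flat^\bullet(C_i)$ is obtained by summing the atomic contents of at most $\sum_{i\in N}|\epsilon_i|$ input bundles, so it is computable in polynomial time and has size polynomial in $|G|+|P|$. Deciding $\flat(Bundle)\subseteq\flat^\bullet(Bag)$, i.e.\ $Bag \transform Bundle$, then reduces to computing $\flat(Bundle)$ (again a sum of input bundles) and comparing, for each atomic type $x$ occurring in either, the multiplicities $m(\flat(Bundle),x)$ and $m(\flat^\bullet(Bag),x)$; additions and comparisons of these integers are polynomial regardless of whether multiplicities are encoded in unary or in binary. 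Consequently $P \in \ISAT(G,i)$ and $i \in \SATP(G,P)$ — both literally the test $\out(P)\transform\gamma_i$ — are in \PTIME, and computing the whole set $\SATP(G,P)$ takes $n$ such tests, hence polynomial time, yielding a set of size at most $n$.

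Finally I would compose these tests. With $\SATP(G,P)=\{s_1,\dots,s_k\}$ in hand, $P\in\CONT(G)$ is the single test $\out(P)\not\transform\gamma_{s_1}\comb\dots\comb\gamma_{s_k}$, where the right-hand bundle has $\flat$ equal to $\biguplus_{j}\flat(\gamma_{s_j})$ and is built in polynomial time; $P\in\CONT(G,i)$ additionally tests $P\in\ISAT(G,i)$ and $\out(P)\not\transform{\gamma_{s_1}}_{\mid\gamma_i}\comb\dots\comb{\gamma_{s_k}}_{\mid\gamma_i}$, each restriction $\cdot_{\mid\gamma_i}$ merely zeroing out the multiplicities of resource types absent from $\gamma_i$, hence again polynomial. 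By Definition~\ref{def:good}, $P\in\GOOD^\pref(G,i)$ is a fixed Boolean combination of an $\ISAT$ test and a $\CONT$ test (for $\pca$) or a $\CONT(\cdot,i)$ test (for $\ppca$) — and simply $\ISAT$ for $\pct$ — hence polynomial and uniform in $\pref$; and by Definition~\ref{def:preferences}, $Q\prec^\pref_i P$ is a fixed Boolean combination of the predicates $P\in\GOOD^\pref(G,i)$, $Q\in\GOOD^\pref(G,i)$, and the strict multiset inclusion $P_i\subset Q_i$, the last being a componentwise multiplicity comparison together with a check that the inclusion is strict. There is essentially no obstacle here; the only point that deserves a line of care is the one already flagged — that $\flat^\bullet(\out(P))$, the ``sum of satisfied goals'' bundle, and its $\gamma_i$-restriction all admit polynomial-size representations, which holds because summing and restricting multisets that are given in the input never blows up the encoding beyond a polynomial bound.
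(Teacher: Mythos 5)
Your proposal is correct and follows essentially the same route as the paper's own proof: reduce every predicate to the primitive test $\flat(Bundle)\subseteq\flat^\bullet(Bag)$ on flattened multisets of atomic resources, observe that flattening, restriction, and summation keep all intermediate objects polynomially sized, and then compose along the chain $\ISAT \to \SATP \to \CONT \to \GOOD^\pref \to {\prec^\pref_i}$. The extra care you take about multiplicity encodings is a welcome, if minor, addition beyond what the paper spells out.
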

  \begin{proof}
    $P \in \ISAT(G,i)$ iff $\out(P) \transform \gamma_i$ iff $\flat(\gamma_i) \subseteq \flat^\bullet(\out(P))$. The functions $\flat$ and $\flat^\bullet$ respectively transform bundles and bags into multisets of atomic resources, of size linear in the size of the input. Checking that a multiset is included in another can be done efficiently.

    $i \in \SATP(G,P)$ is reducible to $P \in \ISAT(G,i)$.

    $P \in \CONT(G)$ iff (with $\SATP(G,P) = \{s_1, \ldots, s_k\}$) $\out(P) \not\transform {\gamma_{s_1}} \comb \ldots \comb {\gamma_{s_k}}$ iff
    $\flat({\gamma_{s_1}} \comb \ldots \comb {\gamma_{s_k}}) \not\subseteq \flat^\bullet(\out(P))$.
We have just seen that we can decide efficiently whether $i \in \SATP(G,P)$, so we can also compute $\SATP(G,P)$ efficiently. Now, it suffices to check whether a multiset is a subset of another, which can be done efficiently.

    $P \in \CONT(G,i)$ iff (with $\SATP(G,P) = \{s_1, \ldots, s_k\}$) $P \in \ISAT(G,i)$ and $\out(P) \not\transform  {\gamma_{s_1}}_{\mid \gamma_i} \comb \ldots \comb {\gamma_{s_k}}_{\mid \gamma_i}$. We can decide efficiently whether $i \in \SATP(G,P)$, so we can also compute $\SATP(G,P)$ efficiently. We can also efficiently compute the restriction of multisets ${\gamma_j}_{\mid \gamma_i}$. Now, it suffices to check whether a multiset is a subset of another, which can be done efficiently.
    
    No matter the kind of preferences $\pref$, from Definition~\ref{def:good} and the previous results, deciding\linebreak whether $P \in \GOOD^\pref(G,i)$ can also be done efficiently.

    So, for every profile $P'$ in an RCG $G$, and every player $i$, we can decide whether\linebreak $P' \in \GOOD^\pref(G,i)$ in polynomial time. Also, checking whether a multiset $P_i$ is a strict subset of another multiset $Q_i$ can be done efficiently. Hence, deciding whether $Q \prec^\pref_i P$ can be done in polynomial time.
\end{proof}

\subsection{Public and private contention-aversity}
\label{sec:case-contention-aversity}

As witnessed by Prop.~\ref{prop:all-NE-distinct}, Nash equilibria according to the preference kinds $\pca$ and $\ppca$ can be distinct, and even non-nested.
From the theoretical point of view of computational complexity, they are similar enough to be studied together. We can indeed solve them and show their hardness in a rather uniform manner.

\begin{proposition}\label{prop:ne-in-conp-pca-ppca}
  When $\pref \in \{\pca, \ppca\}$, the problem $\NE^\pref$ is in \coNP.
\end{proposition}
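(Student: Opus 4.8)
The plan is to establish the \coNP\ upper bound by showing that the \emph{complement} problem---deciding whether the input profile $P$ is \emph{not} a Nash equilibrium according to $\pref$---lies in \NP, via a guess-and-check argument.

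First I would unfold the definitions: $P \notin \NE^\pref(G)$ holds precisely when some player has a profitable deviation, i.e.\ when there exist a player $i \in N$ and a choice $P'_i \in \ch_i(G)$ with $P \prec^\pref_i (P_{-i}, P'_i)$. A nondeterministic machine guesses such a witness $(i, P'_i)$. This guess has polynomial size: $i$ is an index in $\{1,\dots,n\}$, and $P'_i$ is a submultiset of the endowment $\epsilon_i$, so its encoding is bounded by that of $\epsilon_i$, which is part of the input. The machine then forms $(P_{-i}, P'_i)$ by a trivial substitution and checks whether $P \prec^\pref_i (P_{-i}, P'_i)$; by Proposition~\ref{prop:ptime-prefs} this last test runs in polynomial time for every $\pref \in \{\pct,\pca,\ppca\}$, in particular for $\pref \in \{\pca,\ppca\}$. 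Hence the complement of $\NE^\pref$ is in \NP, and therefore $\NE^\pref \in \coNP$.

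There is no genuine obstacle here; the point worth flagging is merely \emph{why} nondeterminism is required. The action set $\ch_i(G) = \{ C \mid C \subseteq \epsilon_i \}$ can be exponentially large in the encoding of $\epsilon_i$ (for instance when $\epsilon_i$ contains many repetitions of a bundle, or bundles with large multiplicities), so one cannot deterministically enumerate all candidate deviations and apply Proposition~\ref{prop:ptime-prefs} to each within polynomial time; guessing a single deviation sidesteps this. For contention-tolerant preferences a sharper deterministic argument will later recover \PTIME, but for $\pca$ and $\ppca$ this \coNP\ bound will turn out to be tight, matched by a corresponding hardness result.
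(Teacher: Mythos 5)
Your proposal is correct and follows essentially the same argument as the paper: guess a pair $(i, P'_i)$ witnessing a profitable deviation and verify $P \prec^\pref_i (P_{-i}, P'_i)$ in polynomial time via Proposition~\ref{prop:ptime-prefs}, placing the complement in \NP. The additional remarks on witness size and on why nondeterminism is needed are accurate but not required.
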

\begin{proof}
    To solve $P \not \in \NE^\pref(G)$, simply guess a pair $(i,P'_i) \in N \times \ch_i(G)$ and check whether $P \prec^\pref_i (P_{-i},P'_i)$, which by Prop.~\ref{prop:ptime-prefs} can be done in polynomial time.
  \end{proof}

%
To prove that $\NE^\pca$ and $\NE^\ppca$ are $\coNP$-hard, we first consider an auxiliary decision problem.
Given a resource contribution game $G$ and a player~$i$, the problem  $\EPD^\pref_\emptyset$ asks whether there is a profitable deviation according to $\pref$ from $(\emptyset, \ldots, \emptyset)$ for player~$i$.

\noindent\begin{minipage}{\linewidth}
\medskip
\hrule height 1pt
\smallskip
\textsf{PROFITABLE DEVIATION EXISTENCE FROM $(\emptyset, \ldots \emptyset)$ ($\EPD^\pref_\emptyset$)}
\smallskip
\hrule
\begin{description}
\item[{\bf (in)}] A resource contribution game $G$ and a player~$i$.
\item[{\bf (out)}] Is there a profitable deviation according to $\pref$ from $(\emptyset, \ldots, \emptyset)$ for player~$i$?
\end{description}
\hrule
\medskip
\end{minipage}

We show that $\EPD_\emptyset^\pca$ and $\EPD_\emptyset^\ppca$ are \NP-complete.
\begin{lemma}
\label{lemma:epd-empty-npc-pca-ppca}
  When $\pref \in \{\pca, \ppca\}$, the problem $\EPD_\emptyset^\pref$ is \NP-complete.
  \end{lemma}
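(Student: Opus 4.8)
Given a game $G$ and a player~$i$, a profitable deviation from $(\emptyset, \ldots, \emptyset)$ is a choice $P'_i \in \ch_i(G)$ with $(\emptyset, \ldots, \emptyset) \prec^\pref_i ((\emptyset, \ldots, \emptyset)_{-i}, P'_i)$. The plan is to guess such a $P'_i$ (a submultiset of the endowment $\epsilon_i$, hence of polynomial size) and verify the strict-preference condition, which is in \PTIME by Prop.~\ref{prop:ptime-prefs}. So $\EPD_\emptyset^\pref$ is in \NP for both $\pref = \pca$ and $\pref = \ppca$.

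**Hardness.** The main work is the \NP-hardness, which I would prove by a reduction from \SAT. First I would simplify the preference analysis: from the all-empty profile $(\emptyset,\ldots,\emptyset)$, player~$i$'s contribution is already $\emptyset$, so no deviation can make $i$ contribute strictly less. Hence, by Definition~\ref{def:preferences}, the only way to have a profitable deviation is via case~2: $(\emptyset,\ldots,\emptyset) \notin \GOOD^\pref(G,i)$ while $((\emptyset,\ldots,\emptyset)_{-i}, P'_i) \in \GOOD^\pref(G,i)$. Moreover $(\emptyset,\ldots,\emptyset)$ is never contentious (Fact~\ref{fact:basic-properties}.\ref{fact:trivial-uncontentious}), so $(\emptyset,\ldots,\emptyset) \notin \GOOD^\pref(G,i)$ reduces to $\out((\emptyset,\ldots,\emptyset)) \not\transform \gamma_i$, i.e.\ we just need $\gamma_i \neq \emptybundle$. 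So $\EPD_\emptyset^\pref$ boils down to: does there exist $P'_i \subseteq \epsilon_i$ such that the profile where only $i$ contributes $P'_i$ is good for $i$ according to $\pref$? That is, $i$ is potentially satisfied and (depending on $\pref$) the profile is not contentious (for $i$).

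**The reduction.** Given a \SAT instance with variables $x_1, \ldots, x_m$ and clauses $c_1, \ldots, c_\ell$, I would build a game where the deviating player~$i$ owns an endowment encoding the choice of a truth assignment (for each variable $x_j$, two bundles $T_j$ and $F_j$, of which $i$ can contribute either, both, or neither), and $\gamma_i$ demands one "satisfaction token" $z_k$ for each clause $c_k$ together with a token from each variable gadget forcing at least one of $T_j, F_j$ to be contributed. The token $z_k$ should be producible from the contributed bundles exactly when some literal satisfying $c_k$ is present. The non-contention constraint is where the two players $\pca$/$\ppca$ and the extra clients come in: I would add dummy players whose goals compete for a resource iff $i$ contributes \emph{both} $T_j$ and $F_j$ for some $j$ (an inconsistent assignment), so that any inconsistent contribution makes the profile contentious (for $i$, in the $\ppca$ case, by routing the disputed resource through $\gamma_i$ as well). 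The \SAT instance is satisfiable iff there is a consistent satisfying assignment iff there is a non-contentious potentially-satisfying contribution for $i$, iff $\EPD_\emptyset^\pref$ answers yes.

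**Main obstacle.** The delicate point is the simultaneous encoding: the gadget must (a) let $i$ choose an assignment freely via multiset inclusion, (b) make $\gamma_i$ force a clause-covering assignment via the transformation relation (recall resources can be freely disposed, so "surplus" contributions are harmless and I must instead force \emph{minimum} contributions through $\gamma_i$'s atomic-resource demands), and (c) tie inconsistency to contention in a way that works uniformly for $\pca$ and $\ppca$ — the latter requiring the contested resource to lie in $\flat(\gamma_i)$. Balancing (b) and the fact that transformations discard resources freely is the crux; I expect to need each variable gadget to emit a distinct atomic "cover" resource that $\gamma_i$ demands, produced by either $T_j$ or $F_j$, while a shared "conflict" atomic resource is over-produced precisely when both are contributed.
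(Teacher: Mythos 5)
Your proposal takes essentially the same route as the paper's proof: membership by guessing a choice and checking the preference in polynomial time (Prop.~\ref{prop:ptime-prefs}), and hardness by reducing satisfiability to the existence of a unilateral contribution that potentially satisfies player~$i$ without creating contention, using dummy players that become potentially satisfied exactly on inconsistent assignments and whose goals overlap $\gamma_i$ so that the induced contention is also private to~$i$ --- all of which matches the paper's construction. The one detail you flag as the crux (and where your phrase about a conflict resource being ``over-produced'' points in the wrong direction, since contention requires \emph{under}-supply relative to the combined claims of the potentially satisfied players) is settled in the paper by endowing player~$1$ with the bundles $c_i \comb l_{i,j}$ (so at most three copies of each clause token exist) and giving, for each variable $p$, \emph{three} dummies the goal $\gamma_1 \comb p \comb \lnot p$, so that an inconsistent contribution yields four claimants on at most three copies of some $c_i$.
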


  \begin{proof}
    In the following, let $\pref$ be either $\pca$ or $\ppca$.
    For membership, it is enough to non-deterministi\-cally guess a choice $C_i \in \ch_i(G)$ and check whether player~$i$ strictly prefers $(\emptyset, \ldots, C_i, \ldots \emptyset)$ over $(\emptyset, \ldots \emptyset)$ according to $\pref$, which can be done in polynomial time (Prop.~\ref{prop:ptime-prefs}).

    For hardness, we reduce \SAT. Let $\psi = c_1 \land \ldots \land c_m$ be a 3CNF over the set of propositional variables $\VAR(\psi)$, with $c_i = l_{i,1} \lor l_{i,2} \lor l_{i,3}$, where every literal $l_{i,j}$ is in $\VAR(\psi) \cup \{\lnot p \mid p \in \VAR(\psi)\}$.
    

    For every 3CNF $\psi$, we build the RCG $G^\psi = (\{1\} \cup N^{-}, \vec{\gamma}, \vec{\epsilon})$ over $\Res = \{c_i \mid 1 \leq i \leq m  \} \cup \{p, \lnot p \mid p \in \VAR(\psi)\}$ as follows:
    \begin{itemize}[leftmargin=*, wide, labelwidth=!, labelindent=0pt, itemindent=!]
      \item The objective of player~$1$ is $\gamma_1 = c_1 \comb \ldots \comb c_m$. That is, player~$1$ wants a resource bundle containing one resource representing every clause.
      The endowment of player~$1$ is $\epsilon_1 = \{c_i \comb l_{i,1}, c_i \comb l_{i,2}, c_i \comb l_{i,3} \mid 1 \leq i \leq m\}$. 
      For every clause, player~$1$ can contribute three or less of these resources. But every time he contributes one resource representing a clause resource, he cannot do without providing also one resource representing a literal in the clause.
    \item $N^{-} = \{2, \ldots, 1 + 3 \cdot \Card{\VAR(\psi)} \}$
    \item For every variable $p \in \VAR(\psi)$, the set $N^{-}$ contains 
 three players with objective $\gamma_{id(p,1)} = \gamma_{id(p,2)} = \gamma_{id(p,3)} = \gamma_1 \comb p \comb \lnot p$, and with empty endowments. (The function $id$ is an arbitrary bijection that takes one variable/integer couple in $\VAR(\psi) \times \{1, 2, 3\}$, and returns one player in $N^{-}$.)
      Every member of $N^{-}$ wants a resource representing a variable and a resource representing the negation of this same variable, and like player~$1$, also wants the resources corresponding to all clauses $c_i$. They are dummy players, having the empty set as their only action: $\epsilon_{id(p,1)} = \epsilon_{id(p,2)} = \epsilon_{id(p,3)} = \emptyset$.
    \end{itemize}
    
    We can show that $\EPD_\emptyset^\pref(G^\psi, 1)$ is true iff $\psi$ is satisfiable.

    We first sketch the idea of the construction and the proof.
    The main rationales behind the construction are that player~$1$ does not satisfy his objective in $(\emptyset, \ldots, \emptyset)$, and that player~$1$ must deviate to a non-contentious profile containing one $c_i \comb l_{i,j}$ for every clause $c_i$ in $\psi$.
We will argue that such a profitable deviation exists iff $\psi$ is satisfiable.
The presence in $N^{-}$ of players with objective $\gamma_1 \comb p \comb \lnot p$, for some $p \in \VAR(\psi)$, makes sure that a profile whose outcome contains all $c_i$'s but also a $p$ and a $\lnot p$ is contentious, and privately contentious to player~$1$. Thus, such a deviation from $(\emptyset, \ldots, \emptyset)$ cannot be profitable for player~$1$. Still, a deviation containing all $c_i$ but not both $p$ and $\lnot p$ for some $p \in \VAR(\psi)$ remains a profitable deviation.

    Right to left. Let $v$ be a valuation of $\VAR(\psi)$ and assume that $v(\psi) = true$.

    We construct a profile and we will show that it is a profitable deviation for player~$1$ from $(\emptyset, \ldots, \emptyset)$.
    Let $P_1 \in \ch_1(G^\psi)$ be the action $\{ l_{i,j} \comb c_i \mid 1 \leq i \leq m, v(l_{i,j}) = true\}$. Let $P \in \ch(G^\psi)$ be the profile $(P_1, \emptyset, \ldots, \emptyset)$.
    
    We claim that:
    \begin{enumerate}[leftmargin=*, wide, labelwidth=!, itemindent=!]
    \item \label{item:no-happy-dummy-player} \emph{No player $i \in N^-$ is potentially satisfied in $P$.}
    We have $v(\psi) = true$, so there is no $p \in \VAR(\psi)$ such that $\{p, \lnot p\} \subseteq \out(P)$. That is, $\out(P) \not \transform p \comb \lnot p$. This means that for every player $i \in N^-$, $P \not \in \ISAT(G^\psi,i)$.
    \item \emph{$(\emptyset, \ldots, \emptyset)$ is not a good profile for player~$1$.}
    $\biguplus_{1 \leq i \leq m}\{c_i\} \not \subseteq \out((\emptyset, \ldots, \emptyset)) = \emptyset$. So, $(\emptyset, \ldots, \emptyset) \not \in \ISAT(G^\psi,1)$, and so $(\emptyset, \ldots, \emptyset) \not \in \GOOD^\pref(G^\psi,1)$.
    \item \emph{$P$ is a good profile for player~$1$.}
    For every $1 \leq i \leq m$, we have $P_1 \transform c_i$, which is simply $\{c_i\} \subseteq P_1$. Consequently, we have $\biguplus_{1 \leq i \leq m}\{c_i\} \subseteq P_1$, which is exactly $P_1 \transform \gamma_1$. So $P \in \ISAT(G^\psi,1)$. Moreover, since no player $i \in N^-$ is potentially satisfied in $P$ (established in item~\ref{item:no-happy-dummy-player}), the profile $P$ is not contentious, and not contentious for player~$1$.
    We thus have $P \in \GOOD^\pref(G^\psi,1)$.
    \end{enumerate}
    We conclude that $P$ is a profitable deviation for player~$1$ from $(\emptyset, \ldots, \emptyset)$.


    Left to right. Let $P_1 \in \ch_1(G^\psi)$ be a profitable deviation from $(\emptyset, \ldots, \emptyset)$ for player~$1$.
We define $P = (P_1, \emptyset, \ldots, \emptyset)$.
    Being a profitable deviation for player~$1$, it means that $P \in \GOOD^\pref(G^\psi, 1)$.

    We define the valuation $v^P$ of $\psi$, such that for every $p\in \VAR(\psi)$, $v^P(p) = true$ iff $\out(P) \transform p$. We claim that:
    \begin{enumerate}[leftmargin=*, wide, labelwidth=!, itemindent=!]
    \item \emph{For every $i$, $1 \leq i \leq m$, there is $j \in \{1,2,3\}$ such that $c_i \comb l_{i,j} \in P_1$.} In other words, $P_1$ contains at least $m$ resources, and at least one for each clause. It is a direct consequence of $P \in \GOOD^\pref(G^\psi, 1)$, which implies $\out(P) \transform \gamma_1$.
    \item \emph{For every $i \in N^-$, we have $\out(P) \not \transform \gamma_i$.}
    Assume towards contradiction that there is $i \in N^-$ such that $\out(P) \transform \gamma_i$. That is, there is $p \in \VAR(\psi)$, such that  $\out(P) \transform \gamma_1 \comb p \comb \lnot p$. But it also means that there are two other players in $N^-$, also with objective $\gamma_1 \comb p \comb \lnot p$, which are potentially satisfied in profile $P$. Let $N^{-}(p)$ be this set of players. For every $i \in N^{-}(p)$, we have $P \in \ISAT(G^\psi,i)$, and ${\gamma_i}_{|\gamma_1} = c_1 \comb \ldots \comb c_m = \gamma_1$. Hence, together with player~$1$ there are four players who have a claim on a resource bundle $\gamma_1$.
    However, for each clause $c_i$, $1 \leq i \leq m$, there are at most three instances of the resource representing $c_i$ in $\flat^\bullet(P_1)$.
    Hence, $\out(P) \not\transform \gamma_1 \comb \gamma_1 \comb \gamma_1 \comb \gamma_1$.
    It means that $P$ is contentious, and contentious to player~$1$, and $P$ does not belong to $\GOOD^\pref(G^\psi, 1)$; A contradiction.
    Hence, there is no variable $p \in \VAR(\psi)$ and pair of clauses $c_i$ and $c_j$ ($0 \leq i,j \leq m$) such that both $c_i \comb p \in P_1$ and $c_j \comb \lnot p \in P_1$.
    \end{enumerate}
    We conclude that $v^P(\psi) = true$, and $\psi$ is satisfiable.
  \end{proof}

We can now characterize the complexity of $\NE^\pca$ and $\NE^\ppca$.
\begin{theorem}
\label{thm:NE-pca-ppca-coNP-c}
  When $\pref \in \{\pca, \ppca\}$, the problem $\NE^\pref$ is \coNP-complete.
\end{theorem}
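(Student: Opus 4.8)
The plan is to combine the already-established \coNP\ membership with a hardness reduction built directly on top of Lemma~\ref{lemma:epd-empty-npc-pca-ppca}. Membership in \coNP\ is exactly Proposition~\ref{prop:ne-in-conp-pca-ppca}, so the only thing left is \coNP-hardness, which I would obtain by reducing from the problem of deciding whether a 3CNF is \emph{unsatisfiable} (the complement of \SAT), which is \coNP-complete.

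First I would reuse verbatim the game $G^\psi$ constructed in the proof of Lemma~\ref{lemma:epd-empty-npc-pca-ppca}, and map each 3CNF $\psi$ to the instance $(G^\psi, (\emptyset, \ldots, \emptyset))$ of $\NE^\pref$; since $(\emptyset,\ldots,\emptyset) \in \ch(G^\psi)$ always and $G^\psi$ is computable from $\psi$ in polynomial time, this is a legitimate polynomial-time reduction. The key observation is that in $G^\psi$ every player of $N^{-}$ is a dummy player whose only available choice is $\emptyset$, so no player of $N^{-}$ can have a profitable deviation from any profile, in particular from $(\emptyset,\ldots,\emptyset)$. Therefore $(\emptyset,\ldots,\emptyset) \in \NE^\pref(G^\psi)$ if and only if player~$1$ has no profitable deviation from $(\emptyset,\ldots,\emptyset)$ according to $\pref$, i.e.\ if and only if $\EPD_\emptyset^\pref(G^\psi, 1)$ is false. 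By Lemma~\ref{lemma:epd-empty-npc-pca-ppca}, $\EPD_\emptyset^\pref(G^\psi, 1)$ holds precisely when $\psi$ is satisfiable; hence $(\emptyset,\ldots,\emptyset) \in \NE^\pref(G^\psi)$ if and only if $\psi$ is unsatisfiable. This gives the desired reduction, so $\NE^\pref$ is \coNP-hard, and together with Proposition~\ref{prop:ne-in-conp-pca-ppca} it is \coNP-complete.

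I do not expect any genuinely new obstacle, since all the combinatorial content has been front-loaded into Lemma~\ref{lemma:epd-empty-npc-pca-ppca}; the two points that need a moment of care are (i) phrasing the reduction from the \emph{complement} of \SAT, so that ``$\psi$ unsatisfiable'' lines up with ``$(\emptyset,\ldots,\emptyset)$ is a Nash equilibrium'' rather than its negation, and (ii) justifying that it suffices to consider only player~$1$'s deviations, which is immediate from the fact that every other player has the singleton action set $\{\emptyset\}$. Should a fully self-contained proof be preferred, one could instead inline the right-to-left and left-to-right arguments of Lemma~\ref{lemma:epd-empty-npc-pca-ppca} specialised to the profile $(\emptyset,\ldots,\emptyset)$ of $G^\psi$, but routing through the lemma keeps the argument short.
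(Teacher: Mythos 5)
Your proof is correct and follows essentially the same route as the paper: both obtain membership from Proposition~\ref{prop:ne-in-conp-pca-ppca} and derive hardness from Lemma~\ref{lemma:epd-empty-npc-pca-ppca} via the observation that when every player other than one is a dummy, $(\emptyset,\ldots,\emptyset)$ is a Nash equilibrium iff that one player has no profitable deviation from it. The only (cosmetic) difference is that the paper reduces the abstract problem $\EPD_\emptyset^\pref$ to $\co\NE^\pref$ by explicitly emptying the other players' endowments, whereas you compose directly with the \SAT{} reduction and exploit the fact that in $G^\psi$ the players of $N^{-}$ are already dummies.
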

  \begin{proof}
    Let $\pref$ be either $\pca$ or $\ppca$.
    $\NE^\pref$ is in \coNP (Prop.~\ref{prop:ne-in-conp-pca-ppca}).
    For hardness, we reduce $\EPD_\emptyset^\pref$ (which is $\NP$-complete according to Lemma~\ref{lemma:epd-empty-npc-pca-ppca}) to $\co\NE^\pref$.
    Let $G = (N,\vec{\gamma},\vec{\epsilon})$ be an RCG, and let $i$ be a player in $N$.
    We construct the RCG $G' = (N, \vec{\gamma}, \vec{\epsilon'})$, where $\epsilon'_i = \epsilon_i$ and $\epsilon'_j = \emptyset$ when $i \not = j$.
    That is, $G'$ is just like $G$, except that every player that is not player~$i$ is dummy. Crucially, all players in $G'$ keep their objectives from $G$. This means that for every profile $P = (C_i, \emptyset, \ldots, \emptyset) \in \ch(G')$, we have $P \in \ISAT(G',i)$ iff $P \in \ISAT(G,i)$, and $P \in \CONT(G',i)$ iff $P \in \CONT(G,i)$, for every player~$i \in N$.
    Clearly, we have $\EPD_\emptyset^\pref(G,i)$ iff $(\emptyset, \ldots, \emptyset) \not \in \NE^\pref(G')$.
  \end{proof}

\subsection{Contention-tolerance}
\label{sec:case-contention-tolerance}

To solve $\NE^\pct$, a non-deterministic algorithm analogous to the one suggested in the proof of Prop.~\ref{prop:ne-in-conp-pca-ppca} can be adopted, yielding a procedure in $\coNP$.
However, we show here that when the players tolerate contentious profiles, then solving the problem $\NE^\pref$ becomes simpler.
The following lemma (a reformulation of \cite[Lemma~5.7]{10.1093/logcom/exaa031}), is the basis of an efficient decision procedure for $\NE^\pct$:
\begin{lemma}
\label{lemma:ne-parsi-affin}
  Let $G = (N, \gamma_1, \ldots, \gamma_n, \epsilon_1, \ldots,
  \epsilon_n)$ be a resource contribution game. We have $P
  \not \in \NE^\pct(G)$ iff $\exists i \in N$ s.t.\ either:
  \begin{enumerate}[leftmargin=*]
  \item $P \not\in \ISAT(G,i)$ and $P_i \not = \emptyset$;
  \item $P \not\in \ISAT(G,i)$ and $(P_{-i},\epsilon_i) \in \ISAT(G,i)$;
  \item $P \in \ISAT(G,i)$ and $\exists A \in P_i$ s.t., $(P_{-i}, P_i \setminus \{A\}) \in \ISAT(G,i)$.
 \end{enumerate}
\end{lemma}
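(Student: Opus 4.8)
The plan is to push both sides of the equivalence through the definitions and reduce everything to a small case analysis, with Fact~\ref{fact:surplus-tolerant-monotony} doing the real work. First recall that under contention-tolerant preferences $\GOOD^\pct(G,i)=\ISAT(G,i)$, so by Definition~\ref{def:preferences} the statement ``$P\notin\NE^\pct(G)$'' unfolds to: there exist $i\in N$ and $P'_i\in\ch_i(G)$ such that at least one of the following holds: (i)~$P\notin\ISAT(G,i)$, $(P_{-i},P'_i)\notin\ISAT(G,i)$ and $P'_i\subset P_i$; (ii)~$P\notin\ISAT(G,i)$ and $(P_{-i},P'_i)\in\ISAT(G,i)$; (iii)~$P\in\ISAT(G,i)$, $(P_{-i},P'_i)\in\ISAT(G,i)$ and $P'_i\subset P_i$. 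I would then prove the two implications separately.

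For the direction from a profitable deviation to the three cases of the lemma, assume such $i,P'_i$ exist and split on which of (i)--(iii) applies. In case~(i), $P'_i\subset P_i$ forces $P_i\neq\emptyset$, which is item~1. In case~(ii), $P'_i\subseteq\epsilon_i$ (since $P'_i\in\ch_i(G)$) together with $(P_{-i},P'_i)\in\ISAT(G,i)$ and Fact~\ref{fact:surplus-tolerant-monotony} gives $(P_{-i},\epsilon_i)\in\ISAT(G,i)$, which is item~2. In case~(iii), the multiset difference $P_i\setminus P'_i$ is nonempty (strictness of the inclusion); picking any $A$ in it, a one-line multiplicity check shows $P'_i\subseteq P_i\setminus\{A\}$, and then Fact~\ref{fact:surplus-tolerant-monotony} yields $(P_{-i},P_i\setminus\{A\})\in\ISAT(G,i)$, i.e.\ item~3. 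Conversely, assuming one of items~1--3 holds for some $i$, I exhibit the deviation directly: for item~1 take $P'_i=\emptyset$, so that $\emptyset\subset P_i$ and $P\notin\ISAT(G,i)$ place us in case~(i) or~(ii) according to whether $(P_{-i},\emptyset)\in\ISAT(G,i)$; for item~2 take $P'_i=\epsilon_i$ (case~(ii)); for item~3 take $P'_i=P_i\setminus\{A\}$, which satisfies $P'_i\subset P_i$ because $A\in P_i$, giving case~(iii). In each subcase $P\prec^\pct_i(P_{-i},P'_i)$, hence $P\notin\NE^\pct(G)$.

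The only mildly delicate point, and the one I would spell out carefully, is the multiset bookkeeping in case~(iii) of the first implication: one must verify that deleting a single copy of an element $A$ of $P_i\setminus P'_i$ from $P_i$ still leaves a bag containing $P'_i$, so that monotonicity (Fact~\ref{fact:surplus-tolerant-monotony}) applies; everything else is a direct substitution into Definitions~\ref{def:good} and~\ref{def:preferences}. It is also worth noting that items~1 and~2 of the lemma may hold simultaneously (for instance when $P\notin\ISAT(G,i)$, $P_i\neq\emptyset$, and $(P_{-i},\epsilon_i)\in\ISAT(G,i)$), but since the lemma is an inclusive disjunction this overlap causes no trouble.
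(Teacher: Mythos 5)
Your proposal is correct and follows essentially the same route as the paper: unfold $\NE^\pct$ via $\GOOD^\pct(G,i)=\ISAT(G,i)$ into the three cases of Definition~\ref{def:preferences}, handle case~(iii) by extracting an element $A$ of the (nonempty) multiset difference $P_i\setminus P'_i$ so that $P'_i\subseteq P_i\setminus\{A\}$ and Fact~\ref{fact:surplus-tolerant-monotony} applies, and exhibit explicit deviations ($\emptyset$, $\epsilon_i$, $P_i\setminus\{A\}$) for the converse, which the paper dismisses as immediate. Your extra care with the multiset multiplicities in case~(iii) is a welcome tightening of a step the paper states somewhat loosely.
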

\begin{proof}
    Right to left is immediate. From left to right, suppose $P \not \in \NE^\pct(G)$. So there exists $i \in N$ and $C_i \in \ch_i(G)$ such that $P\prec_i (P_{-i}, C_i)$. There are three cases to consider:
    \begin{enumerate}[label=\roman*, leftmargin=*]
      \item $(P_{-i}, C_i) \not \in \ISAT(G,i)$, $P \not \in \ISAT(G,i)$ and $C_i \subset P_i$;
      \item $(P_{-i}, C_i) \in \ISAT(G,i)$, $P \not \in \ISAT(G,i)$;
      \item $(P_{-i}, C_i) \in \ISAT(G,i)$, $P \in \ISAT(G,i)$ and $C_i \subset P_i$.
      \end{enumerate}
    Suppose~(i) is the case. We have $P \not\in \ISAT(G,i)$, and since $C_i \subset P_i$ we also have $P_i \not = \emptyset$. So item~(1) of the lemma is satisfied.
    Suppose~(ii) is the case. We have, $P \not \in \ISAT(G,i)$. Plus, since $(P_{-i}, C_i) \in \ISAT(G,i)$ we have $\out((P_{-i}, C_i)) \transform \gamma_i$, or $\flat(\gamma_i) \subseteq \flat^\bullet(\out((P_{-i}, C_i)))$. Since $C_i \subseteq \epsilon_i$, we also have $\flat(\gamma_i) \subseteq \flat^\bullet(\out((P_{-i}, \epsilon)))$, and thus  $(P_{-i},\epsilon_i) \in \ISAT(G,i)$. So item~(2) of the lemma is satisfied.
    Suppose~(iii) is the case. We have $P \in \ISAT(G,i)$. Plus, $C_i \subset P_i$, so there is a resource $A$ such that $A \in P_i$ and $A \not \in C_i$. We have $C_i \subseteq P_i \setminus \{A\}$. We also have $(P_{-i}, C_i) \in \ISAT(G,i)$. Applying Fact~\ref{fact:surplus-tolerant-monotony}, we obtain that $(P_{-i}, P_i \setminus \{A\}) \in \ISAT(G,i)$ for some $A \in P_i$. So item~(3) of the lemma is satisfied.
    \end{proof}

This comes as an immediate consequence; a particular case of~\cite[Prop.~4.6]{10.1093/logcom/exaa031}.
\begin{theorem}\label{thm:NE-in-P}
  The problem $\NE^\pct$ is in \PTIME.
\end{theorem}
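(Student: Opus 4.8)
The plan is to derive tractability directly from Lemma~\ref{lemma:ne-parsi-affin}, which characterizes non-membership in $\NE^\pct(G)$ by the existence of a player~$i$ witnessing one of three conditions. Since there are only $n = |N|$ players, it suffices to show that, for each fixed player~$i$, each of the three conditions can be checked in polynomial time; then we iterate over all players and all relevant witnesses.

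First I would handle conditions (1) and (2). Condition (1) asks whether $P \not\in \ISAT(G,i)$ and $P_i \neq \emptyset$: by Prop.~\ref{prop:ptime-satp-good}, testing $P \in \ISAT(G,i)$ is in \PTIME, and checking $P_i \neq \emptyset$ is trivial. Condition (2) asks whether $P \not\in \ISAT(G,i)$ and $(P_{-i}, \epsilon_i) \in \ISAT(G,i)$; the profile $(P_{-i}, \epsilon_i)$ is obtained in linear time by substituting $i$'s full endowment, and again membership in $\ISAT(G,i)$ is in \PTIME. Condition (3) asks whether $P \in \ISAT(G,i)$ and there exists $A \in P_i$ with $(P_{-i}, P_i \setminus \{A\}) \in \ISAT(G,i)$: here I would simply enumerate the (at most $|\epsilon_i|$, hence polynomially many) resource bundles $A$ occurring in $P_i$, form the profile $(P_{-i}, P_i \setminus \{A\})$ in linear time, and test membership in $\ISAT(G,i)$ for each. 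All of this is a polynomial number of calls to polynomial-time subroutines, so the whole test runs in polynomial time; if no player witnesses any condition, we output that $P \in \NE^\pct(G)$, and otherwise that it is not.

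There is essentially no hard obstacle: the real work was already done in establishing Lemma~\ref{lemma:ne-parsi-affin} and Prop.~\ref{prop:ptime-satp-good}, and the theorem is the routine observation that the characterization has only polynomially many candidate witnesses, each checkable efficiently. The one point deserving a sentence of care is that in condition (3) we must iterate over \emph{occurrences} of elements in $P_i$ (as a multiset it has size at most $|\epsilon_i|$), not over arbitrary subsets, so the search space stays linear rather than exponential; this is exactly why the contention-tolerant case is easier than the contention-averse one, where the proof of Theorem~\ref{thm:NE-pca-ppca-coNP-c} shows one genuinely needs to search over all of $\ch_i(G)$.
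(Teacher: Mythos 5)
Your proposal is correct and follows essentially the same route as the paper's proof: apply Lemma~\ref{lemma:ne-parsi-affin} to reduce the question to checking, for each of the $n$ players, three conditions that are each decidable in polynomial time via Prop.~\ref{prop:ptime-satp-good}. Your additional remark that condition (3) only requires iterating over the polynomially many occurrences in the multiset $P_i$ rather than over all of $\ch_i(G)$ is a worthwhile clarification that the paper leaves implicit.
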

\begin{proof}
    Lemma~\ref{lemma:ne-parsi-affin} yields a deterministic algorithm. For every $i \in N$, check whether one of the three conditions holds, and if one condition holds, then return false. Otherwise return true. Each condition can be checked efficiently (Prop.~\ref{prop:ptime-satp-good}). Hence, the algorithm runs in polynomial time.
\end{proof}

From Fact~\ref{fact:basic-properties}.\ref{fact:private-cont-at-least-two}, we know that if there is only one player, there cannot be any contention. This follows from Theorem~\ref{thm:NE-in-P}.
\begin{corollary}\label{corr:one-player-P}
When there is only one player the problem $\NE^\pref$ is in \PTIME, even when $\pref \in \{\pca, \ppca\}$.
\end{corollary}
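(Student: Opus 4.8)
The plan is to reduce the one-player instances of $\NE^\pca$ and $\NE^\ppca$ to the already-settled problem $\NE^\pct$, by showing that when $|N| = 1$ all three preference kinds collapse onto the contention-tolerant one. First I would observe that contention is vacuous with a single player: by Fact~\ref{fact:basic-properties}.\ref{fact:private-cont-at-least-two}, if some profile $P$ belonged to $\CONT(G,i)$ there would have to be a player $j \neq i$ with $P \in \CONT(G,j)$, which is impossible when $N = \{i\}$; hence $\CONT(G,i) = \emptyset$, and then Fact~\ref{fact:basic-properties}.\ref{fact:cont-ind-cont-union} gives $\CONT(G) = \bigcup_{j \in N} \CONT(G,j) = \emptyset$ as well.

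Next I would feed these emptiness facts into Definition~\ref{def:good}: since neither $P \in \CONT(G)$ nor $P \in \CONT(G,i)$ can ever hold, the clauses for $\pca$ and $\ppca$ reduce to the clause for $\pct$, so for the unique player~$i$ we get $\GOOD^\pca(G,i) = \GOOD^\ppca(G,i) = \GOOD^\pct(G,i) = \ISAT(G,i)$. Because Definition~\ref{def:preferences} refers to the preference kind $\pref$ only through the set $\GOOD^\pref(G,i)$, the three strict-preference relations $\prec^\pca_i$, $\prec^\ppca_i$ and $\prec^\pct_i$ become literally the same relation, and therefore $\NE^\pca(G) = \NE^\ppca(G) = \NE^\pct(G)$ for every one-player RCG $G$.

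Finally I would invoke Theorem~\ref{thm:NE-in-P}, which puts $\NE^\pct$ in \PTIME: deciding $P \in \NE^\pca(G)$ or $P \in \NE^\ppca(G)$ for a one-player game is, by the previous paragraph, the very same question as deciding $P \in \NE^\pct(G)$, so both problems, restricted to one-player instances, are in \PTIME as well. I do not expect any genuine obstacle here — the entire content is the observation that (public and private) contention is empty with a single player, so the argument is a short chaining of the two relevant parts of Fact~\ref{fact:basic-properties}, Definitions~\ref{def:good} and~\ref{def:preferences}, and Theorem~\ref{thm:NE-in-P}; the only point worth stating carefully is that the collapse of the $\GOOD$ sets really does force the preference relations (and hence the Nash-equilibrium sets) to coincide, which is immediate from the form of Definition~\ref{def:preferences}.
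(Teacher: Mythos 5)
Your argument is correct and follows exactly the paper's route: the paper likewise observes via Fact~\ref{fact:basic-properties}.\ref{fact:private-cont-at-least-two} that no contention can arise with a single player, so the preference kinds collapse and Theorem~\ref{thm:NE-in-P} applies. Your write-up merely makes explicit the intermediate steps (emptiness of $\CONT(G,i)$ and $\CONT(G)$, the collapse of the $\GOOD$ sets, and hence of the preference relations) that the paper leaves implicit.
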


The problem $\NE^\pct$ being tractable, we briefly turn our attention to the problem of deciding\linebreak whether an RCG admits a Nash equilibrium, according to $\pct$.
Given a resource contribution game $G$, the problem $\ENE^\pref$ asks whether $\NE^\pref(G) \not = \emptyset$.

\noindent\begin{minipage}{\linewidth}
\medskip
\hrule height 1pt
\smallskip
\textsf{NASH EQUILIBRIUM EXISTENCE ($\ENE^\pref$)}
\smallskip
\hrule
\begin{description}
\item[{\bf (in)}] A resource contribution game $G$.
\item[{\bf (out)}] $\NE^\pref(G) \not = \emptyset$?
\end{description}
\hrule
\medskip
\end{minipage}

By Theorem~\ref{thm:NE-in-P}, $\ENE^\pct$ is clearly in \NP.
We can show that $\ENE^\pct$ is \NP-hard via a reduction from \SAT.
\begin{theorem}
\label{thm:ENE-ct-NPc}
  The problem $\ENE^\pct$ is \NP-complete.
\end{theorem}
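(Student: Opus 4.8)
The plan is to establish \NP-membership (already observed: by Theorem~\ref{thm:NE-in-P} a nondeterministic algorithm can guess a profile $P\in\ch(G)$ and check $P\in\NE^\pct(G)$ in polynomial time) and then \NP-hardness by a reduction from \SAT. Given a 3CNF $\psi=c_1\wedge\dots\wedge c_m$ over $\VAR(\psi)$, with $c_i=l_{i,1}\vee l_{i,2}\vee l_{i,3}$ and (without loss of generality) $m\ge 1$, I would build an RCG $G^\psi$ over a resource domain with a fresh atom $c_i$ for each clause, a fresh atom for each literal of $\psi$ (so the two literals $p$ and $\lnot p$ of a variable $p$ are distinct atoms, each shared by the clauses containing it), and further fresh atoms $Y_p$ (one per variable), $Q$, $A$, $S$. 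The players are: a \emph{solver} $s$ with $\gamma_s=c_1\comb\dots\comb c_m$ and $\epsilon_s=\{\,c_i\comb l_{i,j}\mid 1\le i\le m,\ 1\le j\le3\,\}$ (as in the proof of Lemma~\ref{lemma:epd-empty-npc-pca-ppca}); for each variable $p$, a \emph{detector} $m_p$ with $\gamma_{m_p}=p\comb\lnot p\comb Y_p$ and $\epsilon_{m_p}=\{\,Y_p\comb Q\,\}$; and two \emph{cycle players} $w$ with $\gamma_w=Q\comb S\comb S$, $\epsilon_w=\{\,A\comb S\,\}$, and $x$ with $\gamma_x=A$, $\epsilon_x=\{\,A\comb S\,\}$. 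The construction is plainly polynomial.

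Three observations, all instances of Lemma~\ref{lemma:ne-parsi-affin}, drive the argument. First (normal form of $s$): in any $P\in\NE^\pct(G^\psi)$, player $s$ is potentially satisfied (else case~2 of the lemma applies, since $\flat^\bullet(\epsilon_s)$ contains every $c_i$); as the $c_i$ occur only in $\epsilon_s$, the contribution $P_s$ supplies every $c_i$, and by case~3 it contains no redundant bundle, so $P_s=\{\,c_i\comb l_{i,\sigma(i)}\mid 1\le i\le m\,\}$ for some selector $\sigma$; call $\sigma$ \emph{consistent} if no variable has both of its literals among the selected $l_{i,\sigma(i)}$. Second (detectors): since $Y_p$ and $Q$ occur only in $\epsilon_{m_p}$ and in $\gamma_w$, at a Nash equilibrium $m_p$ contributes $\{\,Y_p\comb Q\,\}$ exactly when both $p$ and $\lnot p$ are in the pool (otherwise case~2 applies: adding $\epsilon_{m_p}$ would satisfy $m_p$ potentially) and contributes $\emptyset$ otherwise (else case~1 applies); hence $Q\in\out(P)$ iff $\sigma$ is inconsistent. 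Third (cycle gadget): the two-player subgame on $\{w,x\}$, parametrised by the number of copies of $Q$ in the common pool (these atoms are otherwise untouched by $s$ and the $m_p$), has a Nash equilibrium precisely when that number is $0$ --- with no $Q$, $P_x=\{\,A\comb S\,\},P_w=\emptyset$ is an equilibrium; with a $Q$ present, the four joint choices cycle under best response exactly as in Proposition~\ref{prop:no-nash}: from $(\emptyset,\emptyset)$, $x$ contributes to self-satisfy; then $w$ contributes, now genuinely needing two copies of $S$ and getting the second from $x$'s by-product; then $x$ withdraws as its contribution has become redundant; then $w$ withdraws as it is unsatisfied and wasteful; so no equilibrium exists.

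Combining these yields the equivalence. If $\psi$ has a satisfying valuation $v$, pick for each clause a literal true under $v$; the induced $\sigma$ is consistent, so the profile with $P_s$ as above, every $P_{m_p}=\emptyset$, $P_x=\{\,A\comb S\,\}$, and $P_w=\emptyset$ is a Nash equilibrium of $G^\psi$. Conversely, from any $P\in\NE^\pct(G^\psi)$ extract $\sigma$ via the first observation; if $\sigma$ were inconsistent, the second observation would force $Q\in\out(P)$ and the third would give $w$ or $x$ a profitable deviation in $G^\psi$ (their incentives depending only on the $\{A,S,Q\}$-part of the outcome), contradicting $P\in\NE^\pct(G^\psi)$; so $\sigma$ is consistent, and setting $v(p)=true$ iff $p$ is a selected literal gives a well-defined valuation satisfying every clause. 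Hence $\NE^\pct(G^\psi)\neq\emptyset$ iff $\psi\in\SAT$.

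The one genuinely delicate point --- and the expected main obstacle --- is the cycle gadget of the third observation: a constant-size subgame whose equilibrium set is empty exactly in the presence of a ``poison'' atom $Q$. The design adapts Proposition~\ref{prop:no-nash} so that $Q$, instead of resolving the deadlock, completes it: with $Q$ in the pool, $w$ truly requires two units of $S$, one of them being the by-product that $x$ must release while self-satisfying, which reinstates the never-terminating best-response cycle; without $Q$, $w$ can never reach its goal and stays at $\emptyset$, leaving $x$ to self-satisfy unhindered. Verifying the third observation is a small finite case check, and the first two observations together with the final equivalence are routine applications of Lemma~\ref{lemma:ne-parsi-affin}.
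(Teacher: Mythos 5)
Your proof is correct, and it follows the same overall strategy as the paper's: membership via Theorem~\ref{thm:NE-in-P}, and hardness by reducing from \SAT with the identical ``solver'' player (goal $c_1 \comb \ldots \comb c_m$, endowment $\{c_i \comb l_{i,1}, c_i \comb l_{i,2}, c_i \comb l_{i,3}\}$), so that at any equilibrium the solver's contribution encodes a choice of one literal per clause, and an inconsistent choice must trigger a never-terminating best-response cycle among auxiliary players. Where you differ is in how that cycle is realized. The paper attaches the cycle directly to each variable: two players $id_1(p)$ and $id_2(p)$ with goals $p \comb \lnot p \comb p'$ and $p \comb \lnot p \comb p' \comb p'$ and endowments $\{p'\}$, who chase each other through the four states of the $p'$-count exactly when both literals of $p$ appear in the pool, and who harmlessly sit at $\emptyset$ otherwise. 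You instead factor the construction into a per-variable detector $m_p$ whose only role is to convert ``both literals of $p$ present'' into the emission of a global poison atom $Q$, plus one constant-size two-player gadget (a reworking of the game in Proposition~\ref{prop:no-nash}) whose equilibrium set is nonempty iff no $Q$ is in the pool. Your version costs a few extra atom types and one extra layer of indirection, but it cleanly separates detection from equilibrium destruction and reuses the no-equilibrium gadget as a black box; the paper's is slightly more economical (no global atoms, and the per-variable players double as both detector and cycle). In both cases the verification reduces to the same routine finite case checks via Lemma~\ref{lemma:ne-parsi-affin}, and your case analysis of the $\{w,x\}$ gadget (including the observation that the incentives of $w$ and $x$ depend only on the $A$-, $S$- and $Q$-parts of the outcome, which are untouched by the solver and only fed $Q$ by the detectors) is sound.
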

  \begin{proof}
    Membership is immediate from Theorem~\ref{thm:NE-in-P}.
    For hardness, we reduce \SAT. Let $\psi = c_1 \land \ldots \land c_m$ be a 3CNF over the set of propositional variables $\VAR(\psi)$, with $c_i = l_{i,1} \lor l_{i,2} \lor l_{i,3}$, where every literal $l_{i,j}$ is in $\VAR(\psi) \cup \{\lnot p \mid p \in \VAR(\psi)\}$.
    We build the RCG $G^\psi = (\{1\} \cup N_1 \cup N_2, \vec{\gamma}, \vec{\epsilon})$ over $\Res = \{c_i \mid 1 \leq i \leq m  \} \cup \{p, \lnot p, p' \mid p \in \VAR(\psi)\}$ as follows:
    \begin{itemize}[leftmargin=*, wide, labelwidth=!, labelindent=0pt, itemindent=!]
    \item Player~$1$ wants $\gamma_1 = c_1 \comb \ldots \comb c_m$, and is endowed with $\epsilon_1 = \{c_i \comb l_{i,1}, c_i \comb l_{i,2}, c_i \comb l_{i,3} \mid 1 \leq i \leq m\}$.
    \item $N_1 = \{ 2, \ldots, 1 + \Card{\VAR(\psi)} \}$
    \item For every variable $p \in \VAR(\psi)$, the set $N_1$ contains one player $id_1(p)$ with objective $\gamma_{id_1(p)} = p \comb \lnot p \comb p'$, and with the endowment $\epsilon_{id_1(p)} = \{p'\}$.
(The function $id_1$ is an arbitrary bijection between $\VAR(\psi)$ and $N_1$.)
   Every member of $N_1$ wants a resource representing a variable, the negation of the variable, and a distinguished copy of the variable. They are endowed with one occurrence of the distinguished copy.
    \item $N_2 = \{ 2 + \Card{\VAR(\psi)}, \ldots, 1 + 2 \cdot \Card{\VAR(\psi)} \}$     
    \item For every variable $p \in \VAR(\psi)$, the set $N_2$ contains one player $id_2(p)$ with objective $\gamma_{id_2(p)} = p \comb \lnot p \comb p' \comb p'$, and with the endowment $\epsilon_{id_2(p)} = \{p'\}$.
(The function $id_2$ is an arbitrary bijection between $\VAR(\psi)$ and $N_2$.)
      Every member of $N_2$ wants a resource representing a variable, the negation of the variable, and two occurrences of a distinguished variable. They are endowed with one occurrence of the distinguished copy.
    \end{itemize}
    We can show that $\ENE^\pct(G^\psi)$ iff  $\psi$ is satisfiable.
    The idea of the reduction is to have player~$1$ wanting every clause $c_i$. With his endowment, he can achieve it alone, and achieve a profile $P$ such that $P \in \GOOD^\pct(G^\psi, 1)$. But since every $c_i$ in his endowment comes simultaneously with a literal, it might be that both $p$ and $\lnot p$ appear in his choice, for some $p \in \VAR(\psi)$.
    When it happens, it yields a cycle of profitable deviations, by player~$id_1(p)$ in $N_1$ and by player~$id_2(p)$ in $N_2$ alternately. 
 
    When for some $p \in \VAR(\psi)$ both $p$ and $\lnot p$ are in the profile, there are four possible cases, all leading to the existence of a profitable deviation for a player in $N_1$ or $N_2$.
    \begin{enumerate}[leftmargin=*]
    \item There is no occurrence of $p'$. Then, player~$id_1(p)$ currently plays $\emptyset$, and does not achieve his goal. He thus has a profitable deviation in playing $\{p'\}$. We end in the case of item~$2$.
    \item There is only one occurrence of $p'$, provided by player~$id_1(p)$. Then, player~$id_2(p)$ is currently playing his empty choice, and does not achieve his goal. Player~$id_2(p)$ thus has a profitable deviation in playing $\{p'\}$. We end in the case of item~$3$.
    \item There are two occurrences of $p'$. Then, player~$id_1(p)$ currently plays $\{p'\}$, and has a profitable deviation in parsimoniously playing $\emptyset$. We end in the case of item~$4$.
    \item There is only one occurrence of $p'$, provided by player~$id_2(p)$. Player~$id_2(p)$ does not achieve his goal, and has a profitable deviation in parsimoniously playing $\emptyset$. We end in the case of item~$1$.
    \end{enumerate}
    
    When for none of $p \in \VAR(\psi)$, we have both $p$ and $\lnot p$ in the profile, there are three cases.
    \begin{enumerate}[leftmargin=*]
    \item Not every $c_i$ appears in the profile, so player~$1$ has a profitable deviation in playing $\epsilon_1$. We end up in item~$3$ if $\psi$ is trivial, or in one of the cases above if it is not.
    \item Every $c_i$ appears in the profile and there is at least one occurrence of $p'$ for some $p \in \VAR(\psi)$.
      None of player~$id_1(p)$ and player~$id_2(p)$ achieve his goal ($p$ and $\lnot p$ are not both present). Any one of them who provided $p'$ has a profitable deviation in parsimoniously playing $\emptyset$.
    \item Every $c_i$ appears in the profile and there is no occurrence of $p'$ for some $p \in \VAR(\psi)$. Let $v$ be the valuation for $\psi$ such that $v(p) = true$ iff $p$ is in the outcome of the profile. Clearly, $v(\psi) = true$.
    \end{enumerate}

    So a Nash equilibrium exists in $G^\psi$ iff there is a profile $(C_1, \emptyset, \ldots, \emptyset)$, where $\flat^\bullet(C_1)$ contains at least one occurrence of every clause $c_i$, and does not contain both $p$ and $\lnot p$ for any $p \in \VAR(\psi)$. This is possible iff there is a satisfying valuation of $\psi$.
  \end{proof}

\section{RCGs with endowments as bags of atomic resources}
\label{sec:irgbar}

Lemma~\ref{lemma:epd-empty-npc-pca-ppca} is central to prove that $\NE^\pref$ is \coNP-complete in presence of contention-averse preferences. But on inspection, one can observe that the reduction from \SAT builds RCGs where the endowments contain well chosen \emph{bundles} of resources. The proof does not carry over to the case of RCGs where the endowments are bags of atomic resources.
We introduce this class of RCGs.
\begin{definition}
An \emph{RCG with endowments as bags of atomic resources} (RCGBAR) is an RCG\linebreak $G = (N,\gamma_1, \ldots, \gamma_n, \epsilon_1, \ldots, \epsilon_n)$ such that for every $i \in N$, and for every $A \in \epsilon_i$, $A \in \Res$.
\end{definition}
RCGBARs are adequate whenever the resources in the objectives and the resources in the endowments have the same granularity. E.g, on a market, the providers have bottles of wines in their endowment and acquirers have objectives about bottles of wine. The bottle is not conceived as a bundle, but as an atomic resource. On the other hand, as a variant of Example~\ref{ex:bringyourownfood} with you, Bruno, and Carmen, as the objectives are about glasses of wine, it is conceivable that the glasses of wine, and not the bottle, is the relevant resource to consider to model the interaction.
\begin{example}\label{ex:rcgbar-bringyourownfood}
  You participate in a bring-your-own-food cooking party. You bring a bottle of white wine which contains four glasses, and your objective is to enjoy two glasses.  Bruno and Carmen both want a portion of risotto, each portion needing a portion of rice, one onion, and a glass of white wine. Bruno brings three portions of rice, and Carmen brings two onions.
  As before, a glass of wine is represented by $\var{w}$, one portion of rice is represented by $\var{r}$, and an onion is represented by $\var{o}$.
  The game is formally defined as:
$\Gexrcgbarbyof = (\{y, b, c\},\gamma_y = \bundle{\var{w}, \var{w}}, \gamma_b = \bundle{\var{w}, \var{r}, \var{o}}, \gamma_c = \bundle{\var{w}, \var{r}, \var{o}}, \epsilon_y = \{\var{w}, \var{w}, \var{w}, \var{w} \}, \epsilon_b = \{\var{r}, \var{r}, \var{r}\}, \epsilon_c = \{\var{o}, \var{o}\})$.
  For instance, with contention-averse preferences, the profile $(\{\var{w}, \var{w}\}, \emptyset, \emptyset)$ is a Nash equilibrium where you come back home with half a bottle of wine, and $(\{\var{w}, \var{w}, \var{w}, \var{w}\}, \{\var{r}, \var{r}\}, \{\var{o}, \var{o}\})$ is a Nash equilibrium that is good for everyone and you come home empty-handed.
\end{example}
\begin{example}\label{ex:grandfather}
A grandfather owns $4$ parcels of land $\var{x}$ by the river that he could give to his grandchildren. Anna already owns $3$ parcels of $\var{y}$. Brian owns $1$ parcel of $\var{x}$ and $2$ parcels of $\var{z}$. Caren owns $1$ parcel of $\var{z}$. They all agree to pool their assets in the hope to obtain the parcels they like.
Ann and Brian would like $\bundle{\var{x}, \var{y}, \var{z}}$. Caren would like $\bundle{\var{x}, \var{z}}$. The grandfather does not want anything, that is $\emptybundle$.
This yields the game $\Gexrcgbargf$.
In $\NE^\pca(\Gexrcgbargf)$, there are two Nash equilibria  $(\emptyset, \emptyset, \emptyset, \emptyset)$, and $(\{\var{x}, \var{x}\}, \{\var{y}, \var{y}\}, \{\var{x}, \var{z}, \var{z}\}, \{\var{z}\})$.
\end{example}

In this section, we show that in RCGBARs, deciding whether a profile is a Nash equilibrium is in \PTIME, even in presence of (public and private) contention-averse preferences. (In the case of contention-tolerant preferences, it already follows from the general case covered by Theorem~\ref{thm:NE-in-P}.)  We show that there is always a Nash equilibrium when in presence of public contention-averse preferences, while it is not the case for contention-tolerant preferences and private contention-averse preferences.
%


\medskip

The notion of minimal profitable deviation will help simplify the presentation. A minimal profitable deviation is a choice deviation to a strictly preferred profile, that is minimal for multiset inclusion.
\begin{definition}
Let $G = (N,\gamma_1, \ldots, \gamma_n, \epsilon_1, \ldots, \epsilon_n)$ be an RCG, $P$ be a profile in $\ch(G)$, and $i$ a player in $N$. A \emph{minimal profitable deviation} for player~$i$, according to $\pref$, from the profile $P$ is a choice $P'_i \in \ch_i(G)$ such that $P \prec^\pref_i (P_{-i}, P'_i)$ and there is no $P''_i \in \ch_i(G)$ such that $P''_i \subset P'_i$ and $P \prec^\pref_i (P_{-i}, P''_i)$.
\end{definition}

Provided a profitable deviation exists, a minimal profitable deviation also exists. 
Instrumentally, we are going to solve in RCGBARs a direct generalization of $\EPD^\pref_\emptyset$ to arbitrary profiles. For convenience, $\EPD^\pref$ is formulated as a function problem, and returns a witness.

\noindent\begin{minipage}{\linewidth}
\medskip
\hrule height 1pt
\smallskip
\textsf{FIND PROFITABLE DEVIATION ($\EPD^\pref$)}
\smallskip
\hrule
\begin{description}
\item[{\bf (in)}] A resource contribution game $G$, a player~$i$, and a profile $P$.
\item[{\bf (out)}] Is there a profitable deviation according to $\pref$ from $P$ for player~$i$? If yes, find a minimal one.
\end{description}
\hrule
\medskip
\end{minipage}

In a profile of an RCGBAR, a \emph{deficit} from the point of view of a player, is the amount of atomic resources necessary to add to a profile in order to avoid the resource contention.
\begin{definition}
 Let $G = (N,\gamma_1, \ldots, \gamma_n, \epsilon_1, \ldots, \epsilon_n)$ be an RCGBAR, $i$ be a player in $N$, and $P$ be a profile in $\ch(G)$. Let $\pref \in \{\pca, \ppca\}$.
 \begin{itemize}
 \item The set of \emph{contenders} is 
   \[\contenders(G, P, \pref) = \left\{j \in N \mid P \in \CONT(G,j) \right\} \enspace ;\]
 \item  The multiset of \emph{contended resources} is
   \[
   \contGoals(G, P, \pref) = \biguplus_{j \in \contenders(G, P, \pref)} 
   \begin{cases}
     \flat\left(\gamma_j\right) & \text{ if } \pref = \pca\\
     \flat\left({\gamma_j}_{\mid \gamma_i}\right) & \text{ if } \pref = \ppca \enspace ;\\
   \end{cases} \]   
 \item The \emph{deficit} from the point of view of player~$i$ in the profile $P$ of the game $G$, is \[\Deficit(G, i, P, \pref) = \contGoals(G, P, \pref) \setminus \out(P) \enspace .\]
 \end{itemize}
\end{definition}

\begin{algorithm*}
\caption{
Algorithm to solve $\EPD^\pref(G,i,(P_1, \ldots, P_i = \emptyset, \ldots, P_n))$
in RCGBARs with $\pref \in \{\pca, \ppca\}$
}
\label{alg:RCGBAR-EPD-empty}
\begin{algorithmic}[1]
  \\$\miss = \flat(\gamma_i) \setminus \flat^\bullet(\out((P_{-i}, \emptyset)))$ \Comment{missing resources in $(P_{-i}, \emptyset)$ to satisfy $i$}
  \\\label{alg:test-cannot-potentially-sat-self}if ($\miss \not \subseteq \epsilon_i$): \Comment{$i$ cannot potentially satisfy himself with a unilateral deviation}
  \\\indent\indent\label{line:miss-too-much} return (false,  $\emptyset$) 
  \\\label{alg:potentially-sat}$P_i = \miss$ \Comment{$i$'s current new choice}  
  \\$E_i = \epsilon_i \setminus P_i$ \Comment{current surplus at $i$'s disposition} 
  \\$\deficit = \Deficit(G, i, (P_{-i}, P_i), \pref)$ 
  \\while ($\deficit \not = \emptyset$ and $\deficit \subseteq E_i$): \Comment{there is a deficit that can be covered}
  \\\indent\indent $P_i = P_i \uplus \deficit$
  \\\indent\indent $E_i = E_i \setminus \deficit$  
  \\\indent\indent $\deficit = \Deficit(G, i, (P_{-i}, P_i), \pref)$
  \\\label{alg:last-line}return (~~~$P_i \not = \emptyset$ and $(P_{-i}, P_i) \in \GOOD^\pref(G,i)$~~~,~~~$P_i$~~~) 
\end{algorithmic}
\end{algorithm*}

Trying to make a player potentially satisfied and to cover the deficit in a profile with the resources of the player is exactly what Algorithm~\ref{alg:RCGBAR-EPD-empty} is doing.
Algorithm~\ref{alg:RCGBAR-EPD-empty} \emph{partially} (when the choice from which to find a profitable deviation is empty) solves $\EPD^\pref$ in RCGBARs with contention-averse preferences. 
It returns a pair of values. 
With its first returned value, the algorithm is correct to decide whether a player has a profitable deviation in an RCGBAR, from a profile in which \emph{his} choice is $\emptyset$.
The algorithm's second returned value is a minimal profitable deviation if it exists.
It runs in polynomial time.
\begin{lemma}\label{lem:algo-RCGBAR-EPD-empty}
  Algorithm~\ref{alg:RCGBAR-EPD-empty} returns true on the instance $(G,i,(P_1, \ldots, P_i = \emptyset, \ldots, P_n))$ iff player~$i$ has a profitable deviation from $(P_1, \ldots, P_i = \emptyset, \ldots, P_n)$ in the RCGBAR $G$.
  If it returns true, it also returns a minimal profitable deviation.
  Moreover, the algorithm runs in polynomial time.
  \end{lemma}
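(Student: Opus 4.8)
The plan is to establish three things separately: polynomial running time, soundness (if the algorithm returns true, the returned $P_i$ is genuinely a profitable deviation, and moreover a minimal one), and completeness (if a profitable deviation exists from $(P_{-i},\emptyset)$, the algorithm returns true). For the running time, I would observe that each iteration of the while loop strictly increases $P_i$ (since $\deficit\neq\emptyset$ and $\deficit\subseteq E_i$ means at least one resource instance moves from $E_i$ into $P_i$), so there are at most $|\epsilon_i|$ iterations; each iteration recomputes $\Deficit$, which by Prop.~\ref{prop:ptime-satp-good} (computing $\CONT(G,j)$, restrictions ${\gamma_j}_{\mid\gamma_i}$, and multiset differences) is polynomial; the initial $\miss$ computation and the final $\GOOD^\pref$ test are polynomial for the same reasons.

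For soundness, suppose the algorithm returns true. Then it did not exit at line~\ref{line:miss-too-much}, so $\miss\subseteq\epsilon_i$; the final returned $P_i$ satisfies $P_i\neq\emptyset$ and $(P_{-i},P_i)\in\GOOD^\pref(G,i)$ by the guard on line~\ref{alg:last-line}. Since the input profile has $i$'s choice equal to $\emptyset$ and $P_i\supsetneq\emptyset$, we have $\emptyset = P_i^{\text{old}} \subsetneq P_i$, so by clause~(3) of Definition~\ref{def:preferences} (or clause~(2) if $(P_{-i},\emptyset)\notin\GOOD^\pref$, which is forced because if $(P_{-i},\emptyset)$ were already good then $\emptyset$ would be a profitable deviation only in the trivial sense — here I need to be a little careful and note that when $(P_{-i},\emptyset)$ is itself good for $i$, no deviation to a nonempty $P_i$ can be profitable, so I should check the algorithm correctly returns false in that case; indeed then $\miss=\emptyset$, $P_i$ stays at $\emptyset$ through the loop unless a deficit forces resources in, but even so the final conjunct $P_i\neq\emptyset$... this edge case is exactly the subtle point and must be handled). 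Modulo that, $(P_{-i},\emptyset)\prec^\pref_i(P_{-i},P_i)$, so $P_i$ is a profitable deviation. For minimality I would argue that every resource instance placed into $P_i$ is forced: the $\miss$ part is needed just to reach $\ISAT(G,i)$ (removing any instance breaks potential satisfaction by definition of $\miss$ and the atomicity of endowments), and each increment $\deficit$ added in the loop consists of contended resources not yet in $\out$, which are exactly the resources that must be added to escape $\CONT(G)$ (resp. $\CONT(G,i)$); here I would invoke a monotonicity-style observation that adding resources to cover the current deficit can only shrink future deficits, but cannot be circumvented by a different, incomparable choice, because the set of contenders and their goals do not depend on $P_i$ beyond what $\out$ already forces.

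For completeness, suppose player~$i$ has a profitable deviation $C_i$ from $(P_{-i},\emptyset)$. By clauses~(1)--(3) of Definition~\ref{def:preferences} applied with the old choice $\emptyset$, since $\emptyset\subseteq C_i$ always, profitability forces $(P_{-i},C_i)\in\GOOD^\pref(G,i)$ and $(P_{-i},\emptyset)\notin\GOOD^\pref(G,i)$ (the case $C_i=\emptyset$ is impossible since then the two profiles coincide). Hence $(P_{-i},C_i)\in\ISAT(G,i)$, which gives $\flat(\gamma_i)\subseteq\flat^\bullet(\out((P_{-i},C_i)))$, and since $C_i\subseteq\epsilon_i$ consists of atomic resources, it follows $\miss\subseteq\epsilon_i$, so the algorithm does not return at line~\ref{line:miss-too-much}. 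Then I claim the $P_i$ the algorithm constructs satisfies $P_i\subseteq C_i$: $\miss\subseteq C_i$ because the resources missing to satisfy $i$ must all be supplied by $i$'s deviation (no other player changed), and inductively, at each loop step the current deficit — contended resources absent from the current outcome — must also be present in $C_i$, because $(P_{-i},C_i)\notin\CONT(G)$ (resp. $\notin\CONT(G,i)$) and the contender set and contended-goal multiset relative to $(P_{-i},C_i)$ dominate those relative to the algorithm's intermediate profile (as $\SATP$ and hence $\contenders$ are monotone under adding resources — careful, contention itself is non-monotone, but the \emph{deficit being coverable} is what matters: if $C_i$ covers the full deficit, any subset-deficit is also covered by $C_i$). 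Therefore the loop never fails its $\deficit\subseteq E_i$ guard prematurely, terminates with $\deficit=\emptyset$, i.e. $(P_{-i},P_i)\notin\CONT(G)$ (resp. $\notin\CONT(G,i)$), and with $(P_{-i},P_i)\in\ISAT(G,i)$ since $\miss\subseteq P_i$; hence $(P_{-i},P_i)\in\GOOD^\pref(G,i)$ and $P_i\neq\emptyset$ (as $(P_{-i},\emptyset)$ is not good, $\miss\neq\emptyset$), so the algorithm returns true.

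The main obstacle I anticipate is the completeness direction, specifically justifying that the greedy while loop — which only ever adds the \emph{currently} contended, \emph{currently} missing resources — never paints itself into a corner: because resource contention is genuinely non-monotone (as the paper stresses right before the RCGBAR section), I must argue carefully that any witness $C_i$ necessarily \emph{contains} every resource the greedy procedure adds, and that covering deficits step by step converges to a state with empty deficit whenever \emph{some} $C_i\subseteq\epsilon_i$ achieves a non-contentious, $i$-satisfying outcome. The cleanest way is to prove the invariant ``$P_i\subseteq C_i$ at every iteration'' by induction, using that $\Deficit(G,i,(P_{-i},P_i),\pref)\setminus\flat^\bullet(\out((P_{-i},\emptyset)))\subseteq C_i$ because $C_i$ must supply all contended resources beyond what $P_{-i}$ already provides — this is where the atomic-endowment hypothesis (RCGBAR) is essential, since it lets me reason instance-by-instance without worrying about bundles forcing unwanted extra resources.
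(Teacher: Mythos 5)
Your plan is correct and follows the same overall decomposition as the paper's proof (soundness read off the return conditions, completeness by tracking a witness deviation, runtime from the strict decrease of $E_i$), but on the completeness direction you are noticeably more rigorous than the paper, which essentially asserts that the greedy loop ``adds just enough'' and succeeds whenever some profitable deviation exists. Your invariant $P_i \subseteq C_i$ is exactly the right tool and it does go through: $\miss \subseteq C_i$ because $(P_{-i},C_i)\in\ISAT(G,i)$ and the endowment is atomic; and at each iteration $\contenders(G,(P_{-i},P_i),\pref) \subseteq \SATP(G,(P_{-i},P_i)) \subseteq \SATP(G,(P_{-i},C_i))$ (contenders are by definition potentially satisfied, and $\SATP$ is monotone by Fact~\ref{fact:surplus-tolerant-monotony}), so $\contGoals(G,(P_{-i},P_i),\pref)$ is dominated by the aggregate (restricted) demand of $\SATP(G,(P_{-i},C_i))$, which the non-contention of $(P_{-i},C_i)$ bounds by $\flat^\bullet(\out((P_{-i},C_i)))$; a per-resource multiplicity count then gives $P_i \uplus \deficit \subseteq C_i$. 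Since this invariant holds against \emph{every} profitable deviation $C_i$, minimality (indeed uniqueness of the minimal deviation) falls out immediately — again something the paper only asserts. Two small steps still deserve a line each: that exiting with $\deficit=\emptyset$ really yields a non-contentious (resp.\ non-$i$-contentious) profile (if some resource $x$ were over-demanded, every potentially satisfied player wanting $x$ would be a contender, so the contenders' aggregate demand for $x$ alone already exceeds supply and the deficit could not be empty), and the edge case below.

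The one point you flag but do not close is the case $(P_{-i},\emptyset) \in \GOOD^\pref(G,i)$, where no deviation is profitable and the algorithm must return false; you worry that ``a deficit forces resources in.'' It cannot: goodness for $i$ means non-contention (for $i$), so the same containment $\contenders \subseteq \SATP$ gives $\contGoals \subseteq \flat^\bullet(\out((P_{-i},\emptyset)))$ and hence $\deficit=\emptyset$; together with $\miss=\emptyset$ the loop never fires, $P_i$ stays empty, and the conjunct $P_i\neq\emptyset$ on line~\ref{alg:last-line} forces a false return. The contrapositive of this same observation is what licenses clause~(2) of Definition~\ref{def:preferences} in your soundness direction (a true return implies $(P_{-i},\emptyset)\notin\GOOD^\pref(G,i)$). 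With these two observations written out, your argument is complete and, on the key step, strictly tighter than the published one.
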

  \begin{proof}
Suppose that Algorithm~\ref{alg:RCGBAR-EPD-empty} returns true on the instance $(G,i,(P_1, \ldots, P_i = \emptyset, \ldots, P_n))$.
It means that it does not terminate on line~\ref{line:miss-too-much}. So the test on line~\ref{alg:test-cannot-potentially-sat-self} fails. So $\miss$ is not empty, meaning that a non-empty multiset of resources must be added to the profile in order for player~$i$ to be potentially satisfied. So $P \not\in \GOOD^\pref(G,i)$.
It also means that the execution is terminating on line~\ref{alg:last-line} with
$(P_{-i}, P_i) \in \GOOD^\pref(G,i)$.
Clearly $P_i \subseteq \epsilon_i$.
So $P_i$ is a profitable deviation.


Suppose that player~$i$ has a profitable deviation from $(P_1, \ldots, P_i = \emptyset, \ldots, P_n)$ in the RCGBAR $G$.
%
%
Since $G$ is an RCGBAR, $\epsilon_i$ only contains atomic resources.
%
At each step, Algorithm~\ref{alg:RCGBAR-EPD-empty} will add ``just enough'' resources to satisfy player~$i$, and iteratively eliminate the resource contentions. Adding ``just enough'' resources from a multiset $X$ to a multiset $Y$ to obtain a multiset $Z$, simply means to \emph{deterministically} add $Z \setminus Y$, and there is enough resources to do so when $X \subseteq X \setminus Y$.\footnote{Observe that if $\epsilon_i$ were not a bag of atomic resources we would have to \emph{non-deterministically} add bundles of resources. E.g., when an $A$ is needed, one might have a choice between adding $A \comb B$ or $A \comb C$. Non-deterministic additions of bundles could yield further resource contentions, which could be avoidable by covering the deficit with a different set of bundles, thus requiring backtracking.}

    The algorithm first tries to satisfy player~$i$ by adding just enough resources from his endowment to potentially satisfy him. 
    Since by hypothesis player~$i$ has a profitable deviation, the player's endowment contains enough resources to be satisfied, thus, we have $\miss \subseteq \epsilon_i$.
    So the choice of player~$i$ receives $\miss$ (line~\ref{alg:potentially-sat}), and the player is now potentially satisfied with the profile. But he may not be good for him.
    But the fact that player~$i$ has a profitable deviation also means that the player's endowment contains enough resources to  avoid resource contention.
    The algorithm then iteratively tries to add just enough resources from player~$i$'s endowment to eliminate the contention.
    Further resource contention might appear, hence the iterative process.
    If one exits the loop because there are not enough resources left in player~$i$'s endowment, then there is no profitable deviation; the algorithm returns false on line~\ref{alg:last-line} because $(P_{-i}, P_i)$ is not in $\GOOD^\pref(G,i)$.
    On the other hand, if one exits the loop because there is no deficit, then a profitable deviation has been found, we have $(P_{-i}, P_i)$ in $\GOOD^\pref(G,i)$. We clearly have that $P_i$ is not $\emptyset$. Moreover $P_i$ is the minimal profitable deviation, because by adding $\miss$ and iteratively adding $\deficit$ when needed, it is not possible to remove any resource from $P_i$ without making the profile not good for player~$i$.

\medskip

    Each set-theoretic operation, and every computation of $\CONT(G)$, $\CONT(G,j)$,\linebreak $\ISAT(G,j)$, $\GOOD^\pref(G,i)$ is easy (Prop.~\ref{prop:ptime-satp-good}). The set $E_i$ is initialized with $\epsilon_i \setminus P_i$, and strictly decreases at each iteration. Hence, the algorithm runs in polynomial time.
  \end{proof}

\begin{algorithm*}
\caption{Algorithm to solve $\EPD^\pref(G,i,P)$ for RCGBARs with $\pref \in \{\pca, \ppca\}$} 
\label{alg:RCGBAR-EPD}
\begin{algorithmic}[1]
  \\if ($P \not \in \GOOD^\pref(G,i)$): 
  \\\indent\indent if ($P_i \not = \emptyset$): \Comment{$\emptyset$ is a profitable deviation}
  \\\indent\indent\indent return (true, $\emptyset$)
  \\\indent\indent else:
  \\\indent\indent\indent return $\EPD^\pref(G,i,(P_1, \ldots, P_i = \emptyset, \ldots, P_n))$ \Comment{call Algorithm~\ref{alg:RCGBAR-EPD-empty}}
  \\else:
  \\\indent\indent if ($P_i = \emptyset$): \Comment{profile already good for $i$ and $P_i = \emptyset$}
  \\\indent\indent\indent return (false, $P_i$)
  \\\indent\indent if ($(P_{-i}, \emptyset) \in  \GOOD^\pref(G,i)$): \Comment{$\emptyset$ is a profitable deviation}
  \\\indent\indent\indent return (true, $\emptyset$)
  \\\indent\indent else:  
  \\\indent\indent\indent let $G'$ be the RCG identical to $G$, except for $i$'s endowment being $P_i$
  \\\indent\indent\indent $(b, c) = \EPD^\pref(G',i,(P_1, \ldots, P_{i-1}, \emptyset, P_{i+1}, \ldots, P_n))$ \Comment{call Algorithm~\ref{alg:RCGBAR-EPD-empty}}
  \\\indent\indent\indent return ($b$ and $c \not= P_i$, $c$)
\end{algorithmic}
\end{algorithm*}

%
\begin{theorem}
\label{thm:correctness-algo-ENEempty-RCGBAR}
  In RCGBARs, the problem $\NE^\pref$ is in $\PTIME$, even when $\pref \in \{\pca, \ppca\}$.
\end{theorem}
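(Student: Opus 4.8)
The plan is to reduce $\NE^\pref$ to the problem $\EPD^\pref$, which Algorithm~\ref{alg:RCGBAR-EPD} is designed to solve for RCGBARs. Since, by definition, $P \in \NE^\pref(G)$ iff no player $i \in N$ has a profitable deviation from $P$, once we know that Algorithm~\ref{alg:RCGBAR-EPD} correctly and efficiently decides whether a given player has a profitable deviation from a given profile, the procedure for $\NE^\pref$ is immediate: run Algorithm~\ref{alg:RCGBAR-EPD} on $(G,i,P)$ for each of the $n$ players and answer ``$P \in \NE^\pref(G)$'' iff no run reports a profitable deviation. For $\pref = \pct$ the claim already follows from Theorem~\ref{thm:NE-in-P}, so the work is entirely in establishing the correctness and the polynomial running time of Algorithm~\ref{alg:RCGBAR-EPD} for $\pref \in \{\pca,\ppca\}$.

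Correctness of Algorithm~\ref{alg:RCGBAR-EPD} is a case analysis driven by Definition~\ref{def:preferences}. If $P \notin \GOOD^\pref(G,i)$ and $P_i \neq \emptyset$, then deviating to $\emptyset$ is profitable --- case~2 of Definition~\ref{def:preferences} when $(P_{-i},\emptyset) \in \GOOD^\pref(G,i)$, and case~1 otherwise since then $\emptyset \subsetneq P_i$ --- and $\emptyset$ is trivially minimal. If $P \notin \GOOD^\pref(G,i)$ and $P_i = \emptyset$, the instance already has $i$'s choice empty, hence lies in the scope of Algorithm~\ref{alg:RCGBAR-EPD-empty}, and correctness is exactly Lemma~\ref{lem:algo-RCGBAR-EPD-empty}. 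If $P \in \GOOD^\pref(G,i)$ and $P_i = \emptyset$, player~$i$ is already in a best outcome, so there is no profitable deviation. If $P \in \GOOD^\pref(G,i)$, $P_i \neq \emptyset$, and $(P_{-i},\emptyset) \in \GOOD^\pref(G,i)$, then $\emptyset$ is a profitable deviation by case~3 of Definition~\ref{def:preferences}, and it is minimal.

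The remaining, and most delicate, case is $P \in \GOOD^\pref(G,i)$, $P_i \neq \emptyset$, and $(P_{-i},\emptyset) \notin \GOOD^\pref(G,i)$. Since $P$ is good for $i$, cases~1 and~2 of Definition~\ref{def:preferences} are vacuous, so \emph{every} profitable deviation $P'_i$ must satisfy $(P_{-i},P'_i) \in \GOOD^\pref(G,i)$ and $P'_i \subsetneq P_i$; it therefore suffices to search, among submultisets of $P_i$, for a minimal one keeping the profile good. This is what Algorithm~\ref{alg:RCGBAR-EPD} does by forming the RCGBAR $G'$ identical to $G$ except that $i$'s endowment is $P_i$, and calling Algorithm~\ref{alg:RCGBAR-EPD-empty} on $(G',i,(P_{-i},\emptyset))$. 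Two observations make this sound. First, $\ISAT$, $\CONT$, $\CONT(\cdot,i)$, and hence $\GOOD^\pref$, depend only on the objectives and on the outcome of a profile, not on the endowments; since $\ch(G') \subseteq \ch(G)$, these notions coincide with those of $G$ on every profile of $G'$, so goodness is evaluated correctly. Second, because $P \in \ISAT(G,i)$ and, $G$ being an RCGBAR, $\flat^\bullet(P_i)$ is just $P_i$ read as a multiset of atomic resources, the set $\miss$ computed at line~\ref{alg:test-cannot-potentially-sat-self} of Algorithm~\ref{alg:RCGBAR-EPD-empty} is a submultiset of $P_i$, so that call does not abort at line~\ref{line:miss-too-much} and, by Lemma~\ref{lem:algo-RCGBAR-EPD-empty}, returns a choice $c \subseteq P_i$ that is $\subseteq$-minimal among the $X \subseteq P_i$ with $(P_{-i},X) \in \GOOD^\pref(G,i)$ (this set is non-empty, as it contains $P_i$, and $c \neq \emptyset$ since $(P_{-i},\emptyset)$ is not good). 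The crux is then: if $c \subsetneq P_i$, then $c$ is a profitable deviation from $P$ (case~3), and a minimal one, because any strictly smaller profitable deviation would contradict minimality of $c$; if $c = P_i$, then no strict submultiset of $P_i$ keeps the profile good, so by the opening remark of this case player~$i$ has \emph{no} profitable deviation from $P$ at all. Algorithm~\ref{alg:RCGBAR-EPD} returns precisely this by testing $c \neq P_i$. I expect this last equivalence, together with the invariance of $\GOOD^\pref$ under replacing $\epsilon_i$ by $P_i$, to be the part needing the most care to write cleanly.

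For the running time, each call to Algorithm~\ref{alg:RCGBAR-EPD-empty} is polynomial (Lemma~\ref{lem:algo-RCGBAR-EPD-empty}), each membership test in $\GOOD^\pref(\cdot,\cdot)$ is polynomial (Prop.~\ref{prop:ptime-satp-good}), and building $G'$ is a trivial rewriting; hence Algorithm~\ref{alg:RCGBAR-EPD} runs in polynomial time, and running it once per player yields a polynomial-time procedure for $\NE^\pref$. Together with Theorem~\ref{thm:NE-in-P} for $\pct$, this establishes the claim.
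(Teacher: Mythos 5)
Your proposal is correct and follows essentially the same route as the paper: reduce $\NE^\pref$ to $n$ instances of $\EPD^\pref$, solve each with Algorithm~\ref{alg:RCGBAR-EPD} via the same case analysis, and handle the only delicate case ($P$ good, $P_i \neq \emptyset$, $(P_{-i},\emptyset)$ not good) by building $G'$ with $\epsilon'_i = P_i$ and invoking Algorithm~\ref{alg:RCGBAR-EPD-empty}, concluding from minimality of the returned deviation that $c \neq P_i$ characterizes the existence of a profitable deviation. Your write-up is in fact somewhat more explicit than the paper's on the two supporting observations (endowment-independence of $\GOOD^\pref$ and the guarantee that the recursive call succeeds), which the paper leaves implicit.
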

\begin{proof}
  Let $G = (N,\gamma_1, \ldots, \gamma_n, \epsilon_1, \ldots, \epsilon_n)$.
    To decide whether a profile is a Nash equilibrium, it suffices to check for each player whether they have a profitable deviation.
    This can be done by calling at most $n$ times a decision procedure for the problem $\EPD^\pref$.
    In RCGBAR, it can be solved using Algorithm~\ref{alg:RCGBAR-EPD}.

    Let $P$ be a profile in $G$, and let $i \in N$ be a player.
    The correctness when $P \not \in \GOOD^\pref(G,i)$ is trivial. If $P \in \GOOD^\pref(G,i)$, we distinguish three cases.
    \begin{enumerate}
    \item If $P_i = \emptyset$, there is no profitable deviation.
    \item If $P_i \not = \emptyset$ but $(P_{-i}, \emptyset) \in \GOOD^\pref(G,i)$, then $\emptyset$ is a profitable deviation.
    \item If $P_i \not = \emptyset$ and $(P_{-i}, \emptyset) \not\in \GOOD^\pref(G,i)$, and since $P \in \GOOD^\pref(G,i)$, if a profitable deviation exists then it is not $\emptyset$, and it is (strictly) included in $P_i$. In other words, we must find a profitable deviation $P'_i$ such that $\emptyset \subset P'_i \subset P_i$. We thus define a new game $G' = (N',\gamma'_1, \ldots, \gamma'_n, \epsilon'_1, \ldots, \linebreak \epsilon'_n)$, where $N' = N$, $\gamma'_j = \gamma_j$, for all $j \in N$, and $\epsilon'_j = \epsilon_j$ when $j \not = i$, and $\epsilon'_i = P_i$. The algorithm then calls Algorithm~\ref{alg:RCGBAR-EPD-empty} to find a profitable deviation for player~$i$ from the profile $(P_{-i}, \emptyset)$. The algorithm then returns true when Algorithm~\ref{alg:RCGBAR-EPD-empty} returns $(b,c)$, $b$ is true, and the profitable deviation $c$ found (profitable from $(P_{-i}, \emptyset)$ but not necessarily from $(P_{-i}, P_i)$) is not the original choice $P_i$.
    \end{enumerate}
    We thus need to make only a polynomial number of calls to the subprocedure of Algorithm~\ref{alg:RCGBAR-EPD-empty}. Moreover, by Prop~\ref{prop:ptime-satp-good}, we know that deciding which instances to call is easy.
    From Lemma~\ref{lem:algo-RCGBAR-EPD-empty}, we have that
    Algorithm~\ref{alg:RCGBAR-EPD-empty} is correct to decide whether a player in an RCGBAR has an incentive to deviate from a profile where their choice is $\emptyset$, and it runs in polynomial time.
    Hence, in RCGBAR, the problem $\EPD^\pref$ can be solved in polynomial time, and thus $\NE^\pref$ is also in $\PTIME$.
  \end{proof}

\begin{algorithm}
  \caption{Algorithm to find a Nash equilibrium in an RCGBAR, with $\pref = \pca$}
\label{alg:find-Nash-RCGBAR-pca}
\begin{algorithmic}[1]
\\$P = (\emptyset, \ldots, \emptyset)$
\\while ($P \not \in \NE^\pca(G)$):
\\\label{line:players-iter}\indent\indent for each $i \in N$, from $1$ to $n$, do:
\\\indent\indent\indent $(b, c) = \EPD^\pref(G,i, (P_{-i}, P_i))$ \Comment{call Algorithm~\ref{alg:RCGBAR-EPD}}
\\\indent\indent\indent if $b$:
\\\indent\indent\indent\indent $P_i = c$
\\return $P$
\end{algorithmic}
\end{algorithm}

The following can be proved by adopting a special best-response dynamics~\cite{MILCHTAICH1996111} as described in Algorithm~\ref{alg:find-Nash-RCGBAR-pca}.
\begin{theorem}
\label{thm:contention-averse-end-bags-always-NE}
  Let $G$ be an RCGBAR. The set $\NE^\pca(G)$ is non-empty, and finding a profile in it can be done in polynomial time.
  \end{theorem}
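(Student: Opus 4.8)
The plan is to show that Algorithm~\ref{alg:find-Nash-RCGBAR-pca}, the round-robin best-response dynamics started at $(\emptyset,\ldots,\emptyset)$, always halts after a number of passes polynomial in $\lvert G\rvert$ and returns a profile in $\NE^\pca(G)$. Correctness of the output is the easy half: the main loop is left only when its guard fails, i.e.\ exactly when $P\in\NE^\pca(G)$; and every update $P_i:=c$ performed inside the loop replaces $P_i$ by a genuine minimal profitable deviation, because the subprocedure $\EPD^\pca$ invoked at each step (Algorithm~\ref{alg:RCGBAR-EPD}) is correct by Lemma~\ref{lem:algo-RCGBAR-EPD-empty} and Theorem~\ref{thm:correctness-algo-ENEempty-RCGBAR}. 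The initial profile is legitimate and, by Fact~\ref{fact:basic-properties}.\ref{fact:trivial-uncontentious}, non-contentious. By Prop.~\ref{prop:ptime-satp-good} and Theorem~\ref{thm:correctness-algo-ENEempty-RCGBAR}, a single pass (at most $n$ calls to $\EPD^\pca$ plus the $\NE^\pca$ test) runs in polynomial time, so the whole statement reduces to bounding the number of passes.

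First I would classify the deviations the dynamics can perform. Inspecting Algorithm~\ref{alg:RCGBAR-EPD}, every profitable deviation of a player~$i$ is of one of two kinds: an \emph{expansion}, in which $i$ currently contributes $\emptyset$, the current profile is not good for $i$, and $i$ moves to the minimal contribution $c$ — computed deterministically by Algorithm~\ref{alg:RCGBAR-EPD-empty} — such that $(P_{-i},c)\in\GOOD^\pca(G,i)$; or a \emph{contraction}, in which $i$ strictly shrinks its contribution, either down to $\emptyset$ or partially to a minimal $c\subsetneq P_i$ again landing in $\GOOD^\pca(G,i)$ when the profile remains good for $i$. Two structural facts, both immediate from Definition~\ref{def:good}, surplus-tolerance (Fact~\ref{fact:surplus-tolerant-monotony}), and the guarantee of Algorithm~\ref{alg:RCGBAR-EPD-empty} that the contribution it returns reaches a profile good for the deviator, will do the bookkeeping: (a) right after an expansion the profile is \emph{non-contentious}, so every potentially satisfied player is good and no player has left $\SATP(G,\cdot)$; (b) every contraction strictly shrinks the multiset $\out(P)$. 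Since $\out(P)$ is always a sub-bag of $\biguplus_{i\in N}\epsilon_i$, it contains at most $M$ atomic resource instances throughout, where $M$ is the total number of atomic instances in the endowments; by~(b) there are therefore at most $M$ contractions between two consecutive expansions, so it suffices to bound the \emph{number of expansions} by a polynomial in $\lvert G\rvert$.

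That bound is the heart of the proof and the step I expect to be the main obstacle. The obvious candidate — a lexicographic monovariant $\langle$ number of non-good players (to be decreased), then size of $\out(P)$ (to be decreased) $\rangle$ — does not quite work, for two related reasons rooted in the non-monotonicity of contention pointed out after Fact~\ref{fact:surplus-tolerant-monotony}: a contraction from a non-contentious profile may drop a \emph{third} player out of ``potentially satisfied'' or even render the whole profile contentious, momentarily raising the count of non-good players; and an expansion by a player who is already potentially satisfied at a contentious profile (it merely covers the deficit) need not enlarge $\SATP(G,\cdot)$. The plan to repair this is to exploit the round-robin order and the minimality of the prescribed moves: a player contributing $\emptyset$ in a profile good for it stays settled until some later contraction by another player removes resources it depends on; a player can be ``contributing but not good'' for at most one inter-turn gap before Algorithm~\ref{alg:RCGBAR-EPD} forces it down to $\emptyset$; and a player can expand again only after its contribution has been driven back to $\emptyset$. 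Amortising over a full pass, I would track a suitably weighted potential — essentially the number of players for whom the current profile is good, corrected by the quantity of resources players are still ``forced to carry'' — and argue that it strictly decreases on every pass that changes the profile while staying bounded by a function of $n$ and $M$. This gives a polynomial bound on the number of passes; together with the polynomial cost per pass and the fact that the dynamics can stop only at a Nash equilibrium, it establishes both that $\NE^\pca(G)\neq\emptyset$ and that a member of it is found in polynomial time.
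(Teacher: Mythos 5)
There is a genuine gap: the termination bound, which you yourself identify as ``the heart of the proof,'' is never actually established. You defer it to ``a suitably weighted potential --- essentially the number of players for whom the current profile is good, corrected by the quantity of resources players are still forced to carry'' that ``strictly decreases on every pass,'' but this potential is not defined and the claimed decrease is not argued; as written, the proof of the key step is a promissory note. Moreover, the difficulty you are trying to amortize away does not arise in the first place, and seeing why is precisely the content of the paper's argument. For \emph{public} contention-averse preferences, starting from $(\emptyset,\ldots,\emptyset)$, the dynamics maintains three invariants: the current profile $P_k$ is never contentious; $\out(P_k)=\biguplus_{i\in\SATP(G,P_k)}\gamma_i$ exactly (the $\subseteq$ direction from minimality of the deviations returned by Algorithm~\ref{alg:RCGBAR-EPD-empty}, the $\supseteq$ direction from non-contentiousness and Fact~\ref{fact:surplus-tolerant-monotony}); and every player not in $\SATP(G,P_k)$ contributes $\emptyset$. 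Under these invariants every contributing player finds the current profile good and holds a minimal contribution, so no contraction is ever profitable: your ``contraction'' case is empty, and the dynamics is monotone --- each iteration strictly enlarges $\out(P)$, which is bounded by $\flat^\bullet(\biguplus_i\epsilon_i)$, giving the polynomial bound on the number of iterations immediately.

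Note also that the invariants are exactly where publicness of the contention-aversion is used (a deviator reaching a $\GOOD^\pca$ profile certifies that the \emph{whole} profile is non-contentious, hence good for every potentially satisfied player); this is what fails for $\ppca$ and $\pct$, where the paper exhibits non-terminating dynamics and games with no equilibrium. Your decomposition into expansions and contractions with an ``at most $M$ contractions between expansions'' bookkeeping is therefore attacking a strictly harder problem than the one at hand --- and the harder problem is the one you leave unsolved. To repair the proof you should replace the amortized-potential sketch by an induction establishing the three invariants above, from which both the absence of contractions and the linear bound on the number of profile changes follow.
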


  \begin{proof}
    Algorithm~\ref{alg:find-Nash-RCGBAR-pca} implements a best-response dynamics from the empty profile, with the players taking turns. 
    One starts in the profile $P_0 = (\emptyset, \ldots, \emptyset)$.
    During the dynamics $P_0, P_1, \ldots$, at each step $k$:
    \begin{enumerate}
    \item $P_k$ is not contentious,
    \item $\out(P_k) = \biguplus_{i \in \SATP(G,P_k)} \gamma_i$,
    \item $\forall i \not\in \SATP(G,P_k), {P_k}_i = \emptyset$.
    \end{enumerate}
    To see this:
    \begin{itemize}
    \item $(\emptyset, \ldots, \emptyset)$ is not contentious (Fact~\ref{fact:basic-properties}.\ref{fact:trivial-uncontentious}). If $i \in \SATP(G,(\emptyset, \ldots, \emptyset))$, then $\gamma_i = \emptybundle$. Also, $\out(P_0) = \emptyset$. So $\out(P_0) = \biguplus_{i \in \SATP(G,P_0)} \gamma_i$. Clearly $\forall i \not\in \SATP(G,P_k), {P_k}_i = \emptyset$.
    \item Suppose $P_k$ satisfies the conditions 1--3. 
      Between $P_k$ and $P_{(k+1)}$, some player~$j$ has changed their choice from $\emptyset$ to ${P_{(k+1)}}_j \not = \emptyset$. It must be that $P_{(k+1)} \in \GOOD(G,j)$. So $P_{(k+1)}$ is not contentious.\footnote{This would not be the case with private contention-averse preferences! One profitable deviation might not be privately contentious to the deviator but still be privately contentious to some other players.} This and Fact~\ref{fact:surplus-tolerant-monotony} imply $\biguplus_{i \in \SATP(G,P_{(k+1)})} \gamma_i \subseteq \out(P_{(k+1)})$. Moreover, the deviations returned by Algorithm~\ref{alg:RCGBAR-EPD} are minimal (this follows from Lemma~\ref{lem:algo-RCGBAR-EPD-empty}), and because $\out(P_k) \subseteq \biguplus_{i \in \SATP(G,P_k)} \gamma_i$, we also have $\out(P_{(k+1)}) \subseteq \biguplus_{i \in \SATP(G,P_{(k+1)})} \gamma_i$. Finally, since\linebreak 
      $\forall i \not\in \SATP(G,P_k), {P_k}_i = \emptyset$, player~$j$'s choice is the only changes from $P_k$ to $P_{(k+1)}$ and $j \in \SATP(G, P_{(k+1)})$, we also have $\forall i \not\in \SATP(G,P_{(k+1)}), {P_{(k+1)}}_i = \emptyset$.
    \end{itemize}


    At every step~$k$, (1)~$\out(P_k) = \biguplus_{i \in \SATP(G,P_k)} \gamma_i$, (2)~$P_k$ is not contentious, and (3)~all players that are not potentially satisfied in $P_k$ are not contributing anything. It implies that the players never have an incentive to remove resources during the iterative process.
    If a profile is good for a player at some point in the iteration, every subsequent profile will also be good for this player.

  The amount of resources in the current profile strictly increases at every `while' iteration.
  This process will eventually terminate in a Nash equilibrium.
  Because of Lemma~\ref{lem:algo-RCGBAR-EPD-empty}, it runs in polynomial time.
  \end{proof}

Observe that line~\ref{line:players-iter} of Algorithm~\ref{alg:find-Nash-RCGBAR-pca} is only an arbitrary way to force a deterministic procedure. But this is not necessary, as any \emph{minimal} deviation, by any player, could be taken at any time.
Using variations of Algorithm~\ref{alg:find-Nash-RCGBAR-pca}, we can possibly find several Nash equilibria, simply by modifying the order of the players. Each permutation of the players would result in a Nash equilibrium, which must not be unique.
On the other hand, not all Nash equilibria can be found only by permuting the players and running Algorithm~\ref{alg:find-Nash-RCGBAR-pca}.
\begin{example}\label{ex:NE-RCGBAR-permutations}
Let the RCG $G$ be  $(\{1,2,3\},  \gamma_1,\gamma_2,\gamma_3, \epsilon_1, \epsilon_2, \epsilon_3)$.
Player~$1$, is endowed with $\epsilon_1 = \{A,B\}$ and has the objective $\gamma_1 = A$. Player~$2$ and player~$3$, both identical, are endowed with $\epsilon_2 = \epsilon_3 = \{A, B, B\}$ and have the objective $\gamma_2 = \gamma_3 = B$.

Let us assume $\pref = \pca$.
We start from the profile $(\emptyset, \emptyset, \emptyset)$ consider the natural order $(1,2,3)$ of $N$.
Following Algorithm~\ref{alg:find-Nash-RCGBAR-pca},
player~$1$ deviates from $(\emptyset, \emptyset, \emptyset)$ to $(\{A\}, \emptyset, \emptyset)$. Then, player~$2$ deviates to $(\{A\}, \{B,B\}, \emptyset)$. Player~$3$ does not deviate from $(\{A\}, \{B,B\}, \emptyset)$, which is a Nash equilibrium.

If instead we permute player~$2$ and player~$3$, and thus consider the order $(1,3,2)$ of $N$, Algorithm~\ref{alg:find-Nash-RCGBAR-pca} results in the Nash equilibrium $(\{A\}, \emptyset, \{B,B\})$.

Now, the profile $(\{A\}, \{B\}, \{B\})$ is a Nash equilibrium in $G$. However, there is no starting permutation of the set of players for which Algorithm~\ref{alg:find-Nash-RCGBAR-pca} would result in it.
\end{example}

The Nash equilibrium resulting from Algorithm~\ref{alg:find-Nash-RCGBAR-pca} might not be efficient, if for instance, efficiency is measured by the number of satisfied players.\footnote{Algorithm~\ref{alg:find-Nash-RCGBAR-pca} searches for a Nash equilibrium starting from $(\emptyset, \ldots, \emptyset)$. We could think of a different search starting from $(\epsilon_1, \ldots, \epsilon_n)$. Non potentially satisfied players must minimally deviate to $\emptyset$, and potentially satisfied players would remove some resources, trying to remain potentially satisfied and ``reducing'' the resource contention.
}
\begin{example}
Let the RCG $G$ be  $(\{1,2\},  \gamma_1, \gamma_2, \epsilon_1, \epsilon_2)$.
Player~$1$ and player~$2$ are identical: they are endowed with $\epsilon_1 = \epsilon_2 = \{A\}$ and have the objective $\gamma_1 = \gamma_2 = A$.
We have $\NE^\pca(G) = \{(\emptyset, \emptyset), (\{A\}, \{A\})\}$. Algorithm~\ref{alg:find-Nash-RCGBAR-pca} will return $(\emptyset, \emptyset)$. (Inverting the two identical players player~$1$ and player~$2$ as suggested in Example~\ref{ex:NE-RCGBAR-permutations} has no effect.)
\end{example}

\paragraph{The case of contention-tolerant preferences.}
In presence of contention-tolerant preferences, a pure Nash equilibrium is not guaranteed to exist, even when the endowments are bags of atomic resources.
Algorithm~\ref{alg:find-Nash-RCGBAR-pca} to find a Nash equilibrium,
does not work in general
if (mistakenly) applied with con\-tention-tolerant preferences.
Indeed, when player~$i$ profitably minimally deviates, it might be the case that some player now has an incentive to withhold some resources from his current choice.

\begin{proposition}
\label{prop:no-nash-irgbar-pct}
Some RCGBARs do not admit any pure Nash equilibrium in presence of contention-tolerant preferences.
\end{proposition}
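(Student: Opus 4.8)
The plan is to exhibit a single explicit RCGBAR whose set of pure Nash equilibria under $\pct$ is empty. Proposition~\ref{prop:no-nash} already produces a Nash-equilibrium-free game, but its endowments contain a genuine bundle $A \comb B$; so the task is to reproduce the same cyclic pattern of profitable deviations while keeping all endowments atomic. The mechanism I would use is a two-player ``free-riding versus parsimony'' cycle: one player wants a single copy of a resource and the other wants two copies of the \emph{same} resource, and each of them holds exactly one atom of it. Then, whenever both contribute, the single-copy player may free-ride and parsimoniously drop his atom; but once he has dropped it, the double-copy player is left with a single atom, which is useless to him, so he withdraws it; then the single-copy player is unsatisfied again and must re-contribute, closing a $4$-cycle with no fixed point.

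Concretely, take $\Res = \{A\}$ and $G = (\{1,2\}, \gamma_1, \gamma_2, \epsilon_1, \epsilon_2)$ with $\gamma_1 = A$, $\gamma_2 = A \comb A$, and $\epsilon_1 = \epsilon_2 = \{A\}$; this is an RCGBAR because the only element of each endowment bag is the atomic resource $A$. Here $\ch_1(G) = \ch_2(G) = \{\emptyset, \{A\}\}$, so $G$ has exactly the four profiles $(\emptyset,\emptyset)$, $(\{A\},\emptyset)$, $(\{A\},\{A\})$, and $(\emptyset,\{A\})$, and I would discard each of them using Lemma~\ref{lemma:ne-parsi-affin}: from $(\emptyset,\emptyset)$, player~$1$ has a deviation of type~(2), since $\out((\emptyset,\emptyset)) = \emptyset \not\transform A$ whereas $(P_{-1},\epsilon_1) = (\{A\},\emptyset) \in \ISAT(G,1)$; from $(\{A\},\emptyset)$, player~$2$ has a deviation of type~(2), since $\out = \{A\} \not\transform A \comb A$ whereas $(P_{-2},\epsilon_2) = (\{A\},\{A\}) \in \ISAT(G,2)$; from $(\{A\},\{A\})$, player~$1$ has a deviation of type~(3), since $\out \transform A$ and removing his atom leaves $(\emptyset,\{A\}) \in \ISAT(G,1)$; and from $(\emptyset,\{A\})$, player~$2$ has a deviation of type~(1), since $\out = \{A\} \not\transform A \comb A$ and $P_2 = \{A\} \neq \emptyset$. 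As every profile admits a profitable deviation, $\NE^\pct(G) = \emptyset$.

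For a self-contained argument I would, equivalently, verify straight from Definition~\ref{def:preferences} that these four profiles form one cycle of profitable deviations, namely $(\emptyset,\emptyset) \prec^\pct_1 (\{A\},\emptyset) \prec^\pct_2 (\{A\},\{A\}) \prec^\pct_1 (\emptyset,\{A\}) \prec^\pct_2 (\emptyset,\emptyset)$, where the two $\prec^\pct_1$ steps are instances of clauses~(2) and~(3) and the two $\prec^\pct_2$ steps of clauses~(2) and~(1). I do not expect a genuine obstacle here; the only thing to keep straight is the transformation relation, i.e.\ that $\out(P) \transform \gamma_i$ means $\flat(\gamma_i) \subseteq \flat^\bullet(\out(P))$, so that the one-atom outcome $\{A\}$ satisfies $\gamma_1 = A$ but not $\gamma_2 = A \comb A$, while the two-atom outcome satisfies both. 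It is worth remarking in passing that the same game \emph{does} possess a Nash equilibrium under $\pca$ (for instance $(\{A\},\emptyset)$), in line with Theorem~\ref{thm:contention-averse-end-bags-always-NE}; this pinpoints why the best-response dynamics of Algorithm~\ref{alg:find-Nash-RCGBAR-pca} has no counterpart in the contention-tolerant setting.
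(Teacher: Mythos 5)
Your proposal is correct and uses exactly the same counterexample as the paper ($\gamma_1 = A$, $\gamma_2 = A \comb A$, $\epsilon_1 = \epsilon_2 = \{A\}$) with the same four-profile cycle of profitable deviations; your extra verification via Lemma~\ref{lemma:ne-parsi-affin} and the aside about $\pca$ are accurate but do not change the argument. No issues.
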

\begin{proof}
Consider the RCGBAR $G$, where $\gamma_1 = A$, $\gamma_2 = A \comb A$, $\epsilon_1 = \{A\}$, $\epsilon_2 = \{A\}$.
From $(\emptyset, \emptyset)$, player~$1$ has a profitable deviation to $(\{A\}, \emptyset)$, from which player~$2$ has a profitable deviation to $(\{A\}, \{A\})$. By parsimony, and because the players are contention-tolerant, player~$1$ has a profitable deviation from $(\{A\}, \{A\})$ to $(\emptyset, \{A\})$.
The profile $(\emptyset, \{A\})$ is not a Nash equilibrium either because player~$2$ has a profitable deviation to $(\emptyset, \emptyset)$.
\end{proof}

\paragraph{The case of private contention-averse preferences.}
Algorithm~\ref{alg:find-Nash-RCGBAR-pca} may not terminate in the case of $\ENE^\ppca$, too.
\begin{example}
An example in which Algorithm~\ref{alg:find-Nash-RCGBAR-pca} does not terminate when the preferences are private contention-aversion is the $3$-player RCG $G$, where $\epsilon_1 = \{B, B\}$, $\epsilon_2 = \emptyset$, $\epsilon_3 = \{A, A\}$, $\gamma_1 = B$, $\gamma_2 = A \comb B$, and $\gamma_3 = A$.
Let us assume $\pref = \ppca$.
Algorithm~\ref{alg:find-Nash-RCGBAR-pca} first sees player~$1$ deviating from $(\emptyset, \emptyset, \emptyset)$, to $(\{B\}, \emptyset, \emptyset)$.
Then, it will cycle through this series of profiles, where each player takes turns to deviate to a minimal profitable deviation when it exists: turn of player~$2$,
$(\{B\}, \emptyset, \emptyset)$, turn of player~$3$,
$(\{B\}, \emptyset, \{A, A\})$, turn of player~$1$,
$(\emptyset, \emptyset, \{A, A\})$, turn of player~$2$,
$(\emptyset, \emptyset, \{A, A\})$, turn of player~$3$,
$(\emptyset, \emptyset, \{A\})$, turn of player~$1$,
$(\{B, B\}, \emptyset, \{A\})$, turn of player~$2$,
$(\{B, B\}, \emptyset, \{A\})$, turn of player~$3$,
$(\{B, B\}, \emptyset, \emptyset)$, turn of player~$1$, and back to
$(\{B\}, \emptyset, \emptyset)$.

Moving from $(\{B\}, \emptyset, \emptyset)$ to
$(\{B\}, \emptyset, \{A, A\})$, player~$3$ ensures that he is potentially satisfied and that the resulting profile is not privately contentious from his point of view. But the profile is contentious from the two other players' point of view. This is the reason why the algorithm is not fit for $\ppca$ preferences. Then, instead of monotonically increasing the amount of resources in the profile, we see next player~$1$ removing his contribution as his minimal profitable deviation. (Before the deviation, the profile is not good for him, hence choosing $\emptyset$ is profitable. In this case, outside the dictate of the algorithm, he could have covered the deficit, deviating to $\{B, B\}$.)

\end{example}
In fact, in presence of private contention-averse preferences, a pure Nash equilibrium is not guaranteed to exist, even when the endowments are bags of atomic resources.
\begin{proposition}
\label{prop:no-nash-irgbar-ppca}
Some RCGBARs do not admit any pure Nash equilibrium in presence of private\linebreak contention-averse preferences.
\end{proposition}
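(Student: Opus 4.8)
The plan is to exhibit one explicit RCGBAR $G$ with a small number of players and check directly that, under $\ppca$, \emph{every} profile in $\ch(G)$ has a profitable deviation, so that $\NE^\ppca(G) = \emptyset$. Two preliminary observations pin down what such a $G$ must look like. First, by Fact~\ref{fact:basic-properties}.\ref{fact:private-cont-at-least-two} private and public contention coincide for two-player games, so by Theorem~\ref{thm:contention-averse-end-bags-always-NE} the witness needs at least three players. Second, the no-equilibrium RCGBAR of Proposition~\ref{prop:no-nash-irgbar-pct} does \emph{not} work here: under $\ppca$ it actually admits a Nash equilibrium (when a player is satisfied, his contention is charged only against his own resource type, which his endowment can cover). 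Hence the construction has to genuinely exploit the gap between private and public contention.

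The design I would use follows the pattern of the bundle-driven no-equilibrium game of Proposition~\ref{prop:no-nash} and of the non-terminating run of Algorithm~\ref{alg:find-Nash-RCGBAR-pca} displayed just above: arrange a cyclic sequence of profiles $P^0 \transform P^1 \transform \cdots \transform P^0$ in which each step is a single player's (minimal, where forced) profitable deviation, alternating ``a player contributes to satisfy himself / kill his own private deficit'' with ``a player withdraws because the new profile became privately contentious for him or left him unsatisfied''. Concretely, a player $j$ who is \emph{not} privately affected by a given contention nonetheless injects, to satisfy himself, the resource that causes that contention for another player $i$; player $i$, now privately contentious and unable to repair it with his atomic endowment, withdraws; the withdrawal leaves $j$ unsatisfied, so $j$ withdraws too; this frees $i$ to re-enter, closing the cycle. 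The role that the indivisible bundle $A \comb B$ plays in Proposition~\ref{prop:no-nash} — forcing a player to contribute a resource he does not himself want — is taken over here by an auxiliary player whose objective makes certain resource combinations privately contentious for the others, since the atomicity of endowments in an RCGBAR otherwise rules this out.

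The hard part is ruling out equilibria at the \emph{extremal} profiles. The all-empty profile $(\emptyset,\dots,\emptyset)$ is dealt with as soon as some non-dummy player can become good by a unilateral contribution. The delicate case is the profile in which every player contributes their whole (useful) endowment: adding resources tends to relieve contention, so to keep this profile out of $\NE^\ppca(G)$ the objectives must be set up so that private contention survives \emph{even under full contribution} — i.e., the aggregate demand on some player's resource type outstrips the total supply of that type, so that this player profitably \emph{shrinks} his contribution (which, by dropping some other player out of the set of potentially satisfied players, can actually reduce his own private deficit). Simultaneously one must keep every non-dummy player good in at least one profile, for otherwise such a player behaves like a de facto dummy parked at $\emptyset$ and re-introduces an equilibrium. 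Threading ``enough over-demand to destabilise the full profile'' against ``not so much over-demand that a player is never good'' is the crux of the argument; it is precisely this balance that is unavailable to the public-contention variant, where Theorem~\ref{thm:contention-averse-end-bags-always-NE} already guarantees an equilibrium. Once $G$ is fixed, the remaining verification that each of its (finitely many) profiles has a profitable deviation is routine, using Proposition~\ref{prop:ptime-satp-good} to compute $\ISAT$, $\CONT(\cdot,i)$, $\GOOD^\ppca$, and $\prec^\ppca_i$.
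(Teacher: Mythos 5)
Your proposal contains correct and genuinely useful preliminary analysis --- the observation that two players cannot suffice (via Fact~\ref{fact:basic-properties}.\ref{fact:private-cont-at-least-two} and Theorem~\ref{thm:contention-averse-end-bags-always-NE}), the observation that the witness from Proposition~\ref{prop:no-nash-irgbar-pct} acquires an equilibrium under $\ppca$, and the idea of using an endowment-less ``auxiliary'' player whose objective makes certain resource combinations privately contentious for the contributors. All of this matches the design of the paper's actual witness. But the proposal stops exactly where the proof has to begin: for an existential statement of the form ``some RCGBAR admits no equilibrium,'' the entire mathematical content is the explicit game together with the exhaustive check, and you never fix the game. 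No set of players, objectives, or endowments is written down, and the claim that the extremal profiles can be destabilised while keeping every non-dummy player good somewhere is asserted as a design goal, not demonstrated. As it stands there is nothing to verify, so this is a plan rather than a proof.

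For comparison, the paper's proof takes the four-player RCGBAR with $\epsilon_1=\epsilon_2=\emptyset$, $\epsilon_3=\{B,B\}$, $\epsilon_4=\{A,A\}$, $\gamma_1=A\comb A\comb B\comb B$, $\gamma_2=A\comb B$, $\gamma_3=B$, $\gamma_4=A$; players $1$ and $2$ are the endowment-less contention generators you anticipate, and since they have a single (empty) action the game has exactly nine profiles, each of which is listed with an explicit profitable deviation for player $3$ or $4$. To complete your argument you would need to supply such a concrete game and carry out the profile-by-profile verification (which, as you note, is routine once the game is on the table --- but only then).
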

\begin{proof}
Let us assume $\pref = \ppca$. Let $G$ be the $4$-player RCGBAR, where
$\epsilon_1 = \emptyset$, $\epsilon_2 = \emptyset$, $\epsilon_3 = \{B, B\}$, $\epsilon_4 = \{A, A\}$,
$\gamma_1 = \bundle{A,A,B,B}$,
$\gamma_2 = \bundle{A, B}$,
$\gamma_3 = B$,
and $\gamma_4 = A$.
The game $G$ does not admit any Nash equilibrium.
%
The following statements hold:
\begin{itemize}
\item $(\emptyset,\emptyset,\emptyset,\emptyset) \prec^\ppca_3 (\emptyset,\emptyset,\{B\},\emptyset)$,
\item $(\emptyset,\emptyset,\emptyset,\{A\}) \prec^\ppca_3 (\emptyset,\emptyset,\{B,B\},\{A\})$,
\item $(\emptyset,\emptyset,\emptyset,\{A,A\}) \prec^\ppca_4 (\emptyset,\emptyset,\emptyset,\{A\})$,
\item $(\emptyset,\emptyset,\{B\},\emptyset) \prec^\ppca_4 (\emptyset,\emptyset,\{B\},\{A,A\})$,
\item $(\emptyset,\emptyset,\{B\},\{A\}) \prec^\ppca_3 (\emptyset,\emptyset,\emptyset,\{A\})$,
\item $(\emptyset,\emptyset,\{B\},\{A,A\}) \prec^\ppca_3 (\emptyset,\emptyset,\emptyset,\{A,A\})$,
\item $(\emptyset,\emptyset,\{B,B\},\emptyset) \prec^\ppca_3 (\emptyset,\emptyset,\{B\},\emptyset)$,
\item $(\emptyset,\emptyset,\{B,B\},\{A\}) \prec^\ppca_4 (\emptyset,\emptyset,\{B,B\},\emptyset)$,
\item $(\emptyset,\emptyset,\{B,B\},\{A,A\}) \prec^\ppca_3 (\emptyset,\emptyset,\emptyset,\{A,A\})$.
\end{itemize}

\end{proof}

\section{Other variants with a tractable problem \NE}
\label{sec:case-other-variants}

Before concluding this paper,
we present variants for which deciding whether a profile is a Nash equilibrium can be done in polynomial time, even in presence of (public or private) contention-aversity.
Compared to RCGBAR, their tractability remains very straightforward observations.
Still, these variants can be usefully exploited in specific practical applications. They also serve the theoretical interest of understanding the sources of computational complexity in the problem $\NE^\pref$.

\subsection{RCGs with independent objectives}

We now consider the restricted class of RCGs where the goals of the players are independent.
\begin{definition}
An \emph{RCG with independent objectives} (RCGIO) is an RCG $G = (N,\gamma_1, \ldots, \gamma_n, \epsilon_1, \ldots, \epsilon_n)$ such that for every $i,j \in N$, if $i \not = j$ then $\flat(\gamma_i) \cap \flat(\gamma_j) = \emptyset$.
\end{definition}

When the objectives of the players are independent, there is no place whatsoever for contention.
\begin{lemma}\label{lem:indep-goals-no-contention}
  If $G$ is an RCG with independent objectives, then there is no contentious profile in $G$.
\end{lemma}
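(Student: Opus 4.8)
The plan is to show directly that $\CONT(G) = \emptyset$, by checking that for an \emph{arbitrary} profile $P$, the bundle formed from the objectives of all potentially satisfied players can still be extracted from $\out(P)$. So I would fix $P \in \ch(G)$ and write $\SATP(G,P) = \{s_1, \ldots, s_k\}$. By definition, $P \in \CONT(G)$ would mean $\out(P) \not\transform \gamma_{s_1} \comb \ldots \comb \gamma_{s_k}$, i.e.\ $\flat(\gamma_{s_1} \comb \ldots \comb \gamma_{s_k}) \not\subseteq \flat^\bullet(\out(P))$, and it is exactly this that I want to rule out.

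First I would record the routine identity $\flat(\gamma_{s_1} \comb \ldots \comb \gamma_{s_k}) = \flat(\gamma_{s_1}) \uplus \ldots \uplus \flat(\gamma_{s_k})$, which is immediate from the definitions of $\comb$ and $\flat$. The independence hypothesis says that for $j \neq j'$ the multisets $\flat(\gamma_{s_j})$ and $\flat(\gamma_{s_{j'}})$ have disjoint supports. Hence, for every atomic resource $x \in \Res$, there is at most one index $j$ with $m(\flat(\gamma_{s_j}), x) > 0$, so $m(\flat(\gamma_{s_1} \comb \ldots \comb \gamma_{s_k}), x) = \max_{1 \leq j \leq k} m(\flat(\gamma_{s_j}), x)$ — a single term, not a genuine sum. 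Since each $s_j \in \SATP(G,P)$ we have $\out(P) \transform \gamma_{s_j}$, that is $m(\flat(\gamma_{s_j}), x) \leq m(\flat^\bullet(\out(P)), x)$ for every $x$; taking the maximum over $j$ (and using $0 \leq m(\flat^\bullet(\out(P)), x)$ if no term contributes) yields $m(\flat(\gamma_{s_1} \comb \ldots \comb \gamma_{s_k}), x) \leq m(\flat^\bullet(\out(P)), x)$. Therefore $\flat(\gamma_{s_1} \comb \ldots \comb \gamma_{s_k}) \subseteq \flat^\bullet(\out(P))$, i.e.\ $\out(P) \transform \gamma_{s_1} \comb \ldots \comb \gamma_{s_k}$, so $P \notin \CONT(G)$; as $P$ was arbitrary, $\CONT(G) = \emptyset$.

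There is essentially no obstacle: the one thing to notice is that independence collapses the potentially dangerous multiset sum $\biguplus_j \flat(\gamma_{s_j})$ into a pointwise maximum, after which the bound $\out(P) \transform \gamma_{s_j}$ available separately for each $j$ is already enough. As an alternative packaging, one can instead route through Fact~\ref{fact:basic-properties}.\ref{fact:cont-ind-cont-union}: under independence ${\gamma_j}_{\mid \gamma_i} = \emptybundle$ whenever $j \neq i$, while ${\gamma_i}_{\mid \gamma_i} = \gamma_i$, so for any $P \in \ISAT(G,i)$ (equivalently $i \in \SATP(G,P)$) the combined restricted bundle ${\gamma_{s_1}}_{\mid \gamma_i} \comb \ldots \comb {\gamma_{s_k}}_{\mid \gamma_i}$ equals $\gamma_i$, and $\out(P) \transform \gamma_i$ holds by assumption; hence $\CONT(G,i) = \emptyset$ for every $i \in N$, and $\CONT(G) = \bigcup_{i \in N} \CONT(G,i) = \emptyset$.
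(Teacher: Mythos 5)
Your argument is correct; the paper in fact states this lemma without proof, and your direct verification — that independence collapses the multiset sum $\biguplus_j \flat(\gamma_{s_j})$ into a pointwise maximum, which is then dominated by $\flat^\bullet(\out(P))$ because each $s_j$ is potentially satisfied — is exactly the intended justification. The alternative route via Fact~\ref{fact:basic-properties}.\ref{fact:cont-ind-cont-union} and ${\gamma_j}_{\mid \gamma_i} = \emptybundle$ for $j \neq i$ is equally valid and even gives the slightly stronger statement $\CONT(G,i) = \emptyset$ for every $i$.
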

The converse of Lemma~\ref{lem:indep-goals-no-contention} does not hold. Take $N = \{1,2\}$, $\gamma_1 = A$, $\gamma_2 = A$, $\epsilon_1 = \emptyset$, $\epsilon_2 = \emptyset$. The unique profile $(\emptyset, \emptyset)$ is not contentious (Fact~\ref{fact:basic-properties}). It is also the case with a non-trivial set of profiles, e.g., with $\epsilon'_1 = \{A\comb A\}$, $\epsilon'_2 = \{A\comb A, B\}$, where none of the profiles are contentious.

\begin{proposition}
\label{NE-RCG-ind-Ptime}
  In RCGs with independent objectives, for every kind of preferences \pref, the problem $\NE^\pref$ is in \PTIME.
  \end{proposition}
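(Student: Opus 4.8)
The plan is to reduce the general case to the contention-tolerant case, which is already known to be tractable by Theorem~\ref{thm:NE-in-P}. The key leverage is Lemma~\ref{lem:indep-goals-no-contention}: in an RCGIO $G$ there is no contentious profile, so $\CONT(G) = \emptyset$, and consequently (by Fact~\ref{fact:basic-properties}.\ref{fact:cont-ind-cont-union}, or directly from the definitions) $\CONT(G,i) = \emptyset$ for every player~$i$. First I would invoke this to rewrite the three notions of ``good'' profile. By Definition~\ref{def:good}, $\GOOD^\pct(G,i) = \ISAT(G,i)$ always; and since $P \not\in \CONT(G)$ and $P \not\in \CONT(G,i)$ hold vacuously for every profile $P$, we get $\GOOD^\pca(G,i) = \ISAT(G,i)$ and $\GOOD^\ppca(G,i) = \ISAT(G,i)$ as well. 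Hence all three ``good'' sets coincide in an RCGIO.

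Next I would observe that, because the ``good'' sets coincide, the strict preference relations $\prec^\pct_i$, $\prec^\pca_i$, $\prec^\ppca_i$ of Definition~\ref{def:preferences} are literally the same relation on $\ch(G)$: each clause of the definition refers only to membership in $\GOOD^\pref(G,i)$ and to multiset inclusion of contributions, and the former is now independent of $\pref$. It follows that the notion of profitable deviation, and therefore the set $\NE^\pref(G)$, does not depend on $\pref$: $\NE^\pct(G) = \NE^\pca(G) = \NE^\ppca(G)$ for every RCGIO $G$.

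Finally I would conclude: given an input $(G, P)$ for $\NE^\pref$ with $G$ an RCGIO, it suffices to decide $P \in \NE^\pct(G)$, which is in \PTIME by Theorem~\ref{thm:NE-in-P} (the algorithm of Lemma~\ref{lemma:ne-parsi-affin}, checking the three conditions for each player). Since the answer is the same for $\pca$ and $\ppca$, the problem $\NE^\pref$ is in \PTIME for every kind of preferences. (One small point worth stating explicitly: recognising that an RCG is an RCGIO is itself easy --- one just checks pairwise disjointness of the supports $\flat(\gamma_i)$ --- although for membership in \PTIME of the decision problem restricted to this class we may simply assume the input is promised to be an RCGIO.)

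\textbf{Main obstacle.} There is essentially no obstacle here: the entire content is the reduction ``no contention $\Rightarrow$ all preference kinds collapse to $\pct$'', after which Theorem~\ref{thm:NE-in-P} applies verbatim. The only thing requiring a line of care is checking that collapsing the \emph{good} sets really collapses the \emph{preference relations} (and hence the equilibrium sets) --- this is immediate from the syntactic form of Definition~\ref{def:preferences}, but it is the step one should not skip.
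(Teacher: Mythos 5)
Your proposal is correct and follows exactly the paper's own argument: invoke Lemma~\ref{lem:indep-goals-no-contention} to conclude that contention is irrelevant, so the three preference kinds (and hence the three equilibrium sets) collapse, and then apply Theorem~\ref{thm:NE-in-P}. The paper states this more tersely, while you spell out the intermediate step that the collapse of the $\GOOD^\pref$ sets propagates through Definition~\ref{def:preferences} to the preference relations --- a worthwhile explicit check, but not a different route.
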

  \begin{proof}
    By Lemma~\ref{lem:indep-goals-no-contention}, we know that resource contention is irrelevant in an RCG $G$ with independent objectives. Indeed, $\NE^\pct(G) = \NE^\pca(G) = \NE^\ppca(G)$. We can thus decide all three problems $\NE^\pref$ as if $\pref = \pct$. We conclude using Theorem~\ref{thm:NE-in-P}, whose proof provides an algorithm for all the present cases.
  \end{proof}

Independent objectives are not sufficient to guarantee the existence of Nash
equilibria.
\begin{proposition}\label{prop:no-nash-ind}
  For every kind of preferences $\pref \in \{\pct, \pca, \ppca\}$, there is an RCG with independent objectives $G$ such that $\NE^\pref(G) = \emptyset$.
\end{proposition}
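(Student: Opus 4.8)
The plan is to reuse the two-player game from the proof of Proposition~\ref{prop:no-nash}: the RCG $G$ with $N = \{1,2\}$, $\epsilon_1 = \epsilon_2 = \{A \comb B\}$, $\gamma_1 = A$, and $\gamma_2 = B \comb B$. First I would observe that this $G$ is in fact an RCG with independent objectives: $\flat(\gamma_1) = \{A\}$ and $\flat(\gamma_2) = \{B, B\}$, so $\flat(\gamma_1) \cap \flat(\gamma_2) = \emptyset$, which is precisely the defining condition of an RCGIO for a two-player game.

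Second, Proposition~\ref{prop:no-nash} already establishes that $\NE^\pref(G) = \emptyset$ for every $\pref \in \{\pct, \pca, \ppca\}$ — its proof exhibits, for each of the four profiles of $G$, a profitable deviation, and it does so uniformly because $\CONT(G) = \emptyset$ forces the three preference kinds to coincide on $G$. Hence the very same game witnesses the claim and nothing further is required. I would, alternatively, give a slightly more structural phrasing that makes the reduction to the contention-tolerant case explicit: by Lemma~\ref{lem:indep-goals-no-contention} an RCGIO has no contentious profile, so by Definition~\ref{def:good} we have $\GOOD^\pct(G,i) = \GOOD^\pca(G,i) = \GOOD^\ppca(G,i)$ for every player~$i$, whence $\NE^\pct(G) = \NE^\pca(G) = \NE^\ppca(G)$; it therefore suffices to display one RCGIO with $\NE^\pct = \emptyset$, and the game above does this.

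There is essentially no obstacle here. The only things to check are the (immediate) disjointness of the objectives $A$ and $B \comb B$, and the observation that the profitable-deviation analysis of Proposition~\ref{prop:no-nash} never relied on the two players' endowments overlapping — it referred only to the profiles, their outcomes, and the $\GOOD^\pref$ memberships — so it carries over verbatim in the RCGIO reading. If a referee prefers a fresh example rather than a back-reference, I would simply restate that game here in full with its four-profile table, but this adds no mathematical content.
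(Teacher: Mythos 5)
Your proposal is correct and takes essentially the same approach as the paper: exhibiting a concrete RCGIO with no Nash equilibrium for any of the three preference kinds. The only difference is the choice of witness --- the paper constructs a fresh three-player game ($\gamma_1 = A$, $\gamma_2 = B$, $\gamma_3 = C$, with pairwise overlapping endowments), whereas you reuse the two-player game from Proposition~\ref{prop:no-nash}, whose objectives $A$ and $B \comb B$ are indeed disjoint and whose deviation cycle is already fully verified there, which is a perfectly valid and arguably more economical choice.
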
   
  \begin{proof}
    Consider the three-player RCG $G$ , where $\epsilon_1 = \{A \comb B\}$,
    $\epsilon_2 = \{B \comb C\}$,
    $\epsilon_3 = \{A \comb C\}$,
    and $\gamma_1 = A$, $\gamma_2 = B$, and $\gamma_3 = C$.
    The game $G$ is is an RCG with independent objectives.
    We have $\NE^\pct(G) =\NE^\pca(G) =\NE^\ppca(G) = \emptyset$.       
    \end{proof}
However, restricting the endowments to bags of atomic resources, $\ENE^\pref$ becomes a trivial problem. The following proposition characterizes the existence of one and only one Nash equilibrium.
\begin{proposition}\label{prop:NE-characterize-RCGBAR-IO}
Let $G$ be an RCGBAR with independent objectives. For every kind of preferences $\pref \in \{\pct, \pca, \ppca\}$, 
$P \in \NE^\pref(G)$ iff for every $i \in N$, we have
\[
P_i =
\begin{cases}
  \gamma_i & \text{when $\gamma_i \subseteq \epsilon_i$}\\
  \emptyset & \text{otherwise} \enspace .
\end{cases}
\]
\end{proposition}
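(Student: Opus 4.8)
The plan is to reduce everything to the contention-tolerant case and then exploit independence. Since $G$ has independent objectives, Lemma~\ref{lem:indep-goals-no-contention} gives that no profile is contentious, so $\GOOD^\pct(G,i)=\GOOD^\pca(G,i)=\GOOD^\ppca(G,i)=\ISAT(G,i)$ for every $i$, and hence $\NE^\pct(G)=\NE^\pca(G)=\NE^\ppca(G)$ (exactly as in the proof of Proposition~\ref{NE-RCG-ind-Ptime}). So I will only prove the characterization for $\pref=\pct$, where ``good for $i$'' just means ``$i$ potentially satisfied'', i.e.\ $\flat(\gamma_i)\subseteq\flat^\bullet(\out(P))$. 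Since $G$ is an RCGBAR I will freely identify a choice or an endowment (a bag of singleton bundles) with the multiset of atomic resources it carries; with this convention ``$\gamma_i\subseteq\epsilon_i$'' reads $\flat(\gamma_i)\subseteq\flat^\bullet(\epsilon_i)$, ``$P_i=\gamma_i$'' reads $\flat^\bullet(P_i)=\flat(\gamma_i)$, and the choice ``$\gamma_i$'' is a legitimate element of $\ch_i(G)$ precisely when $\gamma_i\subseteq\epsilon_i$. The single structural fact that drives both directions is: if in a profile $P$ every player $k$ contributes only resources appearing in $\flat(\gamma_k)$, i.e.\ $\flat^\bullet(P_k)\subseteq\flat(\gamma_k)$ for all $k$, then by pairwise disjointness of the objectives, for every $A\in\flat(\gamma_j)$ the multiplicity of $A$ in $\flat^\bullet(\out(P))$ equals its multiplicity in $\flat^\bullet(P_j)$ alone; consequently, under that hypothesis player $j$ is good in $P$ iff $\flat^\bullet(P_j)=\flat(\gamma_j)$, which in turn forces $\gamma_j\subseteq\epsilon_j$.

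For the forward direction I would take $P\in\NE^\pct(G)$ and first establish $\flat^\bullet(P_k)\subseteq\flat(\gamma_k)$ for every $k$: if $P_k$ contained a copy of some $A$ beyond what $\gamma_k$ needs (in particular any $A\notin\flat(\gamma_k)$), dropping that copy would leave $k$'s satisfaction status unchanged while strictly shrinking $P_k$, which is a profitable deviation by parsimony (Definition~\ref{def:preferences}, clause 1 or 3) --- a contradiction. With the structural fact in hand, a player $j$ is good in $P$ iff $\flat^\bullet(P_j)=\flat(\gamma_j)$. If $\gamma_j\subseteq\epsilon_j$ and $j$ were not good, deviating to the choice $\gamma_j$ would make $j$ good (the others add nothing to $\flat(\gamma_j)$), profitable by clause 2; hence $j$ is good and $P_j=\gamma_j$. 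If $\gamma_j\not\subseteq\epsilon_j$, player $j$ can never be good, so $P_j\neq\emptyset$ would make $\emptyset$ a profitable deviation by clause 1; hence $P_j=\emptyset$. This is precisely the stated formula.

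For the converse I would let $P$ be the profile defined by the formula, note that it satisfies the hypothesis of the structural fact ($\flat^\bullet(P_k)$ is either $\flat(\gamma_k)$ or empty), and deduce that $j$ is good in $P$ iff $\gamma_j\subseteq\epsilon_j$. If $\gamma_j\subseteq\epsilon_j$ (so $P_j=\gamma_j$ and $j$ is good), any profitable deviation would have to fall under clause 3 --- a good profile with a strictly smaller contribution $C_j\subsetneq\gamma_j$ --- but any such $C_j$ undersupplies some resource of $\gamma_j$ that nobody else provides, so $j$ is not good there; no profitable deviation exists. If $\gamma_j\not\subseteq\epsilon_j$ (so $P_j=\emptyset$ and $j$ is not good), a profitable deviation would have to make $j$ good (clause 2), which is impossible since for any $C_j\subseteq\epsilon_j$ the outcome still undersupplies some resource of $\gamma_j$, or strictly shrink $\emptyset$ (clause 1), also impossible. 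Hence $P\in\NE^\pct(G)=\NE^\pref(G)$.

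I do not anticipate a genuine obstacle; the only points needing care are the bookkeeping around the RCGBAR identification (so that ``$P_i=\gamma_i$'' really names a choice in $\ch_i(G)$, and so that the set-containment statements type-check) and matching each candidate deviation to the correct clause of the parsimonious preference order. Everything else follows from the one observation that independence of objectives makes the supply of each resource relevant to player $j$ entirely ``private'' to $j$ in the profiles that matter.
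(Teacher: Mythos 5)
The paper states this proposition without proof, so there is no official argument to compare against; your proof is correct and complete, and it is the natural one (reduce to $\pct$ via Lemma~\ref{lem:indep-goals-no-contention}, then use disjointness of objectives to localize each resource's supply to its ``owner''). The only phrasing to tighten is ``player $j$ can never be good'' in the forward direction: what you actually need (and in fact have) is that $j$ is good in no profile of the form $(P_{-j},C_j)$, which holds because the other players' fixed choices in the equilibrium $P$ have already been shown to lie inside their own pairwise-disjoint objectives, so no player other than $j$ supplies any atom of $\flat(\gamma_j)$.
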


\subsection{RCGs with explicit choices}
\label{sec:expl-choices}

A source of computational complexity to decide profitable deviations and Nash equilibria comes from the fact that the number of distinct choices a player can make in an RCG is exponential with the size of his endowment. Each player can play unrestrained any subset of his endowment. This is not always a good assumption either. A player might be subject to some limitations about the combinations of resources it provides, either in numbers (e.g., no more than three candies), or in quality (e.g., never molecular hydrogen and oxygen together).

\begin{definition}
An \emph{individual resource game with explicit choices} (RCGEC) is a tuple\linebreak $G = (N,\gamma_1, \ldots, \gamma_n, \ch_1, \ldots, \ch_n)$ where:
\begin{itemize}[leftmargin=*]
\item $N = \{1, \ldots, n\}$ is a finite set of players;
\item $\gamma_i$ is a resource bundle ($i$'s goal, or objective);
\item $\ch_i$ is a finite set of resource bags ($i$'s choices).
\end{itemize}
\end{definition}
In an RCG $G$ we defined the set $\ch_i(G)$ of possible choices of player~$i$ as the set of multisets (Section~\ref{sec:irg-dp}). In an RCGEC, the sets of choices $\ch_i$ are explicitly specified in the model.
This is useful and not unseen before, indeed. Congestion Games~\cite{Rosenthal1973}, archetypal games with resources, are often presented with such explicit actions.

%

It is easy to see that RCGECs are more general than RCGs. To every RCG $G = (N,\gamma_1, \ldots, \gamma_n, \epsilon_1, \ldots, \epsilon_n)$, corresponds the RCGEC $G' = (N,\gamma_1, \ldots, \gamma_n, \mathcal{P}(\epsilon_1), \ldots, \mathcal{P}(\epsilon_n))$, where $\mathcal{P}(\epsilon_i)$ is the powerset of the multiset $\epsilon_i$.

\begin{proposition}\label{prop:NE-expl-goals-P}
  In RCGs with explicit choices, for every kind of preferences \pref, the problem $\NE^\pref$ is in \PTIME.
\end{proposition}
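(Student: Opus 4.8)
The plan is to exploit the one structural difference between an RCGEC and a general RCG: here the action sets $\ch_i$ are listed explicitly in the input, so $\sum_{i \in N} |\ch_i|$ is bounded by $|G|$. Since $\ch(G) = \prod_{i \in N} \ch_i$, a Nash-equilibrium test can therefore afford to enumerate, for a given profile $P$, every unilateral deviation of every player. Concretely, I would run the following procedure: for each player $i \in N$ and each choice $C_i \in \ch_i$, test whether $P \prec^\pref_i (P_{-i}, C_i)$; return \emph{false} as soon as such a profitable deviation is found, and return \emph{true} if none is found. Correctness is immediate from the definitions of profitable deviation and of Nash equilibrium.

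For the complexity bound, the number of pairs $(i, C_i)$ examined is $\sum_{i \in N} |\ch_i|$, which is polynomial (in fact linear) in the size of the instance. It remains to argue that each individual test $P \prec^\pref_i (P_{-i}, C_i)$ is in \PTIME. This is exactly the content of Proposition~\ref{prop:ptime-satp-good}/Proposition~\ref{prop:ptime-prefs}, and those proofs carry over verbatim: they only manipulate the outcomes $\out(\cdot)$, the flattenings $\flat$ and $\flat^\bullet$, the objectives $\gamma_j$, the restrictions ${\gamma_j}_{\mid\gamma_i}$, and multiset containment — none of which refers to how the choice sets are generated. In particular, computing $\SATP(G,\cdot)$, deciding membership in $\CONT(G)$, $\CONT(G,i)$ and $\GOOD^\pref(G,i)$, and checking $P_i \subset C_i$ are all polynomial for RCGECs just as for RCGs. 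Composing a polynomial number of polynomial-time checks yields an overall polynomial-time algorithm, for every $\pref \in \{\pct,\pca,\ppca\}$.

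I do not expect a genuine obstacle here; the only point requiring a sentence of justification is that the polynomial-time primitives established earlier for RCGs transfer to RCGECs, which is immediate because they are phrased purely in terms of outcomes and objectives. The conceptual content of the statement is rather the contrast with the general case: by the remark just preceding the proposition, every RCG embeds into an RCGEC via powersets of endowments, so RCGECs are strictly more expressive; what restores tractability uniformly across the three contention attitudes is precisely that the (potentially exponentially many) implicit actions of an RCG are replaced by an explicitly enumerated — hence polynomially bounded — action set.
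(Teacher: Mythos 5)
Your proof is correct and matches exactly the argument the paper intends (the paper leaves this proposition without an explicit proof, noting only that the tractability of these variants is a "very straightforward observation"): enumerate the explicitly listed choices of each player, test each candidate deviation with the polynomial-time primitives of Proposition~\ref{prop:ptime-prefs}, and observe that those primitives depend only on outcomes, goals, and multiset operations, hence transfer to RCGECs. Nothing is missing.
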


\subsection{RCGs with bounded-sized endowments}

We observed in  Section~\ref{sec:expl-choices}, that a source of complexity in
finding profitable deviations comes from the number of possible
choices of a player being exponential in the size of his endowment.
We can take steps upon this observation, albeit in a different direction.

\begin{definition}
A \emph{resource contribution game with $k$-bounded endowments} ($k$-RCGBSE) is a resource contribution game $G = (N,\gamma_1, \ldots, \gamma_n, \epsilon_1, \ldots, \epsilon_n)$, where $k$ is an integer such that $\Card{\epsilon_i} \leq k$.
\end{definition}
Using $k$-bounded endowments, the number of resource bundles in the endowments is bounded. But the number of atomic resources that make up these bundles is not bounded. Indeed, in the definition above, the size of $\flat^\bullet(\epsilon_i)$ can still be arbitrarily large.

In $k$-RCGBSEs, the number of choices available to a player is bounded, although exponentially large wrt.\ the size of his endowment.
\begin{proposition}\label{prop:NE-bounded-endow-P}
  Let $k$ be a fixed integer. In the class of RCGs with $k$-bounded endowments, for every kind of preferences \pref, the problem $\NE^\pref$ is in \PTIME.
\end{proposition}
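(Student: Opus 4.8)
The plan is to observe that the only source of intractability established so far in the contention-averse cases (Theorem~\ref{thm:NE-pca-ppca-coNP-c}) is the need to search over the exponentially many choices of a player when looking for a profitable deviation; once that search space is polynomially bounded, the brute-force algorithm runs in polynomial time. So the first step is to bound $|\ch_i(G)|$. In a $k$-RCGBSE, $\epsilon_i$ is a resource bag with $\Card{\epsilon_i}\le k$, hence it contains at most $k$ bundles (counting multiplicities), so the number of sub-multisets $C\subseteq\epsilon_i$ is at most $2^k$. Since $k$ is a fixed constant, $|\ch_i(G)|\le 2^k = O(1)$, and $|\ch(G)|\le 2^{kn}$ — still exponential in $n$, but the \emph{per-player} choice set is of constant size. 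This is the key structural observation; everything else is routine.

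Next I would spell out the decision procedure for $\NE^\pref$. Given $G$ and a profile $P$, for each player $i\in N$ and each $C_i\in\ch_i(G)$, check whether $P\prec^\pref_i (P_{-i},C_i)$; by Proposition~\ref{prop:ptime-prefs} each such check is in \PTIME (this holds uniformly for all three kinds of preferences, including $\pca$ and $\ppca$). The profile $P$ is a Nash equilibrium iff no such $(i,C_i)$ witnesses a profitable deviation. The total number of pairs to examine is $\sum_{i\in N}|\ch_i(G)|\le n\cdot 2^k = O(n)$, and each examination is polynomial in the size of $G$, so the whole procedure is polynomial. One should note that enumerating $\ch_i(G)$ itself is easy: list all sub-multisets of $\epsilon_i$, of which there are at most $2^k$, and each has size polynomial in the input (its underlying multiset of atomic resources $\flat^\bullet$ can be large, but that is already part of the input size).

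I do not expect a genuine obstacle here; the statement is essentially a corollary of Proposition~\ref{prop:ptime-prefs} plus the constant bound on per-player choices. The only point that needs a word of care is that ``polynomial time'' must be understood with respect to the size of the representation of $G$, which includes the (possibly large) bundles; since $\flat$ and $\flat^\bullet$ blow the representation up only linearly and all the multiset operations used in deciding $\prec^\pref_i$ are efficient in that size, this causes no difficulty. If anything, the mild subtlety worth flagging is that $k$ is fixed and not part of the input, so the $2^k$ factor is a constant rather than a parameter — the result is a statement for each fixed $k$, not a uniform statement over $k$; I would make that explicit in the write-up, paralleling the phrasing already used for $k$-RCGBSE.
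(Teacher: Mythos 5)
Your proposal is correct and matches the argument the paper intends: the remark immediately preceding the proposition (that in $k$-RCGBSEs the per-player choice set has at most $2^k$ elements, a constant for fixed $k$) together with Proposition~\ref{prop:ptime-prefs} is exactly the brute-force-over-deviations argument you spell out, and the paper leaves the rest implicit. Your additional care about input representation (bundles may be large, but $\flat$ and $\flat^\bullet$ are linear in the input) is a correct and worthwhile clarification rather than a deviation.
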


\section{Conclusions}
\label{sec:conclusions}

We introduced a class of non-cooperative games, where players contribute resources, and consume resources. They are largely inspired from the individual resource games of~\cite{Tr16ijcai}. Our main conceptual contribution has been to introduce the notion of resource contention, and to define resource-averse preferences. Our main technical contributions have been about the computational aspects of Nash equilibria in this setting.
We then studied the complexity of the problem of deciding whether a profile is a Nash equilibrium ($\NE^\pref$), and of some interesting cases of the problem of finding a Nash equilibrium ($\ENE^\pref$).

The main results about the problem $\NE^\pref$ are summarized in Table~\ref{tab:ne-summary}. (The case of resource-tolerant preferences is only an adaption form~\cite{10.1093/logcom/exaa031}.)
Also notably, in Theorem~\ref{thm:contention-averse-end-bags-always-NE}, using a special best-response dynamics, we establish that in RCGBARs, with public contention-averse preferences, there is always a pure Nash equilibrium and finding one can be done in polynomial time. This puts this kind of preferences apart; even in RCGBAR, a pure Nash equilibrium may not exist in presence of contention-tolerant preferences and \emph{private} contention-averse preferences.

The problem $\NE^\pref$ is also tractable for every kind of preferences in RCG with independent objectives (Prop.~\ref{NE-RCG-ind-Ptime}), with explicit choices (Prop.~\ref{prop:NE-expl-goals-P}), and with bounded-size endowments (Prop.~\ref{prop:NE-bounded-endow-P}). A Nash equilibrium need not exist in any of these classes.


\begin{table}
\begin{center}
\begin{tabular}{lll}
\toprule
$\pref$  &  RCG & RCGBAR\\
 \midrule
$\pct$ & \PTIME (Th.~\ref{thm:NE-in-P}, \cite{10.1093/logcom/exaa031}) & \PTIME (Th.~\ref{thm:NE-in-P}, \cite{10.1093/logcom/exaa031})\\
$\pca$ & \coNP-c (Th.~\ref{thm:NE-pca-ppca-coNP-c}) & \PTIME (Th.~\ref{thm:correctness-algo-ENEempty-RCGBAR})\\
$\ppca$ & \coNP-c (Th.~\ref{thm:NE-pca-ppca-coNP-c}) & \PTIME (Th.~\ref{thm:correctness-algo-ENEempty-RCGBAR})\\
\bottomrule
\end{tabular}
\end{center}
\caption{\label{tab:ne-summary} Complexity of the problem $\NE^\pref$ in RCG and RCGBAR when $\pref$ is contention-tolerant ($\pct$), public contention-averse ($\pca$), and private contention-averse ($\ppca$).}
\end{table}

\paragraph{Related work.}
\label{sec:related-work}

Deciding whether a profile is a Nash equilibrium and finding a Nash equilibrium
are natural problems that have been largely studied in a variety of settings.

Nash's seminal articles~\cite{Nash50,Nash51} demonstrate that a \emph{mixed-strategy} equilibrium that now bears his name always exists.
A celebrated result of computational game theory establishes that finding a mixed-strategy Nash equilibrium
 in traditionally-defined non-cooperative games (with an explicit set of pure strategies and utilities)
is $\sf{PPAD}$-complete~\cite{DBLP:journals/siamcomp/DaskalakisGP09}; On the other hand, checking whether a profile is a Nash equilibrium is an easy task.

\emph{Pure-strategy} Nash equilibrium synthesis in games with temporal objectives has also attracted a lot of attention in theoretical computer science and multiagent systems, e.g.,~\cite{DBLP:conf/csl/ChatterjeeMJ04,DBLP:conf/tacas/FismanKL10,DBLP:journals/corr/BouyerBMU15,CoDiOuTr21aamas,DBLP:journals/acta/GutierrezMPRSW21,10.1145/3467001.3467003,CoDiOuTr23aamas}, where a significant part of the works focuses on combined qualitative and quantitative objectives. The complexity varies widely depending on the exact models and the specification language of the objectives. Model checking techniques have been successfully adapted to the verification of Nash equilibria~\cite{DBLP:conf/aaai/WooldridgeGHMPT16,GUTIERREZ201553}.

\medskip

These problems have been studied in models of games which share closer similarities with ours.

In congestion games (CGs)~\cite{Rosenthal1973,ieong05}, the players choose a set of resources to use, and their utility depends on the `delays' of shared resources, which depend on the number of players choosing them. Although delays can be arbitrarily large, the resources do not become oversubscribed, and they are not subject to contention.
Despite some apparent similarities between RCGs and CGs, they are rather superficial.
Players in CGs are not producers of resources, and do not have endowments \emph{per se}.
Players' actions in CGs consist in choosing a subset of an already available common pool of resources to use.
In RCGs, players are consumers but also producers of resources; their actions consist in making resources available in the common pool.
Unlike CGs, the existence of a pure NE is not guaranteed in general in RCGs.
Finding a Nash equilibrium in CGs is $\sf{PLS}$-complete~\cite{10.1145/1007352.1007445}, and a socially optimal Nash equilibrium can be found in polynomial time in singleton CGs~\cite{ieong05}.

Somehow, also in boolean games~\cite{Harrenstein:2001:BG:1028128.1028159,DBLP:conf/ecai/BonzonLLZ06,harrenstein2015aamas} the players produce and consume `resources'.
Each player controls a set of boolean variables and produces truth values which can be used without restriction towards the boolean goal of all the players.
But we do not think that there are immediate natural correspondences between RCGs and boolean games. Boolean states of affairs are non-rivalrous resources by nature, so a player using the truth of a propositional variable to achieve their boolean goal does not prevent another player to also do so.
As such, resource contention is absent from boolean games.
So players in boolean games could as well be contention-tolerant. As in boolean games, we could force the endowments to be non-overlapping (for exclusive control over a resource). Under these conditions, a connection would then exist if we allowed the players in our games to have preferences about the absence of a resource.
The main focus of boolean games is also on pure-strategy Nash equilibria. Deciding whether a profile is a Nash equilibrium in a boolean game is \coNP-complete and deciding whether one exists is $\sf{\Sigma_2^p}$-complete.

There is of course a neat relationship with~\cite{Tr16ijcai}, which defines parameterized individual resource games. 
The parameter can be any resource-sensitive logic, specifically variants of Linear Logic~\cite{girard1987}. The language of the logic is then used to represent resources of varying complexity.
Our games are like those of~\cite{Tr16ijcai}, called individual resource games (IRGs), parameterized with a logic corresponding to the fragment $A ::= \mathbf{1} \mid p \mid A \otimes A$ where $p$ is an atomic variable, $\mathbf{1}$ is the empty resource bundle $\emptybundle$, and a formula $A \otimes B$ corresponds to the resource bundle $A \comb B$. Our resource transformation $Bag \transform Bundle$ is nothing but the entailment $Bag \vdash Bundle$ in the Linear Logic.
The games from~\cite{Tr16ijcai} are studied only with contention-tolerant players.
In IRGs, the complexity of deciding whether a profile is a Nash equilibrium depends on the variant of Linear Logic that is used to represent the resources. It varies from \PTIME (when using MULT like in our case with contention-tolerant preferences), to undecidable (when using full Linear Logic).


\medskip

Outside of non-cooperative game theory, our games have some connections with other models of resource-sensitive interactions of agents

The classes of games presented in~\cite{Dunne201020,tr18aaai,Su2020ACM-taas}
are models where the players are both consumers and producers of resources. In these games, the players have resource endowments which can be combined so as to achieve resource objectives. These games are cooperative, and the authors study the computational aspects of cooperative solution concepts. %

The notion of competition over resources is ubiquitous in social choice (e.g., \cite{CramptonShohamSteinberg06,DBLP:journals/informaticaSI/ChevaleyreDELLMPPRS06,Brandt:2016:HCS:3033138,bouveretChap12,Airiau2014,aziz19}). The models of our games bear a resemblance with combinatorial exchanges~\cite{DBLP:conf/atal/KothariSS04} and with mixed multi-unit combinatorial auctions (MMUCAs)~\cite{Cerquides:2007:BLW:1625275.1625473,Giovannucci2010}, where the agents can be both sellers and buyers. In MMUCAs, as in RCGs, bundles of goods can be transformed into different bundles of goods. The study of MMUCAs focuses on determining the sequences of bids to be accepted by an auctioneer.

\paragraph{Open problems and future work.}
We have focused more closely on the problem of deciding whether a profile is a pure Nash equilibrium. In the future work, we will look more closely at the problem of finding Nash equilibria. In particular, for contention-averse preferences, $\ENE^\pref$ is in~$\sf{\Sigma_2^p}$, but no lower bound is known. In RCGBARs, we proved that $\ENE^\pca$ is trivial and that finding one can be done in polynomial time (Theorem~\ref{thm:contention-averse-end-bags-always-NE}). But characterizing the complexity of $\ENE^\ppca$ and $\ENE^\pct$ remains to be done; although they are both in $\NP$ (Theorem~\ref{thm:ENE-ct-NPc}, and as a corollary of Theorem~\ref{thm:correctness-algo-ENEempty-RCGBAR}).
It would also be interesting to consolidate existing results. The hardness of $\NE^\pca$ and $\NE^\ppca$ relies on Lemma~\ref{lemma:epd-empty-npc-pca-ppca}. But it does not carry through when we have a bounded supply of players, or when the size of goals is bounded.

\medskip
Working with our definition of preferences (Definition~\ref{def:preferences}) has the advantage to make all assumptions about the motivations of the players explicit. Undeniably, a utility model would be useful for further investigations of RCGs.
However, $\prec^\pref_i$ does not satisfy negative transitivity, and thus cannot be represented by a utility function~\cite{Kreps88}.
We can redefine $Q \prec^\pref_i P$ as:
\begin{enumerate}[leftmargin=*]
\item $P \not \in \GOOD^\pref(G,i)$, $Q \not \in \GOOD^\pref(G,i)$ and $|P_i| < |Q_i|$;
\item $P \in \GOOD^\pref(G,i)$, and $Q \not \in \GOOD^\pref(G,i)$;
\item $P \in \GOOD^\pref(G,i)$, $Q \in \GOOD^\pref(G,i)$ and $|P_i| < |Q_i|$.
\end{enumerate}
We can also define for every player~$i$, and $P \in \ch(G)$:
\[u^\pref_i(P) = \begin{cases}
        2\cdot |\epsilon_i| + 1 - |P_i| & \text{when } P \in \GOOD^\pref(G, i)\\
        |\epsilon_i| - |P_i| & \text{otherwise} \enspace .
        \end{cases}\]
For $\pref \in \{\pct, \pca, \ppca \}$, we now have that
$Q \prec^\pref_i P$ iff $u^\pref_i(P) > u^\pref_i(Q)$. Interestingly, it can be shown that the set of Nash equilibria, in RCGBARs with any kind of attitude towards contention, remains unchanged when adopting these new definitions. 

\medskip
We showed in Theorem~\ref{thm:contention-averse-end-bags-always-NE} that a best-response dynamics can be applied from the empty profile and will converge to a Nash equilibrium. Showing it from an arbitrary profile remains a challenge.
We conjecture that RCGBARs with public contention-averse preferences are a class of (generalized ordinal) Potential Games~\cite{MONDERER1996124}, and thus that the best-response dynamics can be applied from any profile.
The formulation of preferences as utility function equips us with better tools to investigate this.
But settling the question has eluded us so far. We leave it for future work and we also hope that others will find the problem worth investigating.

\medskip
Our models assume that the contributed resources are distributed non-cooperatively and without supervision.
In fact, single-mindedness makes the allocation of the contributed resources a nonissue in RCGs \emph{with contention-averse preferences}: distribute the resources available to the players who find the profile good, and leave the remaining resources unassigned.

However, an occasion will come to enrich the preferences beyond single-mindedness and let the players enjoy partial satisfaction when they receive some but not all of the resources in their objective.
Instead of having preferences about a raw profile $P$, the player's preferences could be raised over the result of the (fair, envy-free, efficient, etc) allocation of the resources~\cite{bouveretChap12} contributed in $\out(P)$.
A utility model in the vein of what is proposed above should help us in this direction.

\medskip

In general, the study of RCGs with other alternative preferences would be interesting. Having at disposition the notions of resources, satisfaction, contention, and private contention, it becomes rather tempting to think of other kind of preferences that might be fitting to model agents' attitudes in some application domain. Such alternative preferences could for instance consider the maximisation of potentially satisfied players, or the minimisation of players in private contention.

\medskip

This paper made no attempt at designing mechanisms to achieve `good' equilibria.
We are interested in using resource contribution games in problems of gamification.
Gamification refers to the broad application of game-design techniques in contexts that do not otherwise present game-like features \cite{gamificationEduBusiness}. 
Gamification aims at incentivizing an intended behavior by introducing rewards for specific tasks.
Rewards often present themselves as virtual resources such as achievement badges. Formally, they might be nothing more than distinguished tokens of resources.
In Example~\ref{ex:telecommunication}, we saw that the profile where all companies refrain from  providing any resources, $(\emptyset, \emptyset)$, is a Nash equilibrium. Also in Example~\ref{ex:grandfather}, we saw that the profile where everyone refrains from  providing any resources is a Nash equilibrium.
Nash equilibria in RCGs are in fact only faithful to the expected behaviours in resource-driven non-cooperative interactions.
It is a phenomenon observed also in experiments with public good games, demonstrating that voluntary private provisions of public goods typically remain underfunded~\cite{MARKISAAC198551}. This can be an undesirable behaviour that policy makers might be able to anticipate by using the analytical tools defined in this paper, and to avoid by using advanced gamification methods which must be the object of future research.

\appendix

\section{Software implementation: examples, and proofs details}
\label{sec:implementation}

\lstdefinestyle{lststyle}{
  language=python,
  numbers=left,
  stepnumber=1,
  numbersep=10pt,
  tabsize=4,
  showspaces=false,
  showstringspaces=false,
  upquote
}

An implementation is available 
at \url{https://bitbucket.org/troquard/irgpy/}.

Suffixes \texttt{pct} stand for (parsimonious) contention-tolerant preferences ($\pct$), \texttt{pca} for (parsimonious public) contention-averse preferences ($\pca$), and \texttt{ppca} for (parsimonious) private contention-averse preferences ($\ppca$).

\subsection{Simple example}
We begin with a simple series of examples. Based on the same RCG (more precisely, an RCGBAR), we examine the differences in the sets of Nash equilibria when changing the kind of preferences.
\begin{example}[label=ex:implementation-exec]
  Player~$1$, Ann, is endowed with $\epsilon_1 = \{A,B\}$ and has the objective $\gamma_1 = A$. Player~$2$ and player~$3$, both identical Bob, are endowed with $\epsilon_2 = \epsilon_3 = \{A, B, B\}$ and have the objective $\gamma_2 = \gamma_3 = B$.

  The following script shows how the software can be used to determine the set of Nash equilibria in the game when considering (public) contention-averse preferences.

\begin{lstlisting}[breaklines,basicstyle=\ttfamily\scriptsize,style=lststyle]%\begin{verbatim}

$ python3
>>> from irg import Player, Game
>>> ann = Player("Ann", ['A','B'], ['A'])
>>> bob = Player("Bob", ['A', 'B', 'B'], ['B'])
>>> game = Game(ann, bob, bob)
>>> print(game)
Game: [
'(1, Player: Name: Ann. Endowment: {A, B}. Goal: {A}.)', 
'(2, Player: Name: Bob. Endowment: {A, B, B}. Goal: {B}.)', 
'(3, Player: Name: Bob. Endowment: {A, B, B}. Goal: {B}.)']
>>> for p in game.nash_generator_generic(game.prefers_pca):
...     print("Nash:", p)
... 
Nash: Profile: [(1, '{A}'), (2, '{}'), (3, '{B, B}')]
Nash: Profile: [(1, '{A}'), (2, '{B}'), (3, '{B}')]
Nash: Profile: [(1, '{A}'), (2, '{B, B}'), (3, '{}')]
Nash: Profile: [(1, '{A, B}'), (2, '{}'), (3, '{B}')]
Nash: Profile: [(1, '{A, B}'), (2, '{B}'), (3, '{}')]
\end{lstlisting}

\end{example}

\begin{example}[continues=ex:implementation-exec]
  Continuing Example~\ref{ex:implementation-exec}, we can see the effect on the set of Nash equilibria of a change from contention-averse preferences to contention-tolerant preferences.
  
\begin{lstlisting}[breaklines,basicstyle=\ttfamily\scriptsize,style=lststyle]%\begin{verbatim}

>>> for p in game.nash_generator_generic(game.prefers_pct):
...     print("Nash:", p)
... 
Nash: Profile: [(1, '{A}'), (2, '{}'), (3, '{B}')]
Nash: Profile: [(1, '{A}'), (2, '{B}'), (3, '{}')]
\end{lstlisting}
\end{example}

\begin{example}[continues=ex:implementation-exec]
  Continuing Example~\ref{ex:implementation-exec}, we can see the effect on the set of Nash equilibria of a change from (public) contention-averse preferences to private contention-averse preferences.
\begin{lstlisting}[breaklines,basicstyle=\ttfamily\scriptsize,style=lststyle]
>>> for p in game.nash_generator_generic(game.prefers_ppca):
...     print("Nash:", p)
... 
Nash: Profile: [(1, '{A}'), (2, '{}'), (3, '{B, B}')]
Nash: Profile: [(1, '{A}'), (2, '{B}'), (3, '{B}')]
Nash: Profile: [(1, '{A}'), (2, '{B, B}'), (3, '{}')]
\end{lstlisting}
\end{example}

\subsection{Back to Example~\ref{ex:bringyourownfood}}
We now demonstrate how the motivating example Example~\ref{ex:bringyourownfood} is implemented in the software.

\begin{example}[continues=ex:bringyourownfood]
In the first part of the example, with you, Bruno, and Carmen the game is formally defined as:
$\Gexone{.1} = (\{y, b, c\},\gamma_y = \bundle{\var{w}, \var{w}}, \gamma_b = \bundle{\var{w}, \var{r}, \var{o}}, \gamma_c = \bundle{\var{w}, \var{r}, \var{o}}, \epsilon_y = \{\var{w} \comb\var{w} \comb\var{w} \comb\var{w}\}, \epsilon_b = \{\var{r}, \var{r}, \var{r}\}, \epsilon_c = \{\var{o}, \var{o}\})$

\begin{lstlisting}[breaklines,basicstyle=\ttfamily\scriptsize,style=lststyle]
$ python3
>>> from irg import Player, Game
>>> you = Player("you", [('w','w','w','w')], 'ww')
>>> bruno = Player("bruno",  ['r','r','r'], 'wro')
>>> carmen = Player("carmen", ['o','o'], 'wro')
>>> game11 = Game(you, bruno, carmen)
>>> for p in game.nash_generator_generic(game.prefers_pca):
...     print("Nash (pca):", p)
...  
Nash (pca): Profile: [(1, "{('w', 'w', 'w', 'w')}"), (2, '{}'), (3, '{}')]
Nash (pca): Profile: [(1, "{('w', 'w', 'w', 'w')}"), (2, '{r, r}'), (3, '{o, o}')]
>>> for p in game11.nash_generator_generic(game11.prefers_ppca):
...     print("Nash (ppca):", p)
... 
Nash (ppca): Profile: [(1, "{('w', 'w', 'w', 'w')}"), (2, '{}'), (3, '{}')]
Nash (ppca): Profile: [(1, "{('w', 'w', 'w', 'w')}"), (2, '{r, r}'), (3, '{o, o}')]
>>> for p in game11.nash_generator_generic(game11.prefers_pct):
...     print("Nash (pct):", p)
... 
Nash (pct): Profile: [(1, "{('w', 'w', 'w', 'w')}"), (2, '{}'), (3, '{}')]
Nash (pct): Profile: [(1, "{('w', 'w', 'w', 'w')}"), (2, '{r}'), (3, '{o}')]
\end{lstlisting}

In the second part of the example, when Edward joins the party, the game is formally defined as:
$\Gexone{.2} = (\{y, b, c, e\},\gamma_y = \bundle{\var{w}, \var{w}}, \gamma_b = \bundle{\var{w}, \var{r}, \var{o}}, \gamma_c = \bundle{\var{w}, \var{r}, \var{o}}, \gamma_e = \bundle{\var{w}, \var{w}, \var{w}}, \epsilon_y = \{\var{w} \comb\var{w} \comb\var{w} \comb\var{w}\}, \epsilon_b = \{\var{r}, \var{r}, \var{r}\}, \epsilon_c = \{\var{o}, \var{o}\}, \epsilon_e = \emptyset)$

Continuing the previous execution, we simply add a player and define a new game with the four players.
\begin{lstlisting}[breaklines,basicstyle=\ttfamily\scriptsize,style=lststyle]
>>> edward = Player("edward", '', 'www')
>>> game12 = Game(you, bruno, carmen, edward)
>>> for p in game12.nash_generator_generic(game12.prefers_pca):
...     print("Nash (pca):", p)
... 
Nash (pca): Profile: [(1, '{}'), (2, '{}'), (3, '{}'), (4, '{}')]
>>> for p in game12.nash_generator_generic(game12.prefers_ppca):
...     print("Nash (ppca):", p)
... 
Nash (ppca): Profile: [(1, '{}'), (2, '{}'), (3, '{}'), (4, '{}')]
>>> for p in game12.nash_generator_generic(game12.prefers_pct):
...     print("Nash (pct):", p)
... 
Nash (pct): Profile: [(1, "{('w', 'w', 'w', 'w')}"), (2, '{}'), (3, '{}'), (4, '{}')]
Nash (pct): Profile: [(1, "{('w', 'w', 'w', 'w')}"), (2, '{r}'), (3, '{o}'), (4, '{}')]

\end{lstlisting}
\end{example}

\subsection{Back to Example~\ref{ex:telecommunication}}
We now demonstrate how our running example Example~\ref{ex:telecommunication} is implemented in the software.
\begin{example}[continues=ex:telecommunication]
The scenario was formalized into the RCG $\Gextwo = (\{a,b\},\gamma_a = \bundle{\var{3G}, \var{3G}, \var{4G}}, \gamma_b = \bundle{\var{3G}, \var{3G}, \var{4G}, \var{4G}}, \epsilon_a = \{\var{3G},\var{3G},\var{3G}\comb \var{3G}\}, \epsilon_b = \{\var{3G}, \var{4G}, \var{4G}\comb \var{4G}\})$.

\begin{lstlisting}[breaklines,basicstyle=\ttfamily\scriptsize,style=lststyle]
$ python3
>>> from irg import Player, Game
>>> A = Player("Company A", ['3G', '3G', ('3G', '3G')], ['3G', '3G', '4G'])
>>> B = Player("Company B", ['3G', '4G', ('4G', '4G')], ['3G', '3G', '4G' ,'4G'])
>>> game = Game(A, B)
>>> print(game)
Game: ["(1, Player: Name: Company A. Endowment: {3G, 3G, ('3G', '3G')}. 
Goal: {3G, 3G, 4G}.)", 
"(2, Player: Name: Company B. Endowment: {3G, 4G, ('4G', '4G')}. 
Goal: {3G, 3G, 4G, 4G}.)"]
>>> for p in game.nash_generator_generic(game.prefers_pca):
...     print("Nash:", p)
... 
Nash: Profile: [(1, '{}'), (2, '{}')]
Nash: Profile: [(1, "{3G, ('3G', '3G')}"), (2, "{3G, 4G, ('4G', '4G')}")]
Nash: Profile: [(1, "{3G, 3G, ('3G', '3G')}"), (2, "{4G, ('4G', '4G')}")]
>>> for p in game.nash_generator_generic(game.prefers_ppca):
...     print("Nash:", p)
... 
Nash: Profile: [(1, '{}'), (2, '{}')]
Nash: Profile: [(1, "{3G, ('3G', '3G')}"), (2, "{3G, 4G, ('4G', '4G')}")]
Nash: Profile: [(1, "{3G, 3G, ('3G', '3G')}"), (2, "{4G, ('4G', '4G')}")]
>>> for p in game.nash_generator_generic(game.prefers_pct):
...     print("Nash:", p)
... 
Nash: Profile: [(1, '{}'), (2, '{}')]
Nash: Profile: [(1, "{('3G', '3G')}"), (2, "{('4G', '4G')}")]
Nash: Profile: [(1, '{3G}'), (2, "{3G, ('4G', '4G')}")]
Nash: Profile: [(1, '{3G, 3G}'), (2, "{('4G', '4G')}")]
\end{lstlisting}
\end{example}

\subsection{Deviations explanations of Prop.~\ref{prop:no-nash}}
We can use the verbose mode of our implementation to repeat the proof of Prop.~\ref{prop:no-nash}.

\begin{example}
  We considered the RCG $G$, with two players~$1$ and $2$, where $\epsilon_1 = \epsilon_2 = \{A \comb B\}$, $\gamma_1 = A$, and $\gamma_2 = B \comb B$, and showed that $\NE^\pref(G) = \emptyset$ for every kind of preferences. This can be also verified programmatically as follows.

\begin{lstlisting}[breaklines,basicstyle=\ttfamily\scriptsize,style=lststyle]
$ python3
>>> from irg import Player, Game
>>> one = Player("one", [('A', 'B')], ['A'])
>>> two = Player("two", [('A', 'B')], ['B', 'B'])
>>> game = Game(one, two)
>>> for p in game.nash_generator_generic(game.prefers_pct, verbose=True):
...     print("Nash:", p) # there won't be any
... 
From Profile: [(1, '{}'), (2, '{}')] 
 one prefers Profile: [(1, "{('A', 'B')}"), (2, '{}')]
From Profile: [(1, '{}'), (2, "{('A', 'B')}")] 
 two prefers Profile: [(1, '{}'), (2, '{}')]
From Profile: [(1, "{('A', 'B')}"), (2, '{}')] 
 two prefers Profile: [(1, "{('A', 'B')}"), (2, "{('A', 'B')}")]
From Profile: [(1, "{('A', 'B')}"), (2, "{('A', 'B')}")] 
one prefers Profile: [(1, '{}'), (2, "{('A', 'B')}")]
>>> for p in game.nash_generator_generic(game.prefers_pca, verbose=True):
...     print("Nash:", p) # there won't be any
... 
From Profile: [(1, '{}'), (2, '{}')] 
 one prefers Profile: [(1, "{('A', 'B')}"), (2, '{}')]
From Profile: [(1, '{}'), (2, "{('A', 'B')}")] 
 two prefers Profile: [(1, '{}'), (2, '{}')]
From Profile: [(1, "{('A', 'B')}"), (2, '{}')] 
 two prefers Profile: [(1, "{('A', 'B')}"), (2, "{('A', 'B')}")]
From Profile: [(1, "{('A', 'B')}"), (2, "{('A', 'B')}")] 
 one prefers Profile: [(1, '{}'), (2, "{('A', 'B')}")]
>>> for p in game.nash_generator_generic(game.prefers_ppca, verbose=True):
...     print("Nash:", p) # there won't be any
... 
From Profile: [(1, '{}'), (2, '{}')] 
 one prefers Profile: [(1, "{('A', 'B')}"), (2, '{}')]
From Profile: [(1, '{}'), (2, "{('A', 'B')}")] 
 two prefers Profile: [(1, '{}'), (2, '{}')]
From Profile: [(1, "{('A', 'B')}"), (2, '{}')] 
 two prefers Profile: [(1, "{('A', 'B')}"), (2, "{('A', 'B')}")]
From Profile: [(1, "{('A', 'B')}"), (2, "{('A', 'B')}")] 
 one prefers Profile: [(1, '{}'), (2, "{('A', 'B')}")]
\end{lstlisting}
\end{example}

\subsection{Deviations explanations of Prop.~\ref{prop:no-nash-irgbar-ppca}}
We can use the verbose mode of our implementation to repeat the proof of Prop.~\ref{prop:no-nash-irgbar-ppca}.

\begin{example}\label{ex:irgbar-ppca}
We consider the RCGBAR $G$ where
$\epsilon_1 = \emptyset$, $\epsilon_2 = \emptyset$, $\epsilon_3 = \{B, B\}$, $\epsilon_4 = \{A, A\}$,
$\gamma_1 = \bundle{A,A,B,B}$,
$\gamma_2 = \bundle{A, B}$,
$\gamma_3 = B$,
and $\gamma_4 = A$. $\NE^\ppca(G)$ is empty.
\begin{lstlisting}[breaklines,basicstyle=\ttfamily\scriptsize,style=lststyle]
$ python3
>>> from irg import Player, Game
>>> one = Player("one", '', 'AABB')
>>> two = Player("two", '', 'AB')
>>> three = Player("three", 'BB', 'B')
>>> four = Player("four", 'AA', 'A')
>>> game = Game(one, two, three, four)
>>> for p in game.nash_generator_generic(game.prefers_ppca, verbose=True):
...     print("Nash: ", p)  # there won't be any
... 
From Profile: [(1, '{}'), (2, '{}'), (3, '{}'), (4, '{}')] 
 three prefers Profile: [(1, '{}'), (2, '{}'), (3, '{B}'), (4, '{}')]
From Profile: [(1, '{}'), (2, '{}'), (3, '{}'), (4, '{A}')] 
 three prefers Profile: [(1, '{}'), (2, '{}'), (3, '{B, B}'), (4, '{A}')]
From Profile: [(1, '{}'), (2, '{}'), (3, '{}'), (4, '{A, A}')] 
 four prefers Profile: [(1, '{}'), (2, '{}'), (3, '{}'), (4, '{A}')]
From Profile: [(1, '{}'), (2, '{}'), (3, '{B}'), (4, '{}')] 
 four prefers Profile: [(1, '{}'), (2, '{}'), (3, '{B}'), (4, '{A, A}')]
From Profile: [(1, '{}'), (2, '{}'), (3, '{B}'), (4, '{A}')] 
 three prefers Profile: [(1, '{}'), (2, '{}'), (3, '{}'), (4, '{A}')]
From Profile: [(1, '{}'), (2, '{}'), (3, '{B}'), (4, '{A, A}')] 
 three prefers Profile: [(1, '{}'), (2, '{}'), (3, '{}'), (4, '{A, A}')]
From Profile: [(1, '{}'), (2, '{}'), (3, '{B, B}'), (4, '{}')] 
 three prefers Profile: [(1, '{}'), (2, '{}'), (3, '{B}'), (4, '{}')]
From Profile: [(1, '{}'), (2, '{}'), (3, '{B, B}'), (4, '{A}')] 
 four prefers Profile: [(1, '{}'), (2, '{}'), (3, '{B, B}'), (4, '{}')]
From Profile: [(1, '{}'), (2, '{}'), (3, '{B, B}'), (4, '{A, A}')] 
 three prefers Profile: [(1, '{}'), (2, '{}'), (3, '{}'), (4, '{A, A}')]
\end{lstlisting}
\end{example}

\end{document}
